\def\R{\mathbb{R}}
\def\Rp{\R_+}
\def\C{\mathbb{C}}
\def\N{\mathbb{N}}
\def\opt{\star}
\def\LInfO{L_\mu^{\infty}(\Omega)}
\def\LInf{L_\mu^{\infty}}
\def\Ns{n_s}
\def\Nl{n_\ell}
\def\cs{c_s}
\def\S{\mathcal{S}}
\def\P{\mathcal{P}}
\def\Nb{n_b}
\def\cmax{\gamma_{\text{max}}}
\def\Phip{\Phi_\eps^+}
\def\Phim{\Phi_\eps^-}
\def\vphip{\vphi_\eps^+}
\def\vphim{\vphi_\eps^-}
\def\Ae{A_{\eps}}
\def\Ap{\Ae^+}
\def\Am{\Ae^-}
\def\ERB{\text{ERB}}
\def\ERBS{\text{ERBS}}
\def\fopt{f^{\star}}
\def\eps{\varepsilon}
\def\vphi{\varphi}
\def\Po{P_0}
\def\Pe{P_\eps}
\def\tPe{\tilde{P}_\eps}
\newcommand{\set}[1]{\{#1\}}
\newcommand{\Lset}[1]{\left\{#1\right\}}
\newcommand{\nrm}[1]{\|#1\|}
\newcommand{\wh}[1]{\widehat{#1}}
\newcommand{\ol}[1]{\overline{#1}}
\newcommand{\nrmW}[1]{\nrm{#1}_{W}}
\theoremstyle{plain}\newtheorem{proposition}{Proposition}
\theoremstyle{plain}\newtheorem{lemma}{Lemma}
\theoremstyle{plain}
\theoremstyle{plain}
\title{Towards Maximizing a Perceptual Sweet Spot}
\author{ \href{https://orcid.org/0000-0002-3178-4413}{\includegraphics[scale=0.06]{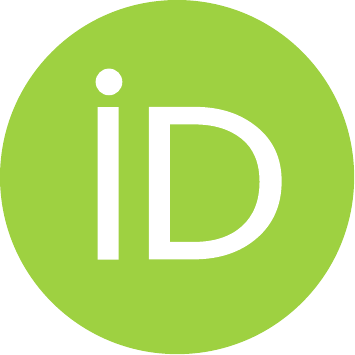}\hspace{1mm}Pedro~Izquierdo~Lehmann} \\
    Institute for Mathematical and Computational Engineering\\
    Pontificia Universidad Cat\'olica de Chile\\
	\texttt{pizquierdo2@uc.cl} \\
	\And
	\href{https://orcid.org/0000-0001-5902-1170}{\includegraphics[scale=0.06]{orcid.pdf}\hspace{1mm}Rodrigo~F.~C\'adiz} \\
	Department of Electrical Engineering and\\
    Music Institute\\
	Pontificia Universidad Cat\'olica de Chile\\
	\texttt{rcadiz@uc.cl} \\
	\And
    \href{https://orcid.org/0000-0002-2533-2509}{\includegraphics[scale=0.06]{orcid.pdf}\hspace{1mm}Carlos~A.~Sing~Long}\thanks{C.~A.~Sing~Long is also with ANID – Millennium Science Initiative Program – Millennium Nucleus Center for the Discovery of Structures in Complex Data, and with ANID – Millennium Science Initiative Program – Millennium Nucleus Center for Cardiovascular Magnetic Resonance.}\\
    Institute for Mathematical and Computational Engineering and\\
    Institute for Biological and Medical Enginering\\
    Pontificia Universidad Cat\'olica de Chile\\
    \texttt{casinglo@uc.cl}\\
}
\begin{document}
\maketitle

\begin{abstract}
    The {\em sweet spot} can be interpreted as the region where acoustic sources create a spatial auditory illusion.
    We study the problem of maximizing this sweet spot when reproducing a desired sound wave using an array of loudspeakers. To achieve this, we introduce a theoretical framework for spatial sound perception that can be used to define a sweet spot, and we develop a method that aims to generate a sound wave that directly maximizes the sweet spot defined by a model within this framework. Our method aims to incorporate perceptual principles from the onset and is flexible: while it imposes little to no constraints on the regions of interest, the arrangement of loudspeakers or their radiation pattern, it allows for audio perception models that include state-of-the-art monaural perceptual models. Proof-of-concept experiments show that our method outperforms state-of-the-art methods when comparing them in terms of their localization and coloration properties.
\end{abstract}

% keywords can be removed
\keywords{Spatial sound \and sound field reconstruction \and psychoacoustics \and sweet spot \and applied functional analysis \and non-convex optimization \and DC optimization}

\parskip = 6pt

\section{Introduction}

The field of spatial sound addresses the question: {\em how do we create a desired {\em spatial auditory illusion} over a spatial region of interest with a set of acoustic sources?}~\cite[Chapter 2.3]{nicol2020creating}. The {\em spatial auditory illusion} (SAI) occurs when acoustic sources create a sound scene that produces a desired auditory scene over a region. It is related to {\em sound quality} as described by Wierstof et al~\cite[Chapter 2]{wierstorf2013binaural}: {\em ``The quality of a system as perceived by a listener is considered to be the result of assessing perceived features with regard to the desired features (of the auditory scene).''} Following Blauert, the {\em sound scene} represents the objective nature of a sound wave propagating in the physical world, whereas the {\em auditory scene} represents the imprint of the sound scene in our subjectivity, that is, the result of the auditory system perceiving and organizing sound into meaning~\cite{blauert1997spatial}. Over the last century, several methods have been proposed to answer this. Their performance can be compared in terms of the size of the region where the SAI is achieved. In this work, we call this region the {\em sweet spot}. 

The term {\em sweet spot} is already used in panning and surround systems to describe an ideal listening position that is equally distant to all loudspeakers~\cite{spors2013} and around which there is a limited area where a desired wavefront is correctly recreated~\cite[Chapter 3.1]{nicol2020creating},~\cite{leakey1959some}. It is also used to mean the {\em ``area in which the spatial perception of the auditory scene works without considerable impairments''}~\cite[Chapter 1.2]{wierstorf2014perceptual}. Furthermore, in~\cite{frank2017exploring} the {\em sweet area} is defined as the {\em ``area within which the reproduced sound scene is perceived as plausible,''} where {\em plausible} means the preservation of front localization and envelopment of reverberation. Our use of the term is in the spirit of the second and third ideas. 

One of the earlier and most widespread spatial sound approaches is stereophony
and its generalizations, sourround systems~\cite{rumsey2018} and Vector Base Amplitude Panning (VBAP) \cite{pulkki1997virtual}. These methods, also called {\em panning techniques}, adjust the level and time-delay of the audio signals for each speaker utilizing a \textit{panning law} to simulate steering the perceived direction of the sound source. The possibility of this steering has been explained by the perceptual idea of {\em summing localization} and by the {\em association model}~\cite[Chapter 6.1]{nicol2020creating}. Moreover, due to some psycho-acoustic features of the auditory system, such as the binaural decoloration mechanism, they work sufficiently well in some applications, even with few speakers~\cite{spors2013}; they do not suffer from coloration~\cite{pulkki2001coloration}. However, they can only simulate sound sources that lay on the segments that join the speakers. Furthermore, its quality degrades rapidly as the listener moves away from the center of the target region~\cite{spors2013}.

A popular strategy to recreate an auditory scene is to directly approximate the sound wave that created it. In the literature, this strategy is called {\em sound field synthesis}, {\em sound field reproduction} or {\em sound field reconstruction}. Following Huygens' principle~\cite{huygens}, any sound scene can be approximated accurately with a sufficiently dense arrangement of loudspeakers. However, in practice there is only a limited number of them. Three classes of commonly used methods for sound field reconstruction are {\em mode matching methods}, {\em pressure matching methods} and {\em wave field synthesis}.

Mode Matching Methods (MMM) match the coefficients in the expansion of the target and generated sound waves in spatial spherical harmonics~\cite{daniel2000representation}. Some well-known MMMs are Ambisonics~\cite{gerzon1973periphony}, Higher-Order Ambisonics (HOA), and Near-Field Compensated Ambisonics (NFC-HOA)~\cite{daniel2003further}. All of them minimize the \(\ell^2\)-norm of the difference between the leading coefficients. Ambisonics assumes the loudspeakers emit plane waves and uses only the leading coefficient, whereas HOA uses a larger but fixed number of coefficients. In contrast, NFC-HOA assumes the loudspeakers are monopoles. Ambisonics, HOA and NFC-HOA are designed for circular or spherical regions of interest. When approximating a plane wave, they create a central spherical region with a radius that is inversely proportional to the frequency of the source over which the sound scene is reconstructed almost identically~\cite{Ward2001}.
Some variations of these methods consider a weighted mode matching problem to prioritize certain spatial regions~\cite{Ueno2019}, a mixed pressure-velocity mode matching problem~\cite{Zuo2020}, and an intensity mode matching problem~\cite{zuo20213d}.

Instead of using expansions in spatial spherical harmonics, Pressure Matching Methods (PMM) minimize the spatio-temporal \(L^2\)-error between the the target and generated sound waves~\cite{kirkeby1993reproduction}. The magnitude of the audio signals are often penalized by their \(L^p\)-norm to mitigate the effects of ill-conditioning~\cite{Gauthier2005, Gauthier2017, Jia2018, Feng2018} and to enforce sparse representations~\cite{Radmanesh2013, Lilis2010}. Typically the loudspeakers are modeled as monopoles. In most cases, the solution can only be found numerically, and the discretization of the region of interest plays an important role~\cite{ajdler2006plenacoustic, kolundzija2009sound}. Variations considering a mixed pressure-velocity pressure matching problem have been considered~\cite{buerger2015multizone, buerger2018broadband, Shin2016}.

Finally, Wave Field Synthesis (WFS) leverages the single-layer boundary integral representation of a sound wave over a region of interest~\cite{spors2013}. Traditional WFS~\cite{berkhout1993acoustic} uses a Rayleigh integral representation to derive a solution when the speakers are modeled as dipoles lying on a line. This was later extended to monopoles~\cite{stuart1996application} for 2.5D reproduction. Its reformulation, Revisited WFS~\cite{spors2008theory}, uses a Kirchhoff-Helmholtz integral representation along with a Neumann boundary condition to obtain a solution for an arbitrary distribution of monopoles. It has been shown that the localization properties of the auditory scene are correctly simulated by WFS and do not depend on the position of the listener
over the region of interest~\cite{wierstorf2017assessing}. However, this technique suffers from coloration effects due to spatial aliasing artifacts~\cite{wierstorf2014coloration}.

There is extensive literature analyzing these methods and comparing their performance~\cite{daniel2003further, spors2008comparison, fazi2009analogies, Franck2017, Firtha2018}, they become equivalent in the limit of a continuum of loudspeakers, differing only when using a finite number~\cite{fazi2007theoretical} of them. Although they are amenable to mathematical analysis and have computationally efficient implementations,
their construction has no natural perceptual justification to produce a large sweet spot. As a consequence, the artifacts introduced by these methods, due to approximation errors, may produce noticeable, and possibly avoidable, perceptual artifacts.

An alternative to better reproduce the auditory scene is to explicitly account for psycho-acoustic and perceptual principles in the reconstruction methods~\cite{wierstorf2014perceptual}. The first steps in this direction were taken in~\cite{johnston2000perceptual} by proposing a simple model that aims to preserve the spatial properties of the desired auditory scene. A method to reproduce an active intensity field that is largely uniform in space was then proposed in~\cite{desena2013}. It is based on an optimization problem yielding audio signals where at most two loudspeakers are active simultaneously. However, it makes the restrictive assumption that the target sound wave is a plane wave, and that the loudspeakers emit plane waves. In~\cite{Ziemer2017} the {\em radiation method} and the {\em precedence fade} are proposed. The former is equivalent to applying a PMM over a selection of frequencies that are most relevant psycho-acoustically, whereas the latter is a method to overcome the localization problems associated to the {\em precedence effect}~\cite{litovsky1999precedence}. Finally, in~\cite{Lee2020} a PMM is extended to account for psycho-acoustic effects by considering the \(L^2\)-norm of the differences in pressure convolved in time by a suitable filter.

We believe that there is a gap between methods that aim to directly approximate a sound wave to reproduce a desired auditory scene, and methods that leverage perceptual models to reproduce the same auditory scene. Defining the sweet spot requires a model, either theoretical or empirical, of audio perception. In this work, we introduce theoretical framework for for spatial audio perception, and we develop a method to maximize the sweet spot defined by a model within this framework. Our method is amenable to mathematical analysis, it has an efficient computational implementation, and is guided by perceptual principles. Our numerical results show that our method outperforms some state-of-the-art methods for sound field reconstruction.

The paper is organized as follows. In Section~\ref{sec:mathematicalModel} we introduce the physical assumptions we make, and a theoretical framework for spatial audio perception, deferring to Appendix~\ref{sec:analysisOfSWEET} the technical details. Then, in Section~\ref{sec:sweetReLU} we introduce an intuitive and readily implementable instance of our method to maximize the sweet spot defined by a model within this framework. In Section~\ref{sec:psychAcousticFramework} we discuss the psycho-acoustic concepts that, to our knowledge, can be incorporated in the theoretical framework. In Section~\ref{sec:implementation} we present an implementation of our method. In Section~\ref{sec:experiments} we perform proof-of-concept numerical experiments analyzing the performance of our method, comparing its results with WFS, NFC-HOA and PMM. Finally, in Section~\ref{sec:discussion} we discuss our results, the limitations of our method, and some future lines of research.

\section{Spatial Sound Mathematical Framework}
\label{sec:mathematicalModel}

\subsection{Acoustic framework}
\label{sec:acousticFramework}

Consider \(\Ns\) loudspeakers located at positions \(x_1,\ldots, x_{\Ns}\in \R^3\). When the medium is homogeneous and isotropic, and each loudspeaker behaves as an isotropic point source, the physical sound wave \(u\) they generate is represented in frequency as~\cite[Section~2.5.2]{Evans2010}
\begin{equation}
    \label{eq:synthethizedSignalFourierPointSource}
    \wh{u}(f, x) = \sum_{k=1}^{\Ns} \wh{c}_k(f) \frac{e^{-2\pi i c_s^{-1} f \nrm{x - x_k}}}{4\pi\nrm{x - x_k}}
\end{equation}
where \(\cs\) is the speed of sound, \(c_1,\ldots, c_{\Ns}\) are the audio signals driving each loudspeaker, and \(\wh{c}_k\) is the Fourier transform of \(c_k\) in time
\[
    \wh{c}_k(f) := \int c_k(t) e^{-2\pi i f t}\, dt.
\]
To model the spatial radiation pattern of each loudspeaker, along with time-invariant effects such as reverb~\cite{Gauthier2005,betlehem2005theory}, we may use
\begin{equation}
\label{eq:synthethizedSignalFourier}
    \wh{u}(f, x) = \sum_{k=1}^{\Ns} \wh{c}_k(f)G_k(f,x),
\end{equation}
where \(G_k\) is the Green function of the \(k\)-th loudspeaker. In addition to the array, we consider a region of interest \(\Omega\subset \R^3\) such that \(x_k\notin \ol{\Omega}\); thus, it contains no singularity in~\eqref{eq:synthethizedSignalFourierPointSource}. On this region, we may approximate a sound wave \(u_0\) with the array of loudspeakers {\em as best as possible}. If we had a continuum of isotropic point sources on \(\partial\Omega\) then, under suitable conditions, the \textit{simple source formulation}~\cite[Section~8.7]{williams1999fourier} 
shows we can reproduce \(u_0\) {\em exactly}. However, when only a finite number of physical loudspeakers are available, we must find 
\(\wh{c}_1,\ldots,\wh{c}_{\Ns}\) such that
\begin{equation}
\label{eq:approximationProblemInformal}
    \wh{u}_0(f,x) \approx \sum_{k=1}^{\Ns} \wh{c}_k(f) G_k(f, x)
\end{equation}
in an suitable sense for \(x\in \Omega\). When each \(G_k\) is real-analytic on its second argument the approximation cannot be exact on {\em any} open set unless \(u_0\) was actually generated by the speaker array~\cite[Corollary~1.2.5]{Krantz2002}. This suggests that perfect sound field reconstruction is impossible, and that the difference can be small only on average. In spite of this, in some subset of \(\Omega\) the approximation can be {\em perceptually} accurate.

\subsection{Perceptual framework}
\label{sec:sweetSpot}

\def\uleft{u^{\ell}}
\def\uright{u^{r}}
\def\uside{u^s}
\def\vleft{v^{\ell}}
\def\vright{v^{r}}
\def\vside{v^s}

\theoremstyle{definition}\newtheorem{assumption}{Assumption}
\def\supess{\operatorname*{\textrm{ess~sup}}}
\def\esssupp{\operatorname*{\textrm{ess~supp}}}
\def\Wu{W^\mu}

\def\clO{\overline{\Omega}}
\def\Th{\Theta}
\def\th{\theta}
\def\vth{\vec{\th}}
\def\LE{L^2}
\def\LER{\LE(\R)}
\def\uo{u_0}
\def\vu{\bar{u}}
\def\vv{\bar{v}}
\def\vG{\bar{G}}
\def\vuo{\vu_0}
\def\tP{\tilde{P}}
\def\Dmap{D}
\def\Lmap{L}

\def\TD{T_{D}}
\def\TL{T_{L}}

The comparison in~\eqref{eq:approximationProblemInformal} is between two {\em physical} quantities. To incorporate {\em perceptual} effects, we formally introduce a theoretical framework and defer the mathematical details to Appendix~\ref{sec:maxSweetSpot}.

The perception of an individual located at \(x\in \Omega\) and looking in the direction represented by a unit vector \(\th\) in \(\R^3\) (or an angle in \(\R^2\)) depends on the relation between the sound wave at the left and right ears. Hence, we consider {\em pairs} of signals \(\uleft, \uright\) so that \(\uside = \uside(t,x,\theta)\) represents the wave that reaches the ear \(s\) of a listener located at \(x\) and looking in the direction \(\th\) at time \(t\). We let \(\uside_{(x,\th)}\) represent the signal at ear \(s\). From now on, let \(\vu\) be this vector signal, and let \(W\) be the space of all such signals. Instead of~\eqref{eq:synthethizedSignalFourier} we consider
\begin{equation}
\label{eq:synthethizedEarSignalFourier}
    \wh{u}^s(f, x,\th) = \sum_{k=1}^{\Ns} \wh{c}_k(f) H_k^s(f,x,\th),
\end{equation}
where \(H_k^{s}\) is the {\em head-related transfer function}~\cite{blauert1997spatial} associating to a wave emitted by the \(k\)-th loudspeaker the sound wave reaching the ear \(s\); this comprises the behavior of the loudspeakers. From now on, we let \(W_S\) be the set of all pairs of signals generated by model~\eqref{eq:synthethizedEarSignalFourier}.

Therefore, the problem is to approximate the fixed target signal \(\vuo\) associated to the sound wave \(u_0\) by a signal \(\vu\) represented as~\eqref{eq:synthethizedEarSignalFourier} that is {\em perceptually close} to \(\vuo\). To model the {\em perceptual dissimilarity} we introduce a map \(\Dmap\) that associates to a pair \(\vu, \vuo\) the function \(\Dmap_{(\vu,\vuo)} = \Dmap_{(\vu,\vuo)}(x,\th)\) that quantifies the dissimilarity between the signals \(\vu\) and \(\vuo\) perceived by a listener located at \(x\) and looking in the direction \(\th\). We do not make strong assumptions on \(\Dmap\) {\em except} that it is {\em convex}: for \(\vu_1,\vu_2\) we have
\begin{equation}
\label{eq:dissimilarityIsConvex}
    \sup_{\th\in\Th}\, D_{(\lambda\vu_1+(1-\lambda)\vu_2,\vuo)}(x,\th)\\
        \leq \lambda \sup_{\th\in\Th}\,D_{(\vu_1,\vuo)}(x,\th) + (1-\lambda)\sup_{\th\in\Th}\, D_{(\vu_2,\vuo)}(x,\th)
\end{equation}
for any \(\lambda\in [0, 1]\). We assume the dissimilarity is negligible if \(\Dmap_{(\vu, \vuo)}(x,\th) \leq 0\). Depending on the application, this may be interpreted as {\em authenticity}, i.e., \(\vu\) is indiscernible from \(\vuo\), or as {\em plausibility}, i.e., some perceived features of \(\vu\) and \(\vuo\) show a context-reasonable correspondence~\cite[Chapter 2]{wierstorf2013binaural}. In Section~\ref{sec:psychAcousticFramework} we discuss some functional forms for \(D\).

Suppose the listeners can look only along some directions of interest \(\Th\). We define the {\em auditory illusion threshold} as
\begin{equation}
\label{eq:defThresholdMap}
    \TD\vu(x) := \sup_{\th\in\Th}\, \Dmap_{(\vu,\vuo)}(x,\th). 
\end{equation}
A listener located at \(x\) will perceive no noticeable differences between \(\vu\) and \(\vuo\) regardless of the direction she is looking if \(T\vu(x) \leq 0\). Hence, the {\em sweet spot}
\begin{equation}
\label{eq:defSweetSpot}
    \S(\vu) = \set{x\in \Omega:\TD\vu(x) \leq 0}
\end{equation}
is the region within \(\Omega\) where a listener does not perceive significant differences between \(\vu\) and \(\vuo\). We define a {\em loudness discomfort threshold} similarly: we assume there is a function \(\Lmap\) that associates to \(\vu\) the function \(\Lmap_{\vu} = L_{\vu}(x, \th)\) quantifying the {\em loudness discomfort} experienced by a listener at a location \(x\) looking in a direction \(\th\). We also assume \(\Lmap\) satisfies~\eqref{eq:dissimilarityIsConvex} and we define the the {\em loudness discomfort threshold}
\[
    \TL\vu(x) := \sup_{\th\in \Th}\, L_{\vu}(x,\th).
\]
We assume that \(\TL\vu(x) \leq 0\) when no discomfort is experienced. Hence, to avoid choosing a signal \(\vu\) that causes discomfort, we restrict our choices to
\begin{equation}
\label{eq:defPainThreshold}
    \P := \set{\vu\in W: \forall x\in\Omega.\ \TL\vu(x) \leq 0}.
\end{equation}
Consequenty, our goal is to find a signal \(\vu\in W_S\) that maximize the {\em weighted area} of the sweet spot \(\mu(\S(\vu))\) while causing no discomfort by solving
\begin{equation*}
    \label{opt:sweetSpotBasic}
    (\Po) \,\, \left\{\,\,
    \begin{aligned}
        & \underset{\vu\in W_S}{\text{maximize}}
        & & \mu(\S(\vu))\\
        & \text{subject to}
        & & \vu\in \P.
    \end{aligned}\right.
\end{equation*}

\section{The SWEET-ReLU method}
\label{sec:sweetReLU}

We introduce a particular instance of our method to approximately solve \((P_0)\) that is applicable in practice and has an intuitive interpretation. We defer the analysis of our general method to Appendix~\ref{sec:analysisOfSWEET}. Let \(h:\R\to\R\) be the Heaviside function, i.e., \(h(t) = 0\) if \(t < 0\) and \(h(t)= 1\) if \(t \geq 0\). Observe that
\[
    \mu(\S(\vu)) = \mu(\Omega) - \int_\Omega h(\TD\vu(x))\, d\mu(x).
\]
Maximizing \(\mu(\S(\vu))\) implies minimizing the second term. This is challenging due to \(h\) being piecewise constant. For \(\eps >0\) we can approximate it by a piecewise linear function \(\tilde{h}:\R\to\R\) such that \(\tilde{h}(t) = 0\) for \(t < 0\), \(\tilde{h}(t) = t\) for \(t\in [0, 1]\) and \(\tilde{h}(t) = 1\) for \(t > 1\). For \(\eps > 0\) let \(\tilde{h}_\eps(t) = \tilde{h}(t/\eps)\) and let \(\vu_1\in W_S\) be arbitrary. We can minimize the approximation
\begin{align*}
    \int_{\Omega} \tilde{h}_\eps(\TD\vu_1(x))\, d\mu &= \left(\int_{\set{x\in\Omega:\TD\vu_1(x) \leq \eps}} + \int_{\set{x\in\Omega:\TD\vu_1(x) > \eps}}\right) \tilde{h}_\eps(\TD\vu_1(x)) d\mu(x)\\
    &= \frac{1}{\eps}\int_{\set{x\in\Omega: \TD\vu_1(x) \leq \eps}} (\TD\vu_1(x))_+ d\mu+ \mu(\set{x\in\Omega: \TD\vu_1(x) > \eps})
\end{align*}
where \((u)_+ = \max\set{0,u}\). We interpret the first integral as the contribution from the region where we may change \(\vu_1\) to decrease the auditory illusion threshold, whereas we interpret the second as the contribution of the region over which it is {\em already} too large. Thus, we proceed in a {\em greedy} manner: we let \(\Omega_1=\Omega\), and then, at the \(k\)-th iteration, we define
\begin{equation}
\label{eq:sweetReLUSet}
    \Omega_{k+1} = \Omega_{k}\cap \set{x\in\Omega:\, \TD\vu_k(x)\leq \eps}
\end{equation}
and solve
\[
    (P^{\text{SReLU}}_{k+1})\,\,\left\{
    \begin{aligned}
        & \underset{\vu\in W_S}{\text{minimize}}
        & & \int_{\Omega_{k+1}} (\TD\vu(x))_+\, d\mu(x)\\
        & \text{subject to}
        & & \vu\in \P
    \end{aligned}\right.
\]
to obtain \(\vu_{k+1}\). The above problem is convex, whence \(\vu_{k+1}\) can be found using efficient algorithms. The sequence \(\set{\vu_k}_{k\in\N}\) has at least one accumulation point, which approximates the solution to \((P_0)\). Once the sequence has converged, we can repeat the procedure for a smaller value of \(\eps\). As the intuition behind our method suggests choosing a \(\vu_1\) for which \(\TD\vu_1 \equiv 0\), e.g. \(\vu_1=\vuo\), we simply choose  \(\Omega_1=\Omega\) in practice. By running the algorithm for increasingly smaller values of \(\eps\) we obtain an increasingly accurate approximation to a solution to \((P_0)\). We call this instance of our method SWEET-ReLU (Algorithm~\ref{algo:ReLUDCA}). We defer a justification of these facts to Appendix~\ref{sec:analysisOfSWEET} and the deduction of this instance to Appendix~\ref{sec:sweetReLUInstance}.

\begin{algorithm}[t]
    \SetKwInOut{Input}{input}
    \SetKwInOut{Set}{set}
    \Input{A decreasing sequence \(\set{\eps_i}\) of positive numbers, positive integers \(n_{\eps}, n_{\max}\)}
    \Set{\(\vu_1^{\star} \gets \vuo\)}
    \For{\(i = 1, \ldots, n_{\eps} - 1\)}{
        \Set{\(\Omega_1 \gets \set{x:\, \vu_i^{\star}(x) \leq \eps_i}\), \(\vu_1 \gets \vu_i^{\star}\)}
        \For{\(j=1,\ldots, n_{\max} - 1\)}{
            \(\Omega_{j+1} \gets \Omega_{j} \cap \{x\in\Omega:T\vu_{j}(x)\leq\eps_i\}\)\\
            \(\vu_{j+1} \gets \textsc{Solve}(P_{j+1}^{\text{SReLU}})\)
        }
        \Set{\(\vu_{i+1}^{\star} \gets \vu_{n_{\max}}\)}
    }
    \Return{\(\vu_{n_\eps}^{\star}\)}
    \caption{SWEET-ReLU}
    \label{algo:ReLUDCA}
\end{algorithm}

\section{Perceptual and psycho-acoustic theory}
\label{sec:psychAcousticFramework}

\def\IR{I_R}
\def\vs{\bar{s}}

The theoretical framework introduced aims to be flexible enough to account for a variety of perceptual and psycho-acoustic models. We now present some of the perceptual and psycho-acoustic considerations that lead to models within this framework.

The {\em spatial auditory illusion} (SAI) is achieved when the imprint of the reproduced sound scene on a listener resembles a desired auditory scene. The formation of the auditory scene depends not only on the signals reaching the listener's ears but both on the listener itself ~\cite{deutsch1983auditory}. Also, it may even be influenced by external visual, tactile, and proprioreceptive stimuli~\cite{francombe2015elicitation}. Formulating a model accounting for all these effects goes beyond the scope of this work. Instead, we focus on proposing maps \(\Dmap\) and \(\Lmap\) representative of an {\em average listener} or {\em worst-case listener}, and motivated by the concept of {\em auditory event}, which is only related to the perceptual processing of the ear signals. The auditory scene is then regarded as the integration of different and separable auditory events.

The process of extracting auditory events from ear signals has been studied in the field of {\em auditory scene analysis} (ASA)~\cite{bregman1994auditory, deutsch1983auditory}. Although the psycho-acoustic and cognitive mechanisms involved are the focus of current research~\cite[Chapter 2.1]{nicol2020creating}, the quantitative modeling of the process has been carried out by the field of {\em computational auditory scene analysis} (CASA)~\cite{wang2006computational}. The analysis of auditory scenes is a combination of bottom-up, or {\em signal driven}, and top-down, or {\em hypothesis driven}, processes~\cite{blauert1999models}. In the former, the physical properties of the input signal, which are processed at the peripheral auditory system, serve as a basis for the formation of auditory events and are summarized under {\em primitive grouping cues}~\cite{bregman1994auditory}. In the top-down process higher cognitive processes such as prior knowledge turn into play to determine which signal components the listeners attend to and how these components are assembled and recognized, involving processes that are referred to as {\em schema-based cues}~\cite{bregman1994auditory}. Although a complete model of auditory events formation should also consider top-down processes, for simplicity we focus only on primitive grouping cues. These allow us to compare \(\vu_{(x,\th)}\) and \(\vu_{0,(x,\th)}\) by accounting mostly for the immediate psycho-acoustic peripheral processing.

The direct comparison between the auditory scene generated by \(\vu\) and \(\vuo\) leads to the {\em full reference {\em (FR)} model}. In contrast, in the {\em internal references {\em (IR)} model} the auditory scene is compared to abstract representations in the listener's memory~\cite[Chapter 2.6]{raake2020binaural}. Although FR models have been shown to have limitations~\cite{raake2013comprehensive}, e.g., it is not always clear which binaural input signal is related to what the listener really desires to hear, we focus exclusively on them for simplicity. Otherwise schema based cues would be needed.

Spatial sound applications identify the most important {\em features}~\cite[Chapter 2]{wierstorf2013binaural} of the auditory scene in a multidimensional approach~\cite[Chapter 3.2]{raake2020binaural}. In Letowski’s simple model~\cite{letowski1989sound}, the auditory scene is described in terms of {\em ``loudness, pitch, (apparent) duration, spatial character (spaciousness), and timbre.''} The last two are selected as the more important for spatial sound applications. The analysis of the most important features for spatial sound applications has been recently refined in~\cite{francombe2017evaluation, reardon2018evaluation}, 
once again calling attention to timbral and spatial features. More specifically, azimuth localization and coloration are widely used features for the perceptual assessment of multi-channel reproduction systems~\cite{wierstorf2014perceptual, pulkki2001coloration, wierstorf2014coloration}. Motivated by these studies, we focus on models that account only for coloration and azimuth localization. The trade-off is that these models can, at best, account for {\em plausibility} of the SAI, more than {\em authenticity}. Since the detailed biophysics of the phenomena are not necessary to model an accurate input-output relation, we only consider functional (phenomenological) models instead of physiological (biophysical) ones~\cite[Chapter 1.2]{dietz2021computational}.

\subsection{Binaural Azimuth Localization}

{\em Azimuth localization} is the estimation of the direction of arrival of the incoming sound in the horizontal plane. In concordance with Lord Rayleigh's {\em duplex theory}~\cite{raleigh1907our}, the literature shows that interaural time differences (ITDs) are the primary azimuth localization cue at low frequencies~\cite{wightman1992dominant} whereas interaural level differences (ILDs) become relevant at high frequencies as to resolve ambiguities in the decoding of ITDs~\cite{dietz2011auditory} which appear over 1.4 kHz~\cite{brughera2013human}. Consequently, binaural models, i.e., models that use both the right and the left ear signals as inputs, are crucial, and most of these models for azimuth localization focus on the extraction of the ITDs. The cross-correlation between the left and
right ear signals~\cite{cherry1956human} is usually used to model the mechanism for extraction of the ITDs, e.g.~\cite{jeffress1948place,lindemann1986extension,blauert1978some,braasch2002localization, nix2006sound}
and azimuth localization. Other models~\cite{dietz2011auditory, pulkki2009functional} have appeared after the cross-correlation model was challenged by physiological findings~\cite{mcalpine2003sound}. Unfortunately, the extraction of the ITDs using cross-correlation (or as in~\cite{dietz2011auditory,pulkki2009functional}) and the extraction of ILDs, lead to dissimilarity maps \(\Dmap\) that do not satisfy~\eqref{eq:dissimilarityIsConvex}.

\subsection{Binaural Coloration}

{\em Coloration} is commonly defined as timbre distortion~\cite{wierstorf2014coloration},~\cite[Chapter 8.1]{nicol2020creating} where timbre is the property that {\em ``enables the listener to judge that two sounds which have, but do not have to have, the same spaciousness, loudness, pitch, and duration are dissimilar''}~\cite{letowski1989sound}. Although timbre could be quantified in a spectro-temporal space, the metric of the timbral space is not known and could be non-trivial~\cite{wierstorf2014coloration}.

Binaural perceptual effects such as binaural unmasking, spatial release from masking~\cite{culling2021binaural}, and binaural decoloration~\cite{bruggen2001sound}, allow the auditory system to improve the quality of the perceived sound in terms of signal-to-noise ratio (SNR) identifiability and coloration. Even though these effects make defining an accurate binaural metric even more challenging, binaural detection and masking models have been developed in the literature~\cite{breebaart2001binaural}. They can be used to detect binaural timbral differences, and have been adapted to account for localization cues~\cite{park2008model}. Furthermore, a model accounting for binaural perceptual attributes, such as coloration and localization, is proposed in~\cite{pulkki1999analyzing}. More recently, a model for binaural coloration using multi-band loudness model weights to analyse the perceptual relevance of frequency components has been developed~\cite{mckenzie2022predicting}. Unfortunately, these binaural methods once again lead to dissimilarity maps that do not satisfy~\eqref{eq:dissimilarityIsConvex}.

\subsection{Monoaural models}

To our knowledge, the binaural models in the literature do not lead to dissimilarities satisfying~\eqref{eq:dissimilarityIsConvex}. In contrast, under suitable assumptions, monoaural models, i.e., models that need just one ear signal as input, do. Furthermore, they can be applied independently over each ear, to then use a {\em worst-case scenario} methodology~\cite{thiede2000peaq} to extend them to binaural signals. We follow this approach and focus on monoaural models as surrogates to capture coloration effects. Monaural spectral localization models have been developed for localization across the sagittal plane, and also for localization across the azimuthal plane~\cite{zakarauskas1993computational}, but, to our knowledge, they do not satisfy~\eqref{eq:dissimilarityIsConvex}.

Models to detect monaural distortion aim to determine when two audio signals \(s_0 = s_0(t)\) and \(s = s(t)\) are perceived as different, and how this perception degrades as a function of the dissimilarities between \(s_0\) and \(s\). To achieve this, two main ideas are used for the estimation of audible distortions: the masked threshold and the comparison of internal representations~\cite{thiede2000peaq}.

The masked threshold compares the error signal \(\eps=s-s_0\) against \(s_0\) using a perceptual distortion function \(D^{\opt}(\eps, s_0)\). The error is assumed to be inaudible if this value is less than a fixed masking threshold~\cite{par2005, painter2000perceptual}. The comparison of internal representations leverages a model for an {\em internal representation} \(s\mapsto \IR(s)\) resulting from the signal transformations performed in the ear. The internal representations are compared using an {\em internal detector} \((\IR(s), \IR(s_0))\mapsto D^{\opt}(\IR(s), \IR(s_0))\) and the difference between the signals is assumed to be perceptible if this value exceeds a given threshold~\cite{Jepsen2008, thiede2000peaq}. These studies do not provide analytical expressions that satisfy~\eqref{eq:dissimilarityIsConvex} for the representation nor for the internal detector. An approximation yielding such expressions is given in~\cite{Plasberg2007}; another simplified model is developed in~\cite{Taal2012}.

The models developed in~\cite{par2005, Plasberg2007, Taal2012} yield monoaural dissimilarity maps \(\Dmap\) that satisfy~\eqref{eq:dissimilarityIsConvex}. These methods can be represented as
\begin{equation}
\label{eq:defAudibleThreshold}
    \Dmap^{\mathrm{m}}(s,s_0) = B_1(s - s_0)+ \ldots+ B_{\Nb}(s - s_0)
\end{equation}
where \(B_1,\dots B_{\Nb}\) are filters of the form
\begin{equation}
\label{eq:hearingBandGeneral}
    B_k(s-s_0) = \int_{\R}\left|\int_{\R} K_{B_k}(t, t') (s- s_0)(t')\, dt'\right|^2 dt
\end{equation}
for a suitable function \(K_{B_k}\) representing a time-variant or time-invariant filter that may depend on \(s_0\) itself. In~\cite{par2005,Taal2012} the filters \(B_k\) represent the auditory distortion over the \(k\)-th auditory filter of the cochlea, whereas in~\cite{Plasberg2007} the sum reduces to only one locally time invariant filter that accounts for the whole auditory distortion. Although monoaural, these models can be used for binaural signals \(\vs\) by taking the worst distortion between ear signals~\cite{thiede2000peaq}
\begin{equation}
\label{eq:binauralExtension}
     \Dmap^{\mathrm{b}}(\vs,\vs_0)=\max\{\Dmap^{\mathrm{m}}(s^{\ell},s_0^{\ell}),\Dmap^{\mathrm{m}}(s^{r},s_0^r)\}.
\end{equation}

\subsection{Discomfort}
\label{sec:loudness}

To model the loudness discomfort \(\Lmap\) we consider empirical evaluations of discomfort. This is a simplification motivated by computational simplicity and also by a small number of comprehensive studies on the subject. Empirical thresholds for loud discomfort levels for sinusoidal signals over a finite set of frequencies have been defined in the literature, e.g. in~\cite{knobel2006nivel, sherlock2005estimates}. Naturally, for a sinusoidal signal of frequency \(f_k\) these can be expressed with the monaural expression
\begin{equation}
\label{eq:criteriaForDiscomfort}
    Q_k(s)=\int_{\R}|\wh{s}(f)|^2\,\rho_k(f)^2 df,
\end{equation}
where \(\rho_k(f)\equiv\rho_k\in\mathbb{R}_+\) is the multiplicative inverse of the threshold at \(f_k\). Finally, for binaural signals, these models can be applied taking the worst discomfort between ears as in~\eqref{eq:binauralExtension}.

\section{Implementation}
\label{sec:implementation}

We provide a proof-of-concept implementation of SWEET-ReLU for approximating a sound wave generated by a (pseudo) sinusoidal, i.e., where the spectrum is concentrated around a single frequency, monopole emitting at a single frequency \(\fopt\).

\subsection{Implementation of the acoustic framework}

 The original sound wave \(u_0\) is assumed to be emitted by an isotropic point source at \(x_0\in\R^3\). Each loudspeaker is assumed to be an isotropic point source. As we consider (pseudo) sinusoidal audio signals, the sound wave \(u\) generated by the array is given by~\eqref{eq:synthethizedSignalFourierPointSource} with
\[
    \wh{c}_k(f) = a_{k}\, e^{-(f - \fopt)^2 / 2\sigma^2}
\]
for coefficients \(a_{k}\in \C\) and a fixed spectral localization parameter \(\sigma \ll 1\).

\subsection{Implementation of the perceptual framework}

Monaural distortion detectability methods cannot represent the necessary features to correctly define an auditory illusion map as described in Section~\ref{sec:psychAcousticFramework}. Hence, they may not be optimal when modelling binaural perception. For instance, they cannot represent explicitly any type of azimuth localization, nor binaural coloration effects. However, they can represent monaural coloration effects, and some yield dissimilarity maps satisfying~\eqref{eq:dissimilarityIsConvex}. For this reason, we use a monoaural model as a proof-of-concept. We use van de Par's spectral psycho-acoustic model~\cite{par2005}. Although it is suboptimal when modeling temporal masking effects, the signals we consider are stationary, whence temporal masking is almost non-existent.

This monoaural model can be applied to binaural signals by using the worst-case as in~\eqref{eq:binauralExtension}. For \(x\in \Omega\) and \(\th\in\Th\) we apply this model to the left and right ear signals \(\vu(x,\th)\) and \(\vuo(x,\th)\). As van de Par's model can be represented by time-invariant filters in~\eqref{eq:hearingBandGeneral}, for \(s\in\set{\ell,r}\) we have that
\[
    B_j u^s(x,\th) = \int_{\R} |(\wh{u}^s - \wh{u}_0^s)(f,x,\th)|^2 \rho_{B_j}(f,x,\th)\, df
\]
where \(\rho_{B_j}\) depends on \(u_0\) as
\[
    \rho_{B_j}(f, x,\th) = \frac{w_{B_j}(f)}{C_A + \int_{\R} |\wh{u}_0^s(f,x,\th)|^2w_{B_j}(f) df}.
\]
The constant \(C_A > 0\) limits the perception of very weak signals in silence. The weight \(w_{B_j}\) is defined as \(w_{B_j} := |\eta \gamma_{f_j}|^2\) where 
\[
     \log_{10}\eta(f) = C_{\eta,0} - C_{\eta,1} f^{-0.8} - C_{\eta,2} (f - 3.3\times 10^{3})^2 + C_{\eta,3} f^4
\]
models the outer and middle ear as proposed by Terhardt~\cite{terhardt1979calculating} with \(C_{\eta,0} = 4.69\), \(C_{\eta,1} = 18.2 \times 10^{1.4}\), \(C_{\eta,2} = 32.5\times 10^{-7} \) and \(C_{\eta, 3} = 5\times 10^{-16}\), and
\[
    \gamma_{f_j}(f) = \left(1+\left(\frac{945\pi(f-f_j)}{48\ERB(f_j)}\right)^2\right)^{-2}
\]
models the filtering property of the basilar membrane in the inner ear at the center frequency \(f_j\), where the Equivalent Rectangular Bandwidth (ERB) of the auditory filter centered at \(f_j\) is \(\ERB(f_j)= 24.7(1 + 4.37\times 10^{-3} f_j)^{-1}\) as suggested by Glasberg and Moore~\cite{glasberg1990derivation}. The center frequencies \(f_j\) are uniformly spaced on the ERB-rate scale \(\ERBS(f) = 21.4 \log (1 + 4.37\times 10^{-3}f)\). In van de Par's model, the distortion is noticeable when its metric is greater or equal to 1. Hence, the monaural dissimilarity map becomes
\begin{align*}
    D^{\mathrm{m}}_{u^s,u^s_0}(x,\th) &= -1 + C_{\Psi}\sum_{j=1}^{n_b}\frac{\int_{\R}|(\wh{u}^s - \wh{u}_0^s)(f,x,\th)|^2w_{B_j}(f) df}{C_A + \int_{\R}  |\wh{u}_0^s(f,x,\th)|^2w_{B_j}(f)df}\\
    &\approx -1 + C_\Psi'\sum_{j=1}^{n_b}\frac{|(\wh{u}^s - \wh{u}_0^s)(\fopt,x,\th)|^2w_{B_j}(\fopt)}{C_A +  w_{B_j}(\fopt)|\wh{u}_0^s(\fopt,x,\th)|^2}
\end{align*}
where \(C'_{\Psi} = 2^{1/4}\pi^{1/2}\sigma C_{\Psi}\) and we used the approximation for (pseudo) sinusoidal signals
\[
    \int_{\R} \vphi(f)|\wh{u}_0^s(f,x,\th)|^2\,df \approx \sqrt{2^{1/2}\pi}\sigma \vphi(\fopt) |\wh{u}_0^s(\fopt,x,\th)|^2
\]
when \(\sigma \ll 1\). The constants \(C'_\Psi\) and \(C_A\) are defined as suggested in~\cite{par2005}. This accounts for the absolute threshold of hearing and the just-noticeable difference in level for sinusoidal signals, which gives, \(C'_\Psi\approx 1.555\) and \(C_A\approx 4.481\) when considering \(n_b=100\) as the number of center frequencies, and \(f_1=20, f_{n_b}=10^3\) as the first and last center frequency. The worst-case extension of this perceptual dissimilarity to binaural signals is given by
\[
    D_{\vu,\vuo}(x,\th) = \max\set{D^{\mathrm{m}}_{u^\ell,u^\ell_0}(x,\th), D^{\mathrm{m}}_{u^r,u^r_0}(x,\th)}.
\]
To model the loudness discomfort we use the experimental results in~\cite{knobel2006nivel} about the discomfort caused by sinusoidal signals. We interpolate the data in this study with cubic splines with natural boundary~\cite[Section~8.6]{quarteroni2010numerical} to obtain a function \(\eta_P>0\). Therefore, we consider
\begin{align*}
    L_{u^s}^{\mathrm{m}}(x,\th) &= -1+C_\Pi\int_\R|\wh{u}^s(f,x,\th)|^2\rho_{\fopt}(f, x,\th) df\\
    &\approx -1+C_\Pi'|\wh{u}^s(\fopt,x,\th)|^2\eta_P(\fopt)^{-1}
\end{align*}
where \(\rho_{\fopt}(f, x,\th)\equiv \gamma_{\fopt}(f)/\eta_P(f)\). Naturally, \(C'_\Pi=1\), as the empirical thresholds in~\cite{knobel2006nivel} are attained when \(L^{\mathrm{m}}u^s(x)\leq 0\). Then, the worst-case extension to binaural signals is
\[
    L_{\vu}(x,\th) = \max\set{L_{\uleft}^{\mathrm{m}} (x,\th),L_{\uright}^{\mathrm{m}} (x,\th)}.
\]
Although this implementation is intended to study single-frequency signals, it can be generalized to multifrequency signals. The generalization of \(D^\mathrm{m}\) is direct and it comprises additional terms in the sum. The generalization of \(L^\mathrm{m}\) to a multifrequency signal could sum the discomfort associated to each frequency, similarly to the way the auditory filter errors are summed in \(D^\mathrm{m}\), or it could use an integrating function as indicated in Section~\ref{sec:loudness}. These are simple heuristics for a proof-of-concept study, and their effectiveness should be validated experimentally.

\subsection{Discretization}

\def\Nl{n_{l}}

In a typical experiment, listeners are seated in a room, and their locations and orientations within the room are determined in advance. In this case, a simplification consists in defining the sweet spot in terms of the number listeners for which the SAI is achieved. If the listeners can be located at finite number of points \(z_1,\ldots, z_{\Nl}\in\Omega\) then we can model the weighted area of the sweet spot as
\begin{equation}
\label{eq:atomicSweetSpot}
    \mu(\S(\vu))= \sum_{\ell = 1}^{\Nl} \TD\vu(z_\ell),
\end{equation}
which is equivalent to assuming \(\mu\) is an atomic measure. In this case, every component of the proof-of-concept implementation can be evaluated either in closed-form, as is the case of the weighted area or the Green function of the loudspeakers, or can be approximated to very high-accuracy, as is the case of the head-related transfer functions (see Section~\ref{sec:experiments} for details). 

The approximation~\eqref{eq:atomicSweetSpot} can be used when there is a discrete number of locations for listeners in a room. In other applications where it is necessary to control a continuum, e.g., when listeners may move across the room, quadrature rules must be used. In this case, the approach to solve \((P_0)\) follows an approximate-then-discretize approach, and numerical errors may have an effect on the sweet spot computed in practice.

\section{Experiments}
\label{sec:experiments}

\begin{figure}
    \centering
    \begin{subfigure}[t]{0.245\textwidth}
        \centering
        \includegraphics[width=\textwidth]{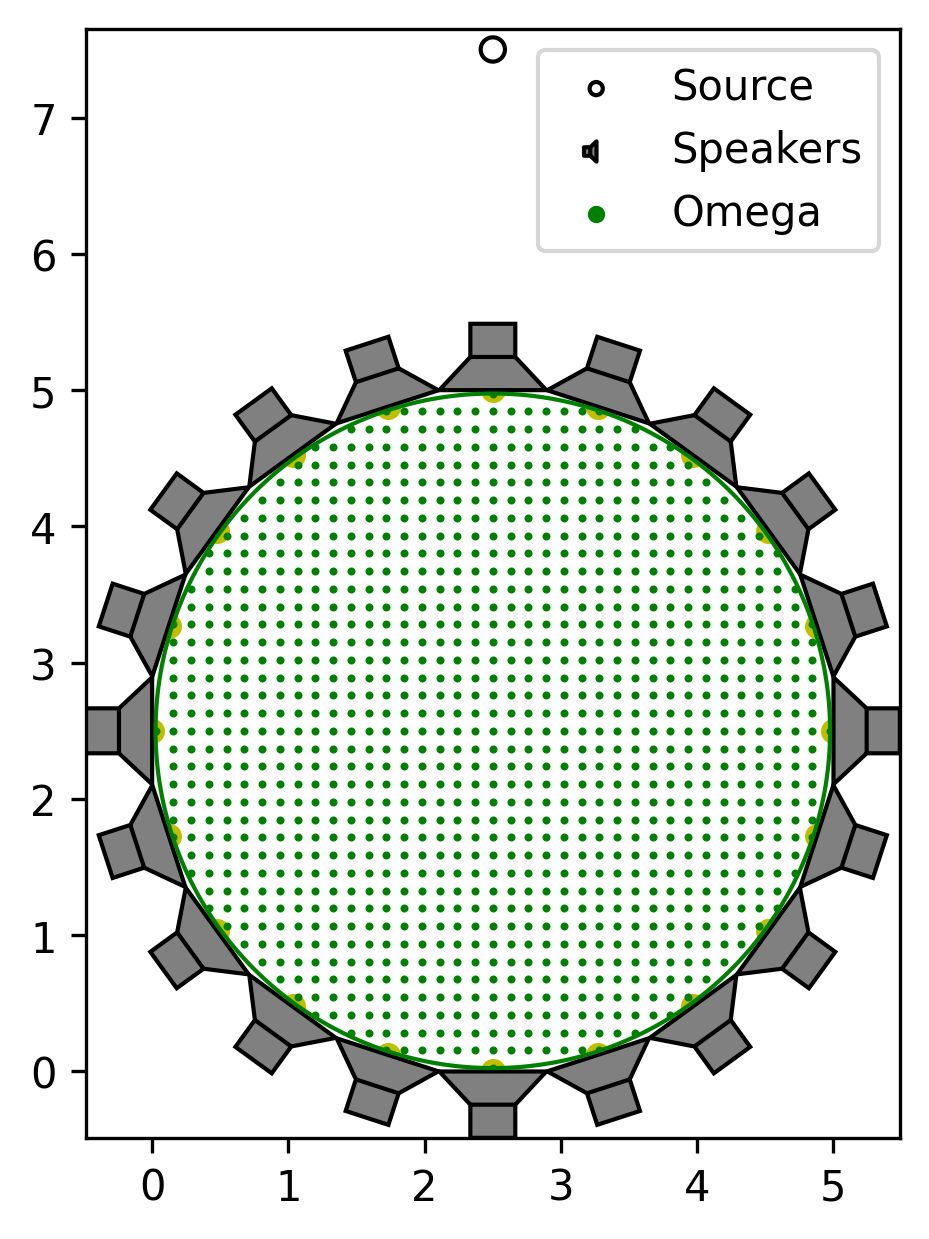}
        \caption{}
        \label{ex:NF-instance}
    \end{subfigure}%
    \begin{subfigure}[t]{0.245\textwidth}
        \centering
        \includegraphics[width=\textwidth]{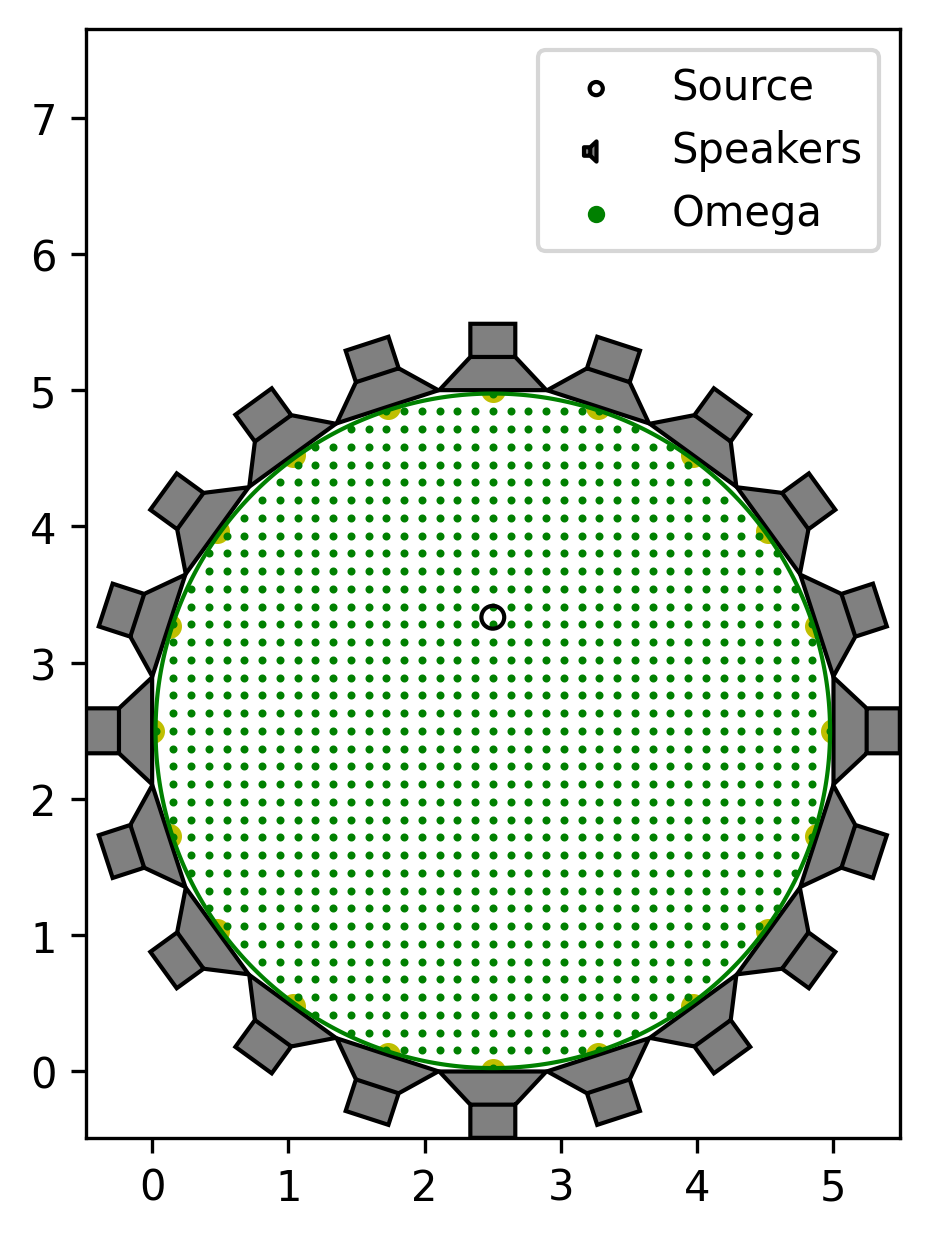}
        \caption{}
        \label{ex:FS-instance}
    \end{subfigure}\\
    \begin{subfigure}[t]{0.245\textwidth}
        \centering
        \includegraphics[width=\textwidth]{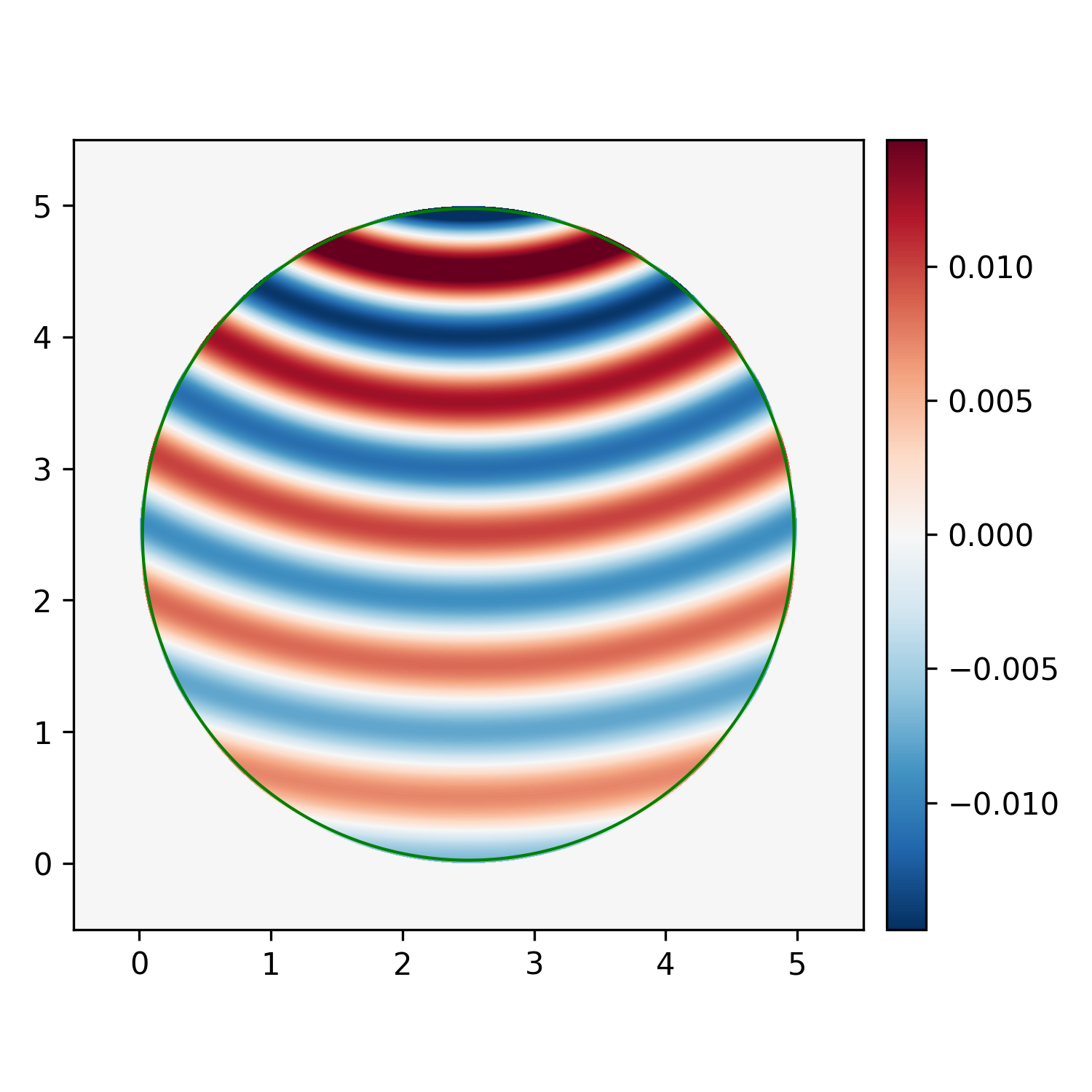}
        \caption{}
        \label{ex:NF-u0}
    \end{subfigure}%
    \begin{subfigure}[t]{0.245\textwidth}
        \centering
        \includegraphics[width=\textwidth]{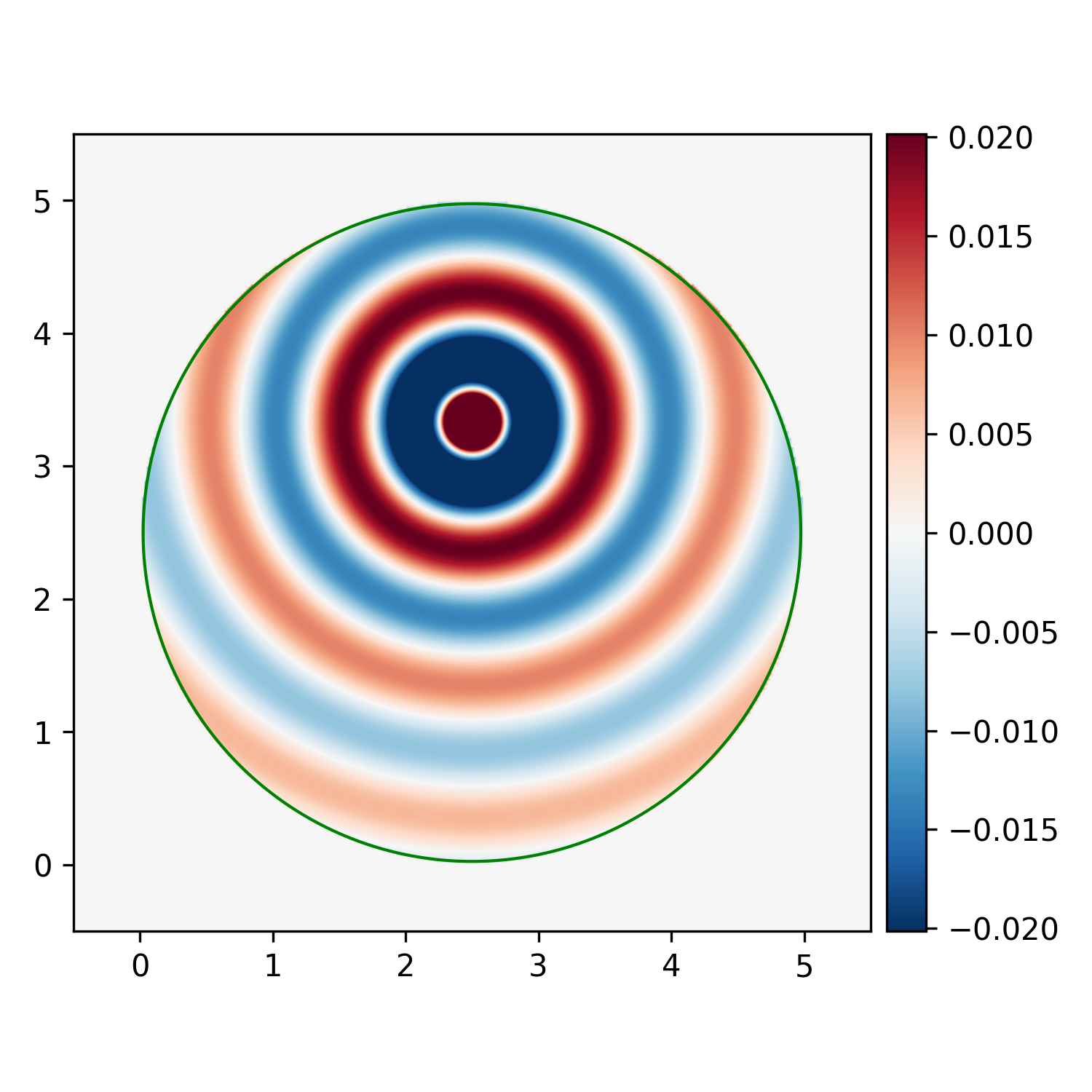}
        \caption{}
        \label{ex:FS-u0}
    \end{subfigure}%
    \caption{{\em Rows:} Experimental setup. Target \(\wh{u}_0(\fopt)\) (real part). {\em Columns:} Near-field instance. Focus-source instance.}
    \label{ex:instances}
\end{figure}

\begin{figure*}[!htbp]
    \centering
    \begin{subfigure}[t]{0.245\textwidth}
        \centering
        \includegraphics[width=\textwidth]{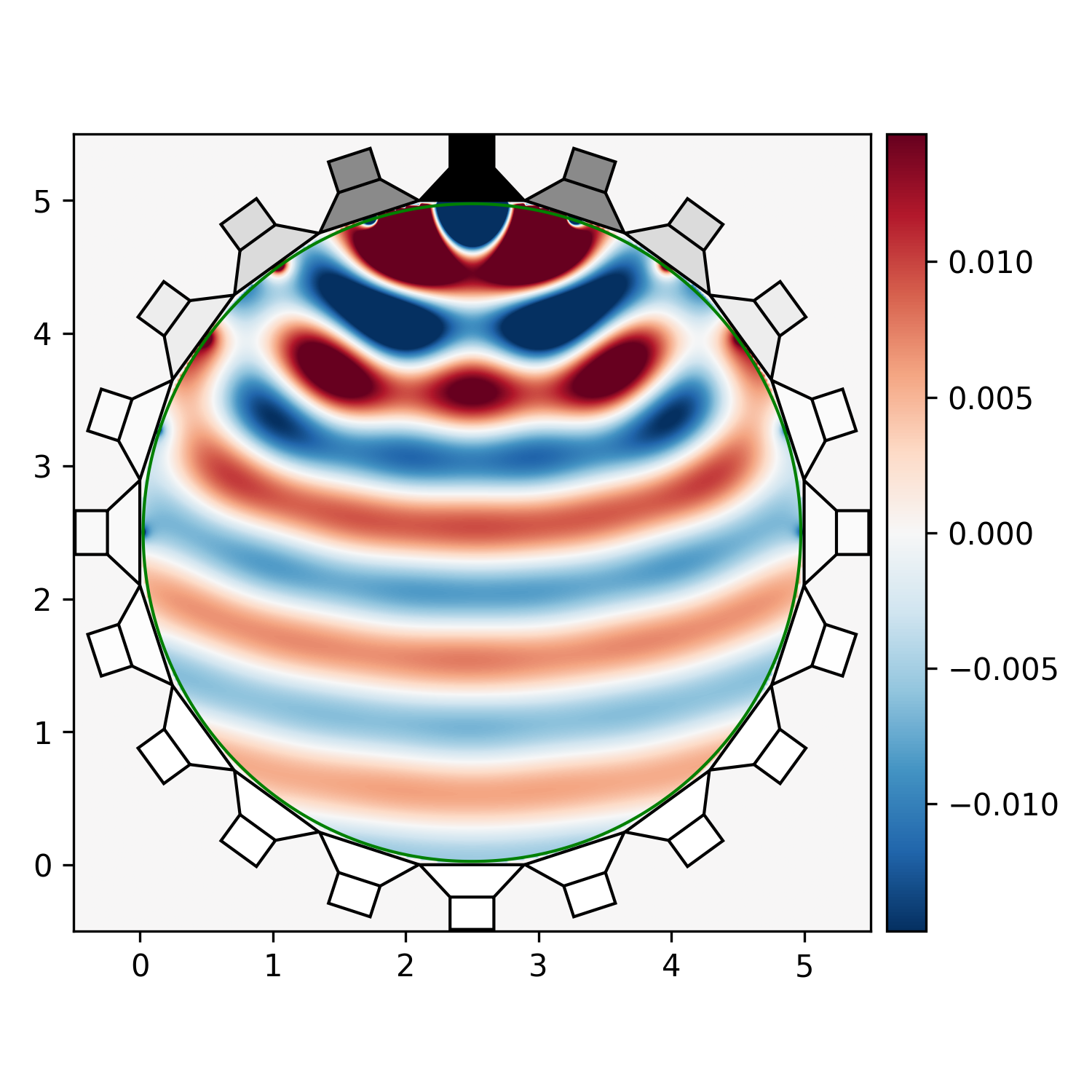}
        \caption{}
        \label{ex:SWEET-NF-u}
    \end{subfigure}%
    \begin{subfigure}[t]{0.245\textwidth}
        \centering
        \includegraphics[width=\textwidth]{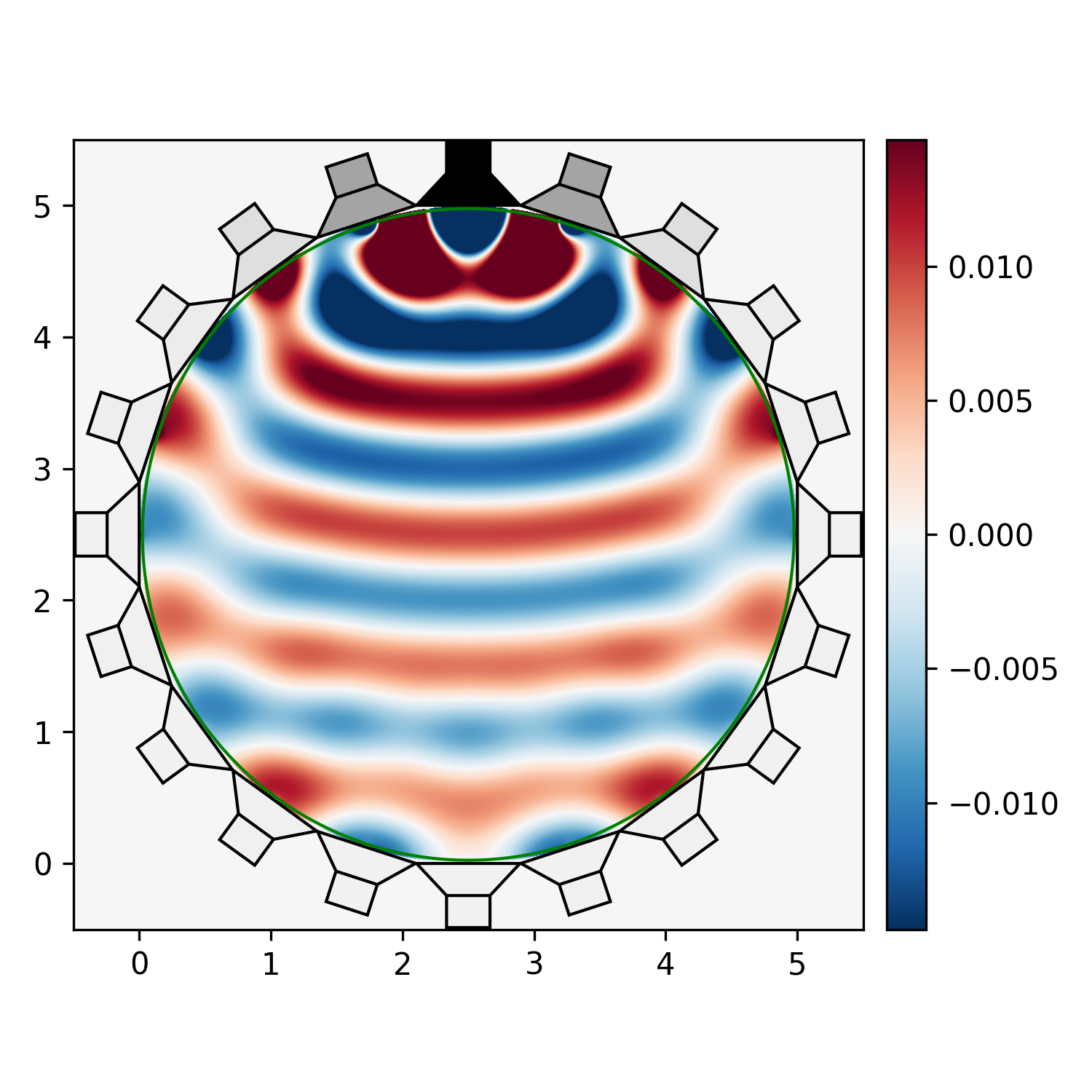}
        \caption{}
        \label{ex:NFCHOA-NF-L}
    \end{subfigure}%
    \begin{subfigure}[t]{0.245\textwidth}
        \centering
        \includegraphics[width=\textwidth]{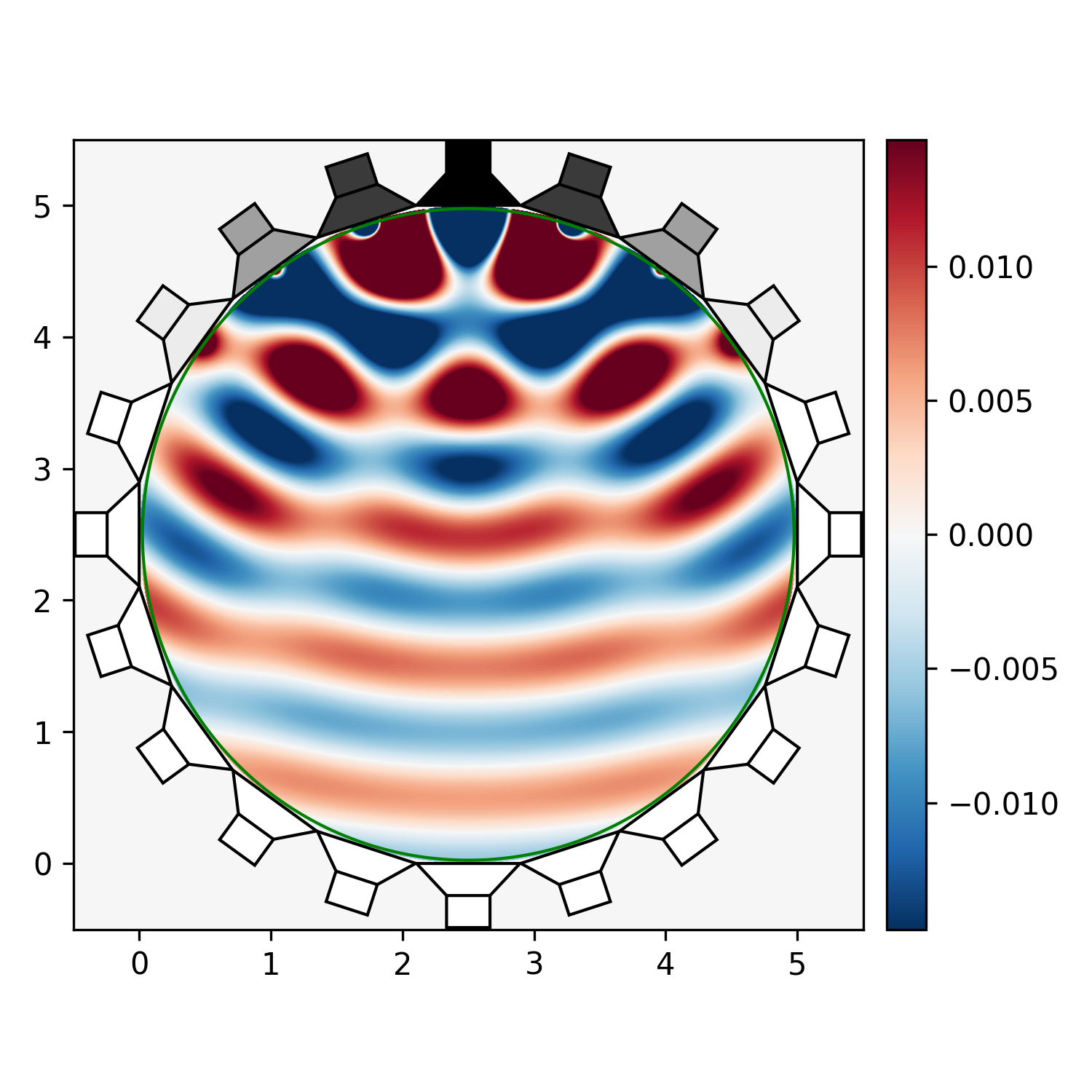}
        \caption{}
        \label{ex:WFS-NF-C}
    \end{subfigure}%
    \begin{subfigure}[t]{0.245\textwidth}
        \centering
        \includegraphics[width=\textwidth]{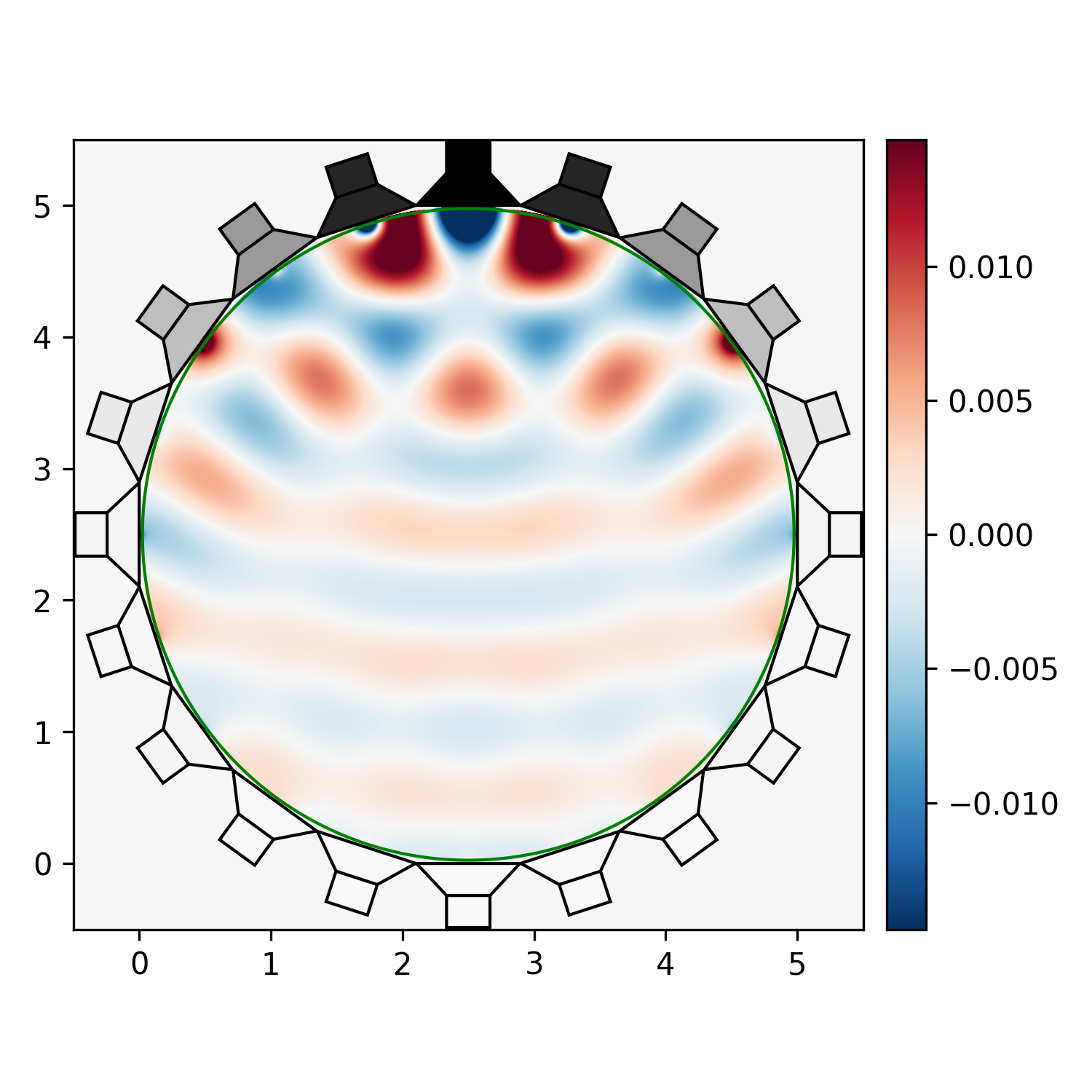}
        \caption{}
        \label{ex:L2-NF-u}
    \end{subfigure}\\
    \begin{subfigure}[t]{0.245\textwidth}
        \centering
        \includegraphics[width=\textwidth]{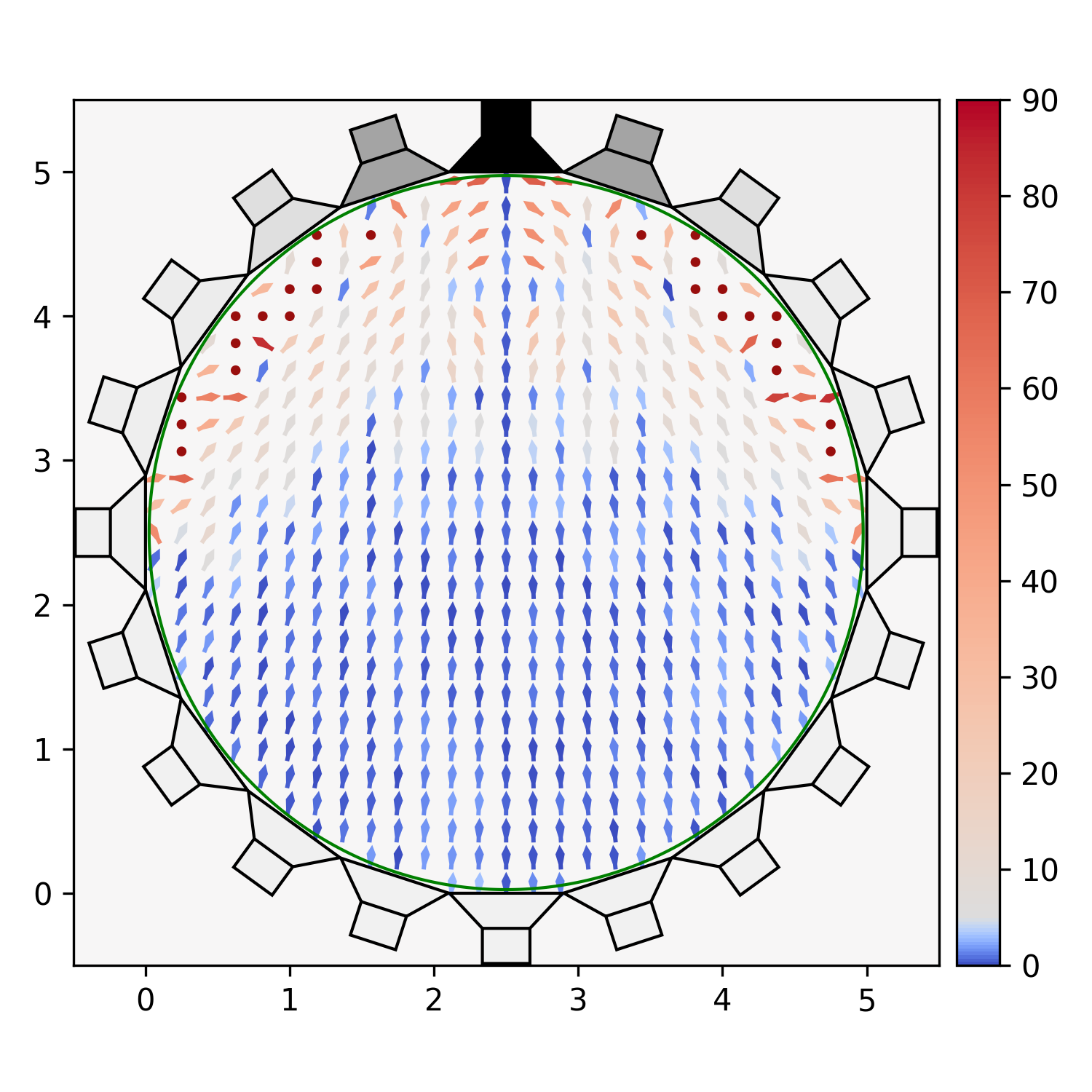}
        \caption{}
        \label{ex:SWEET-NF-L}
    \end{subfigure}
    \begin{subfigure}[t]{0.245\textwidth}
        \centering
        \includegraphics[width=\textwidth]{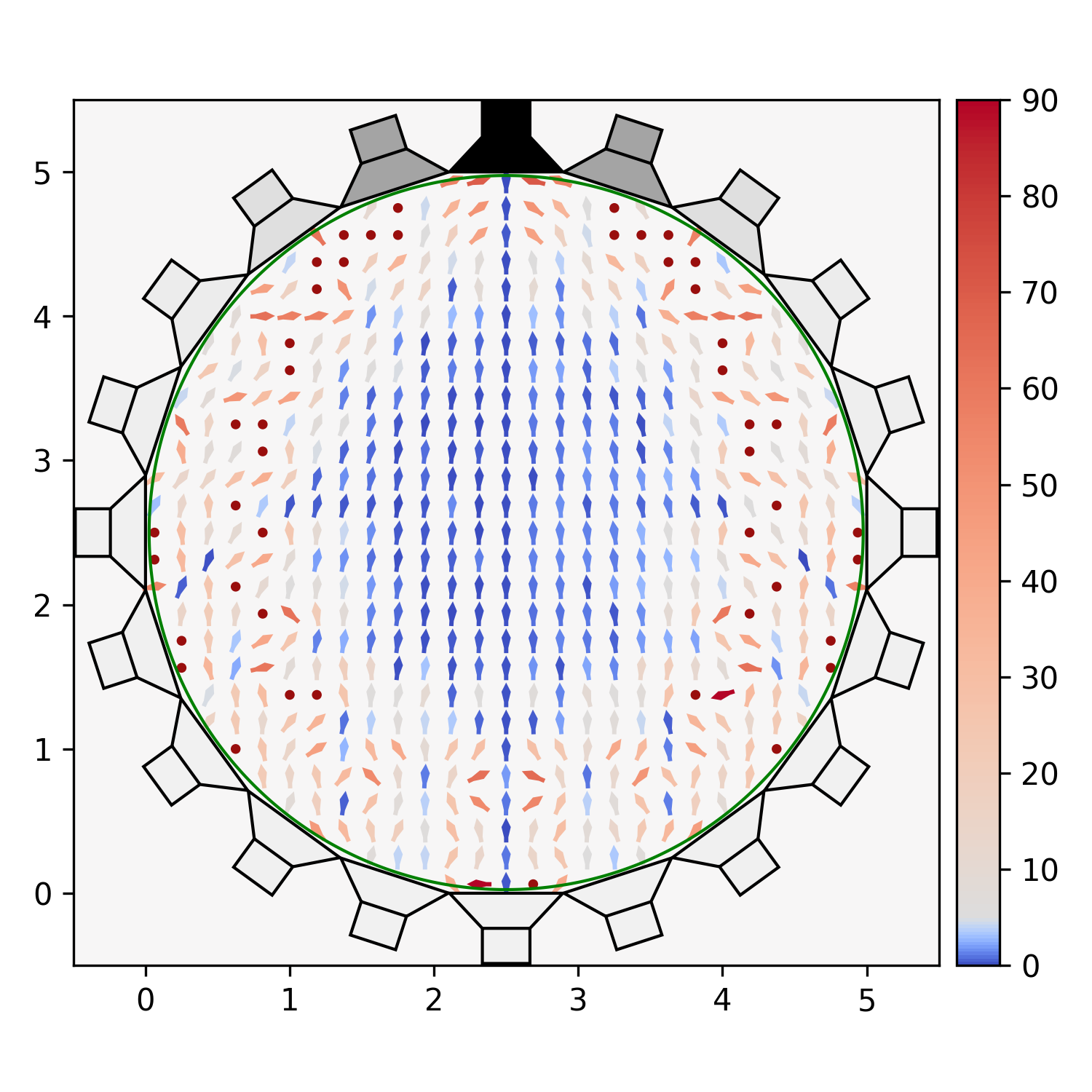}
        \caption{}
        \label{ex:NFCHOA-NF-L}
    \end{subfigure}
    \begin{subfigure}[t]{0.245\textwidth}
        \centering
        \includegraphics[width=\textwidth]{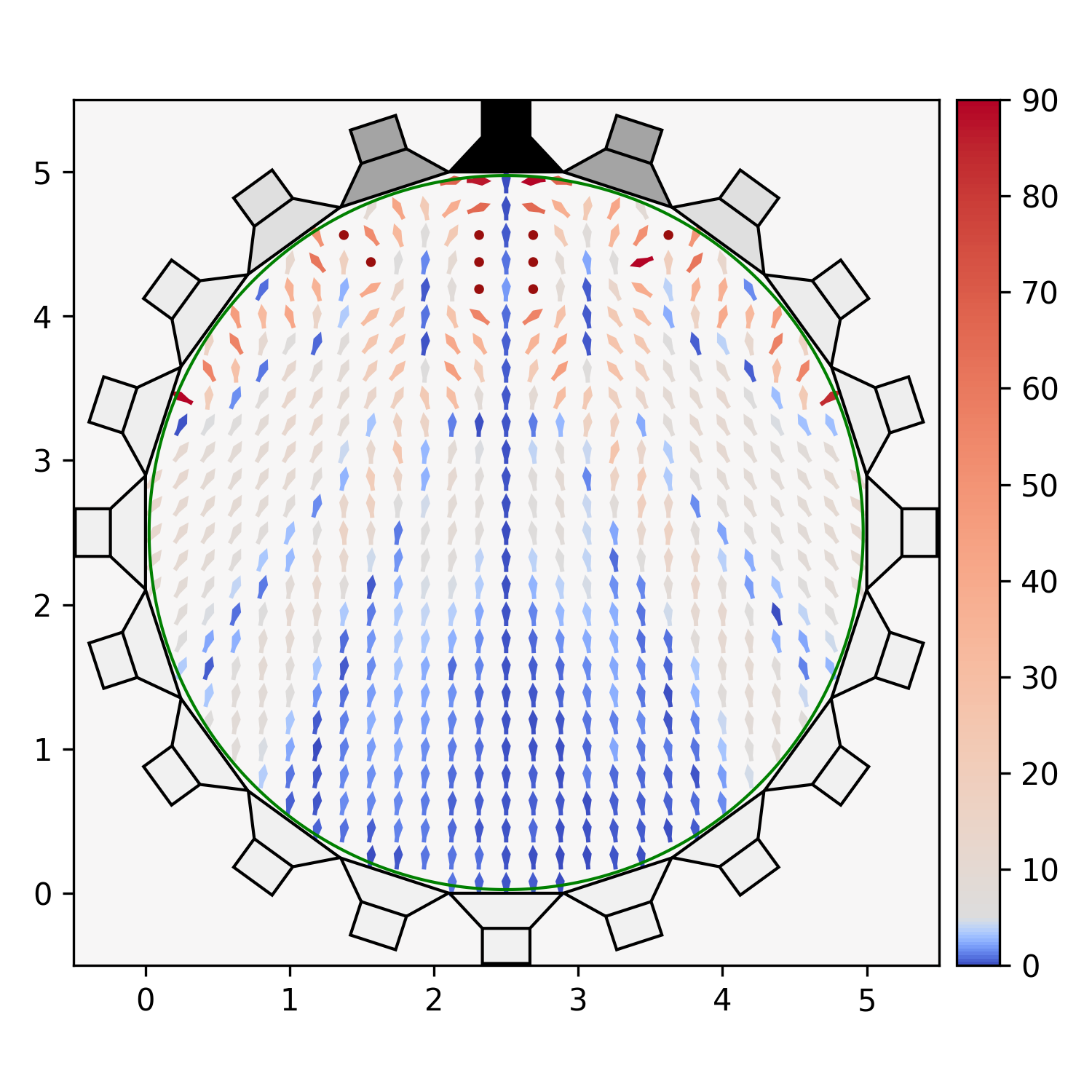}
        \caption{}
        \label{ex:WFS-NF-L}
    \end{subfigure}%
    \begin{subfigure}[t]{0.245\textwidth}
        \centering
        \includegraphics[width=\textwidth]{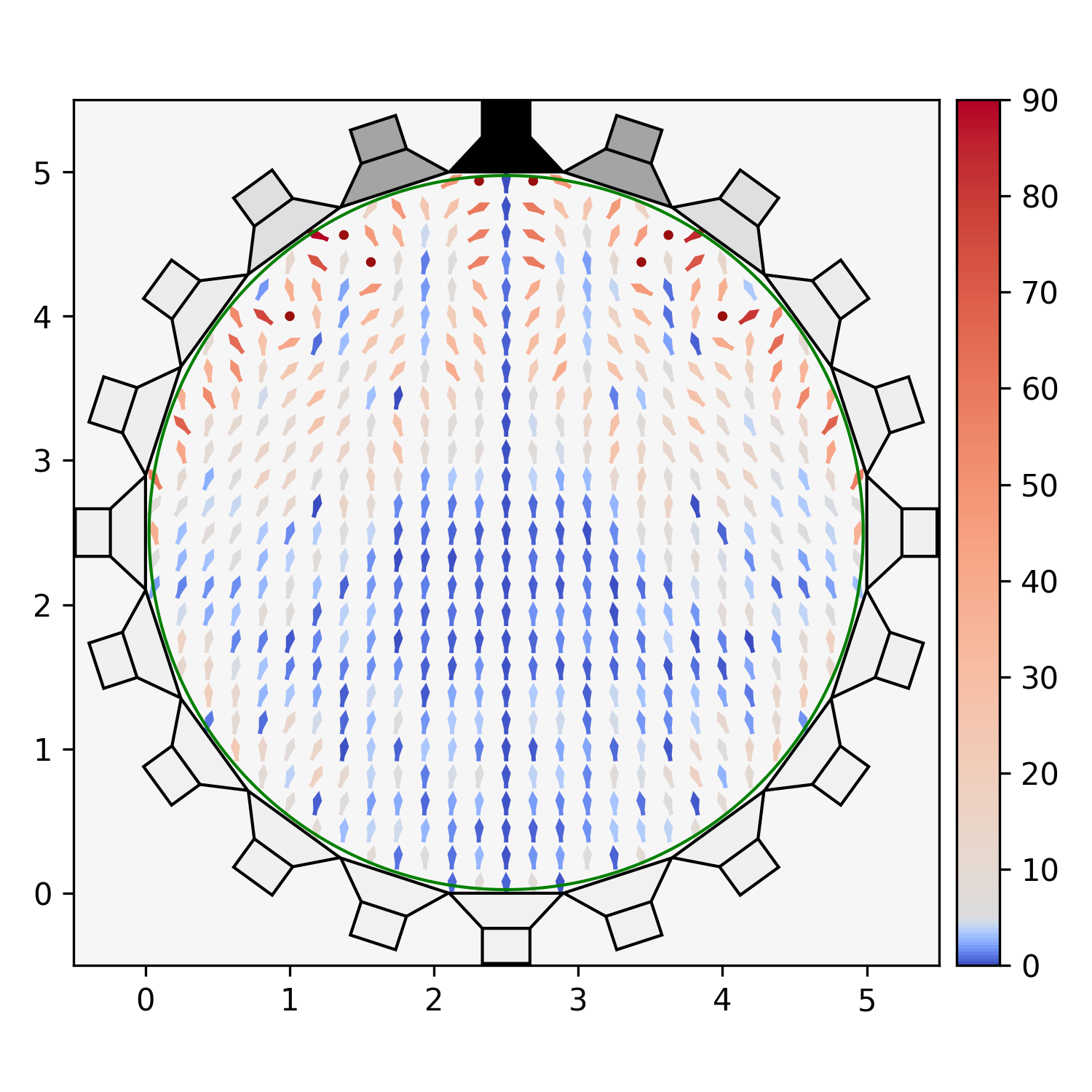}
        \caption{}
        \label{ex:L2-NF-L}
    \end{subfigure}\\
    \begin{subfigure}[t]{0.245\textwidth}
        \centering
        \includegraphics[width=\textwidth]{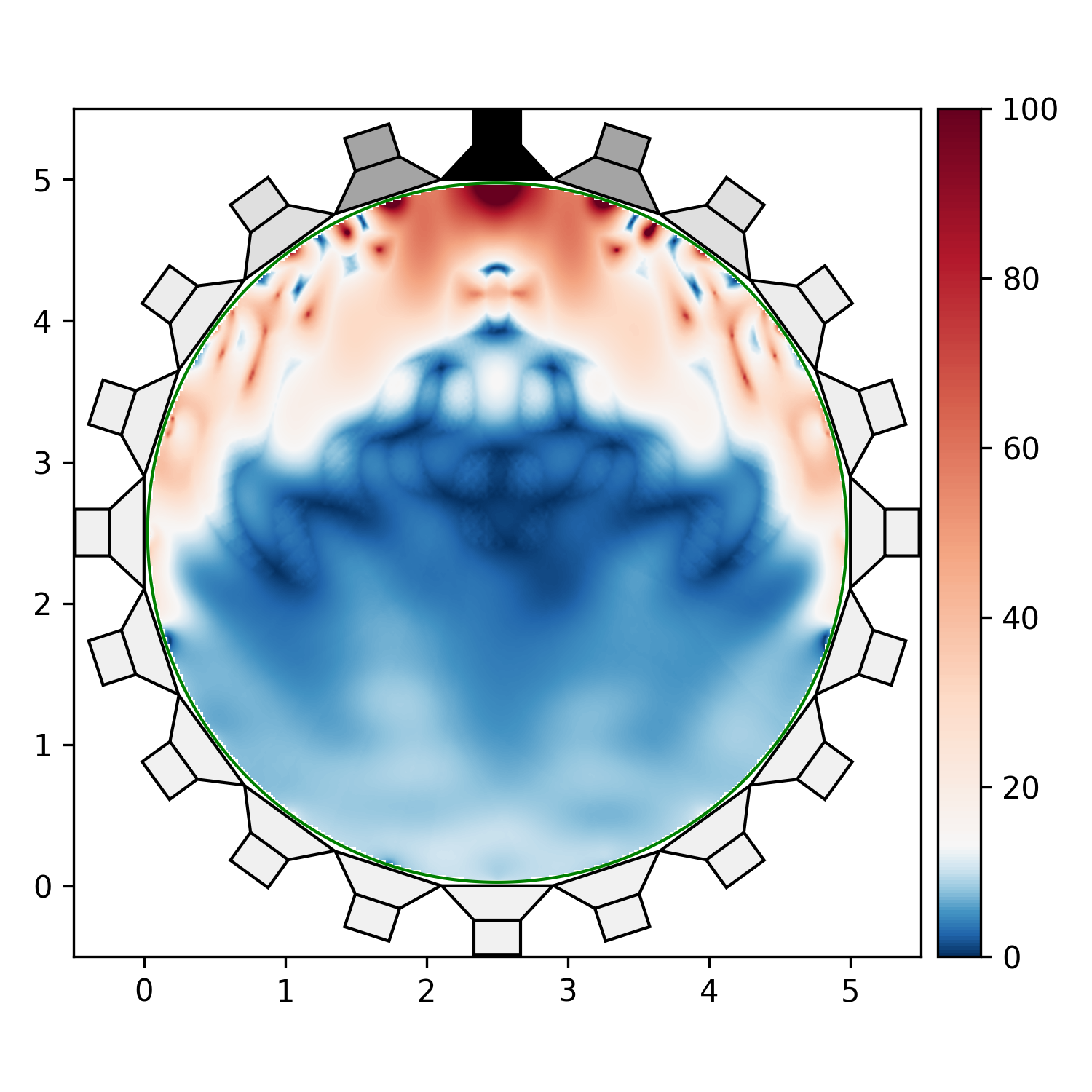}
        \caption{}
        \label{ex:SWEET-NF-C}
    \end{subfigure}%
    \begin{subfigure}[t]{0.245\textwidth}
        \centering
        \includegraphics[width=\textwidth]{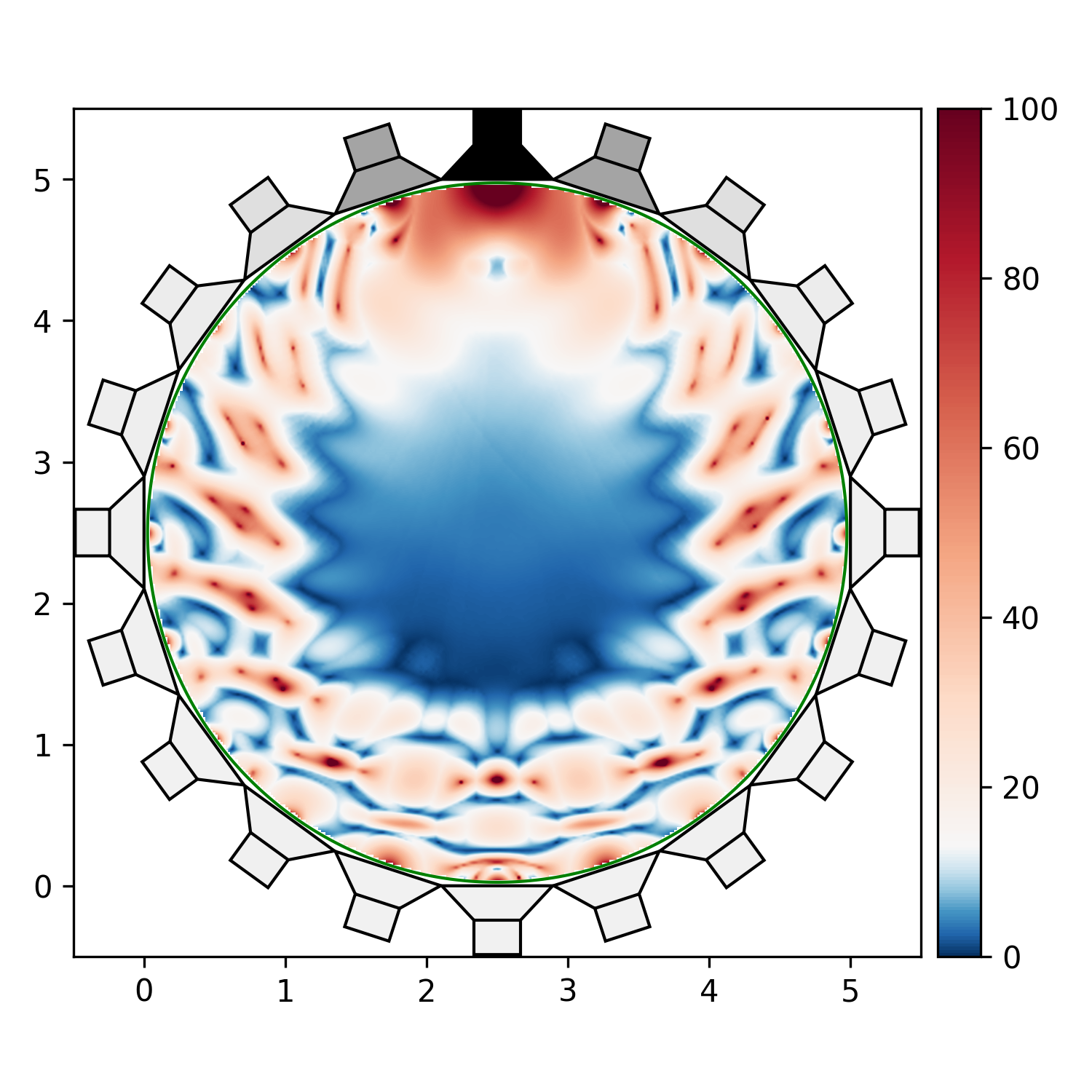}
        \caption{}
        \label{ex:NFCHOA-NF-C}
    \end{subfigure}
    \begin{subfigure}[t]{0.245\textwidth}
        \centering
        \includegraphics[width=\textwidth]{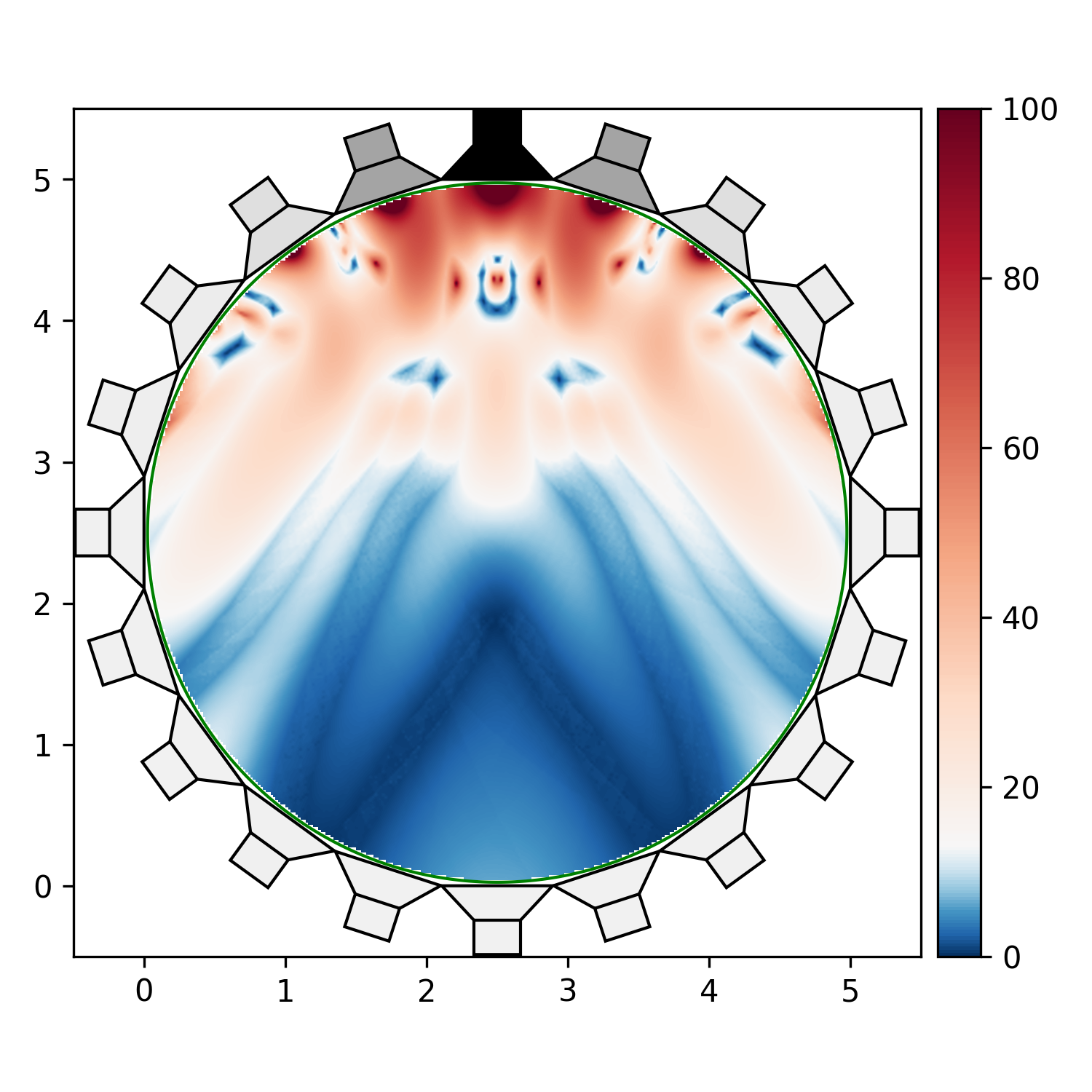}
        \caption{}
        \label{ex:WFS-NF-C}
    \end{subfigure}%
    \begin{subfigure}[t]{0.245\textwidth}
        \centering
        \includegraphics[width=\textwidth]{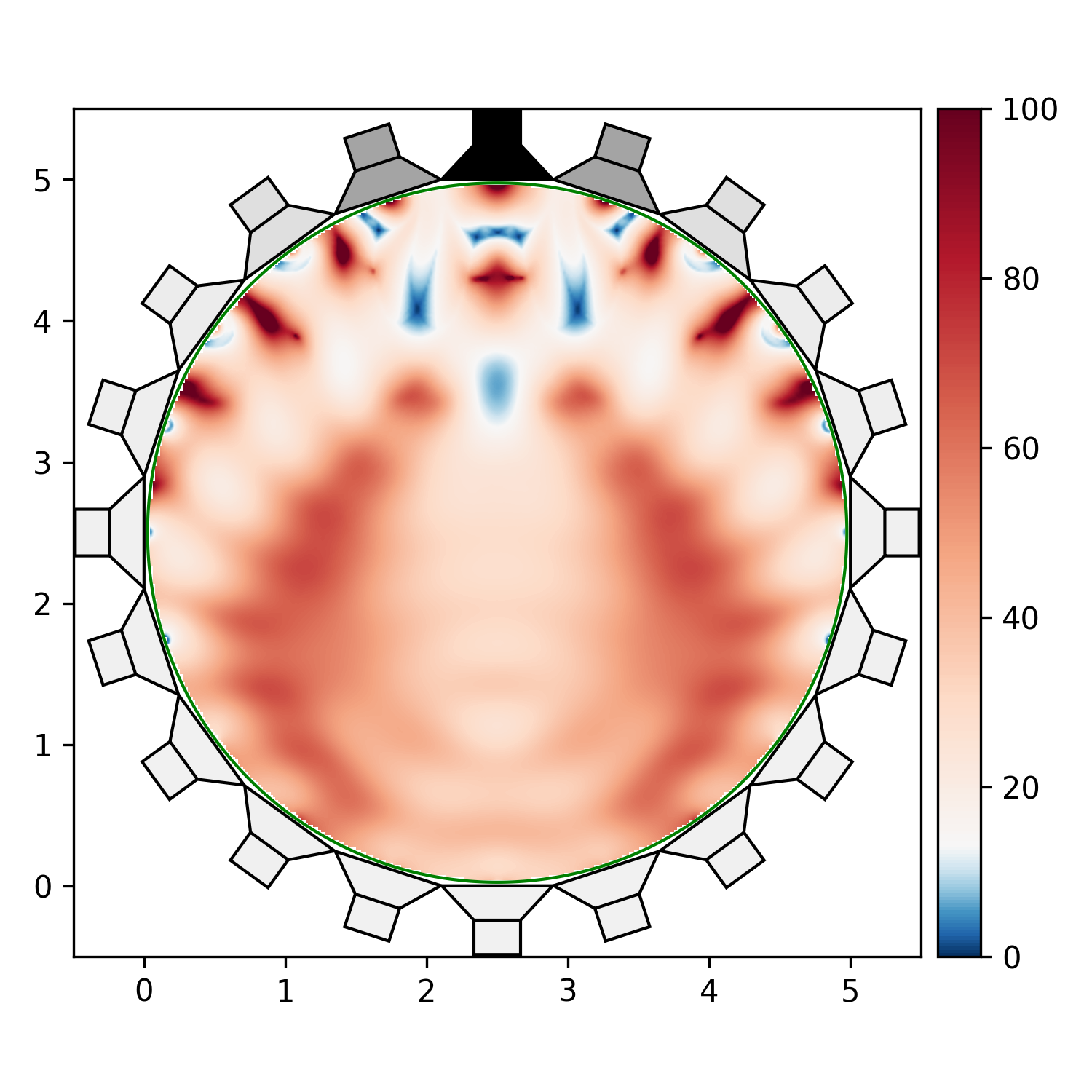}
        \caption{}
        \label{ex:L2-NF-C}
    \end{subfigure}%
    \caption{Near-field instance. {\em Rows:} Near-field \(\wh{u}(\fopt)\) (real part). Dietz's near-field azimuth localization, where the direction of the arrows represent the perceived localization whereas the color represents the deviation in degrees of the perceived localization from the desired one. McKenzie's near-field coloration (sones). {\em Columns:} SWEET-ReLU, NFC-HOA, WFS, and \(L^2\)-PMM.}
    \label{ex:NF-example}
\end{figure*}

\begin{figure*}[!htbp]
    \centering
    \begin{subfigure}[t]{0.245\textwidth}
        \centering
        \includegraphics[width=\textwidth]{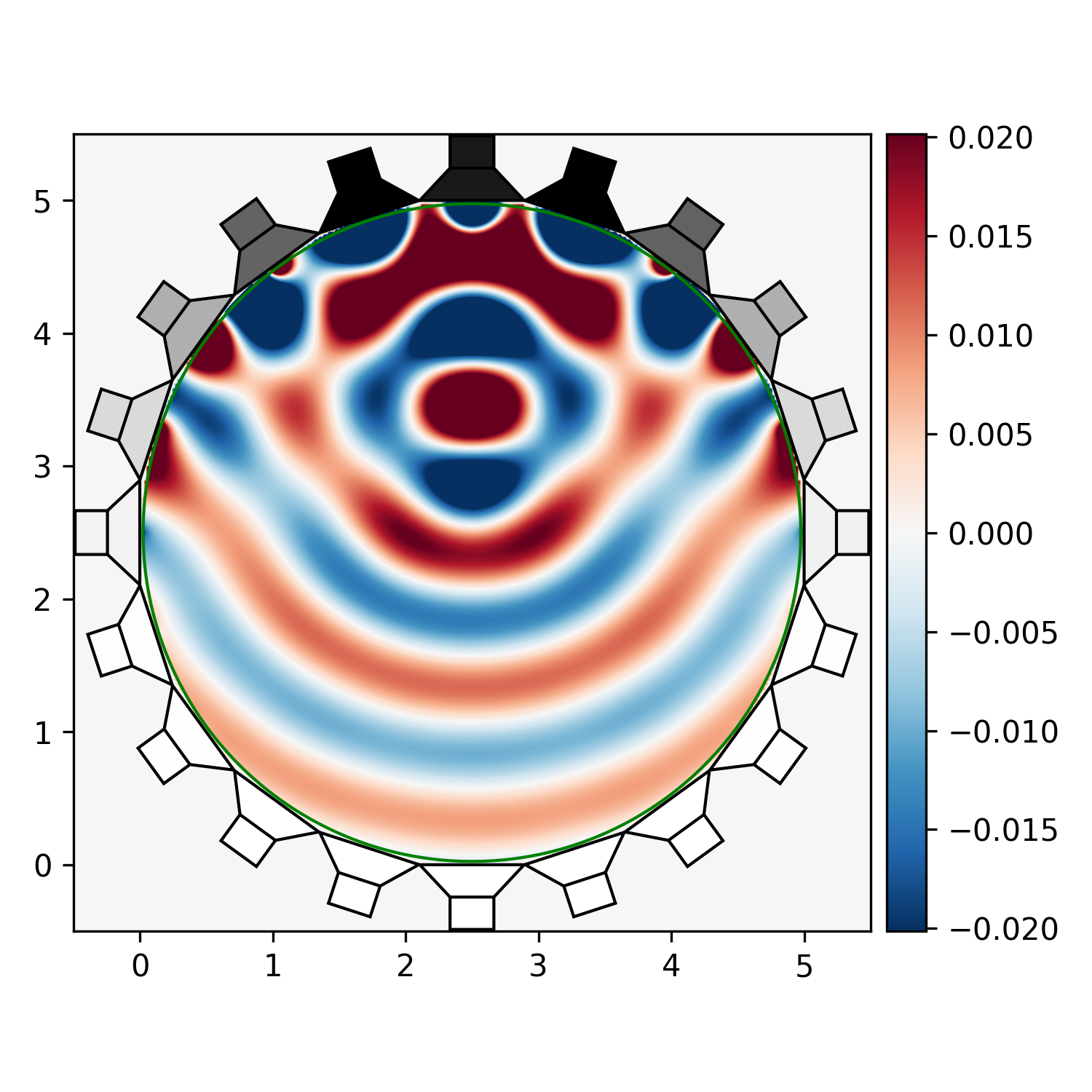}
        \caption{}
        \label{ex:SWEET-FS-u}
    \end{subfigure}%
    \begin{subfigure}[t]{0.245\textwidth}
        \centering
        \includegraphics[width=\textwidth]{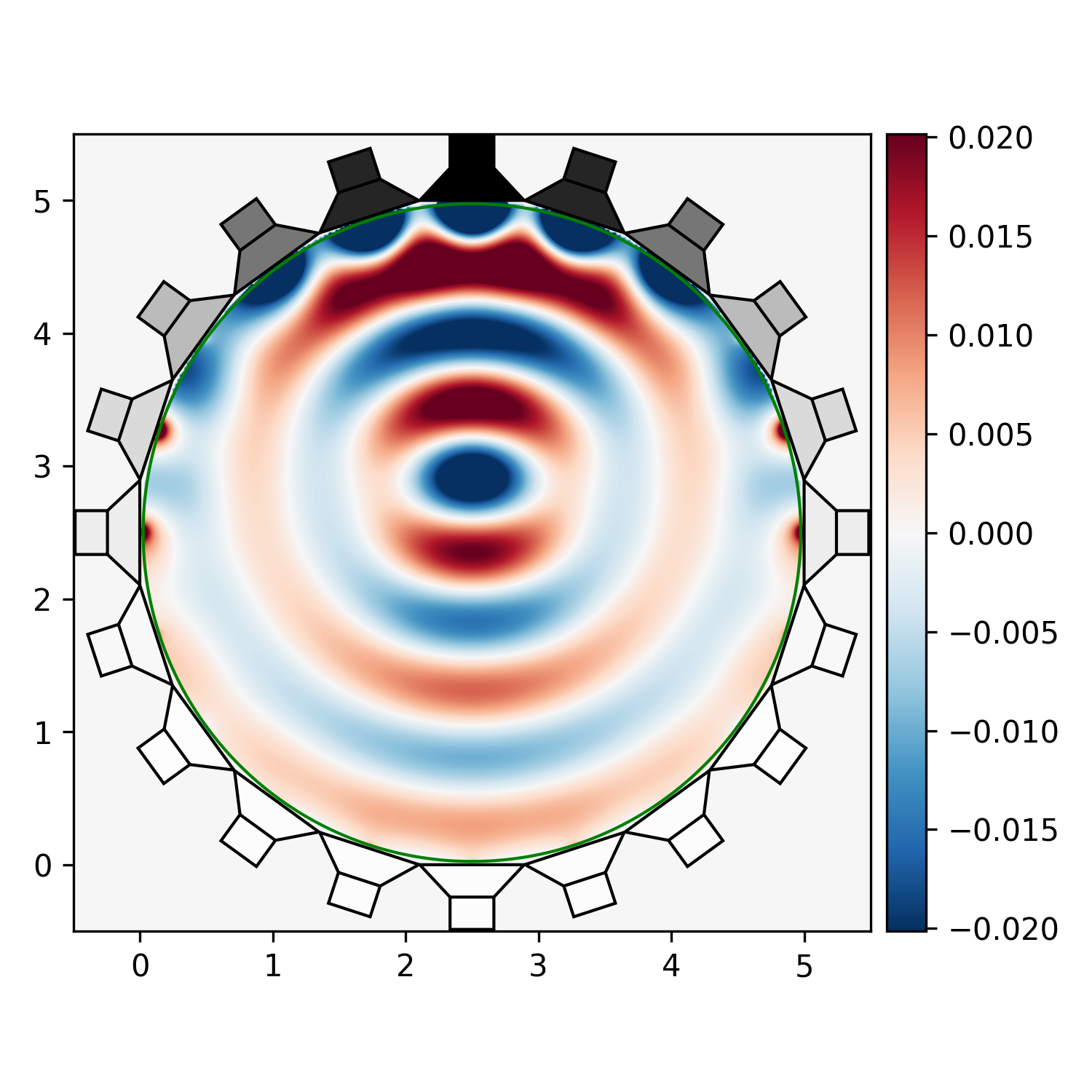}
        \caption{}
        \label{ex:NFCHOA-FS-u}
    \end{subfigure}%
    \begin{subfigure}[t]{0.245\textwidth}
        \centering
        \includegraphics[width=\textwidth]{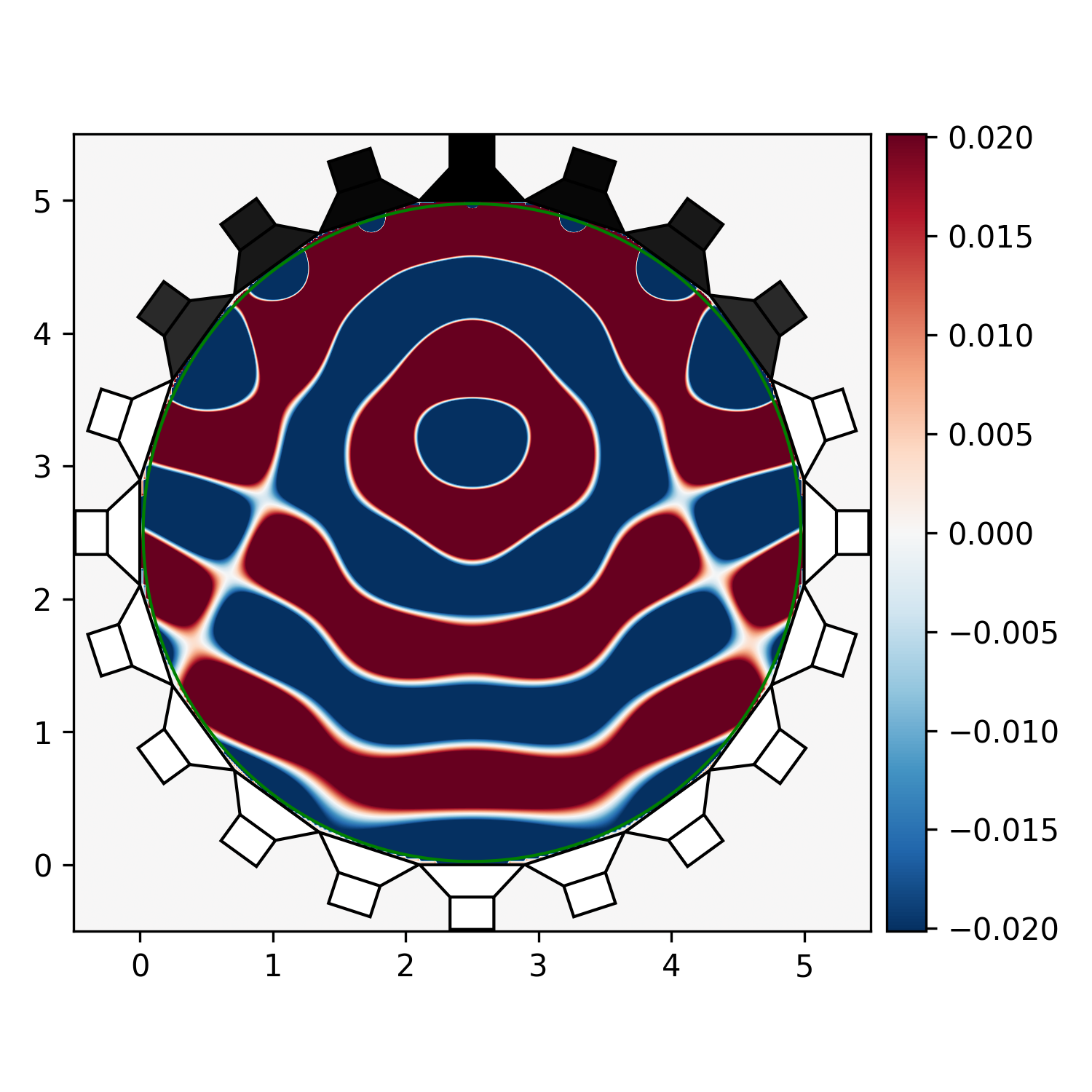}
        \caption{}
        \label{ex:WFS-FS-u}
    \end{subfigure}%
    \begin{subfigure}[t]{0.245\textwidth}
        \centering
        \includegraphics[width=\textwidth]{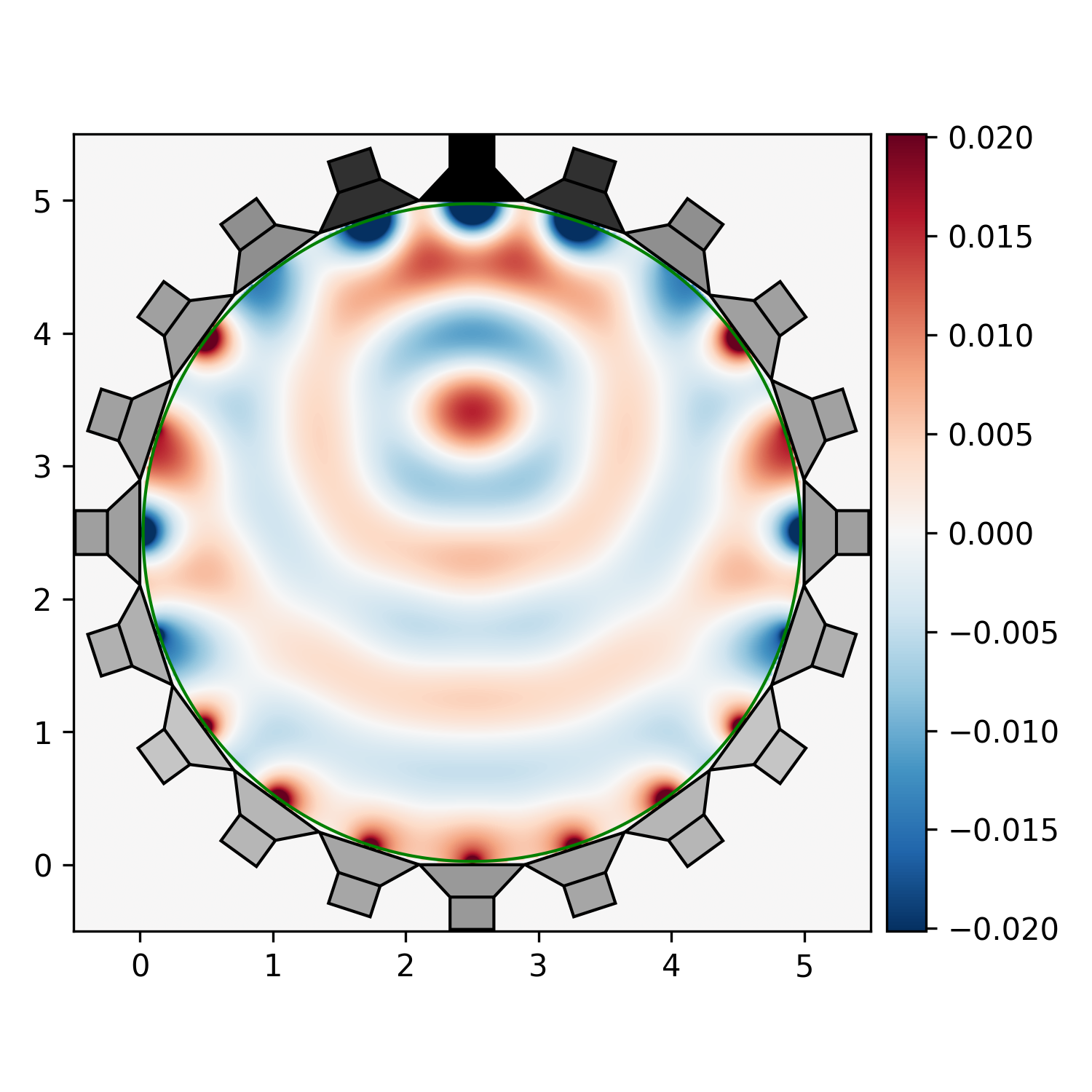}
        \caption{}
        \label{ex:L2-FS-u}
    \end{subfigure}\\
    \begin{subfigure}[t]{0.245\textwidth}
        \centering
        \includegraphics[width=\textwidth]{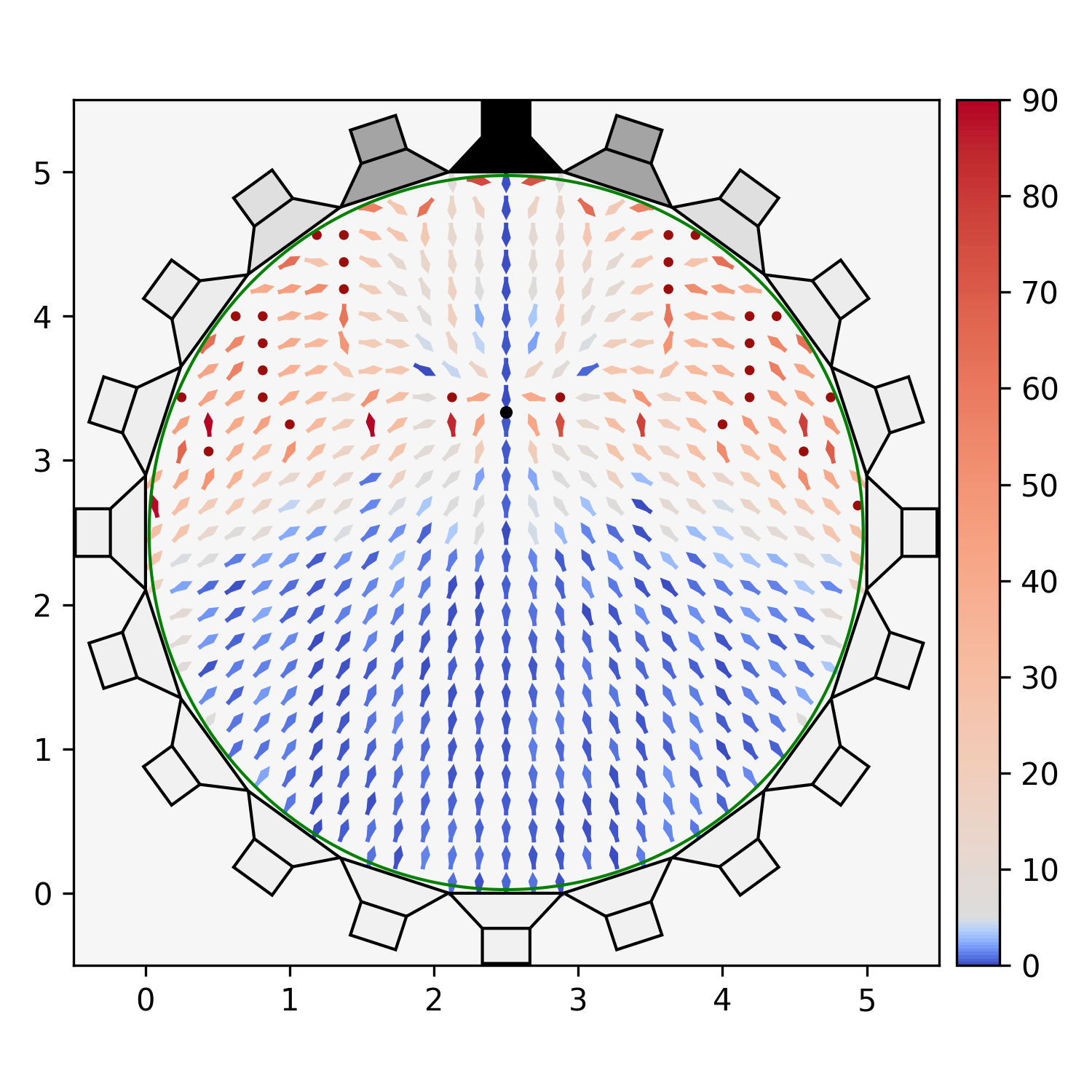}
        \caption{}
        \label{ex:SWEET-FS-L}
    \end{subfigure}
    \begin{subfigure}[t]{0.245\textwidth}
        \centering
        \includegraphics[width=\textwidth]{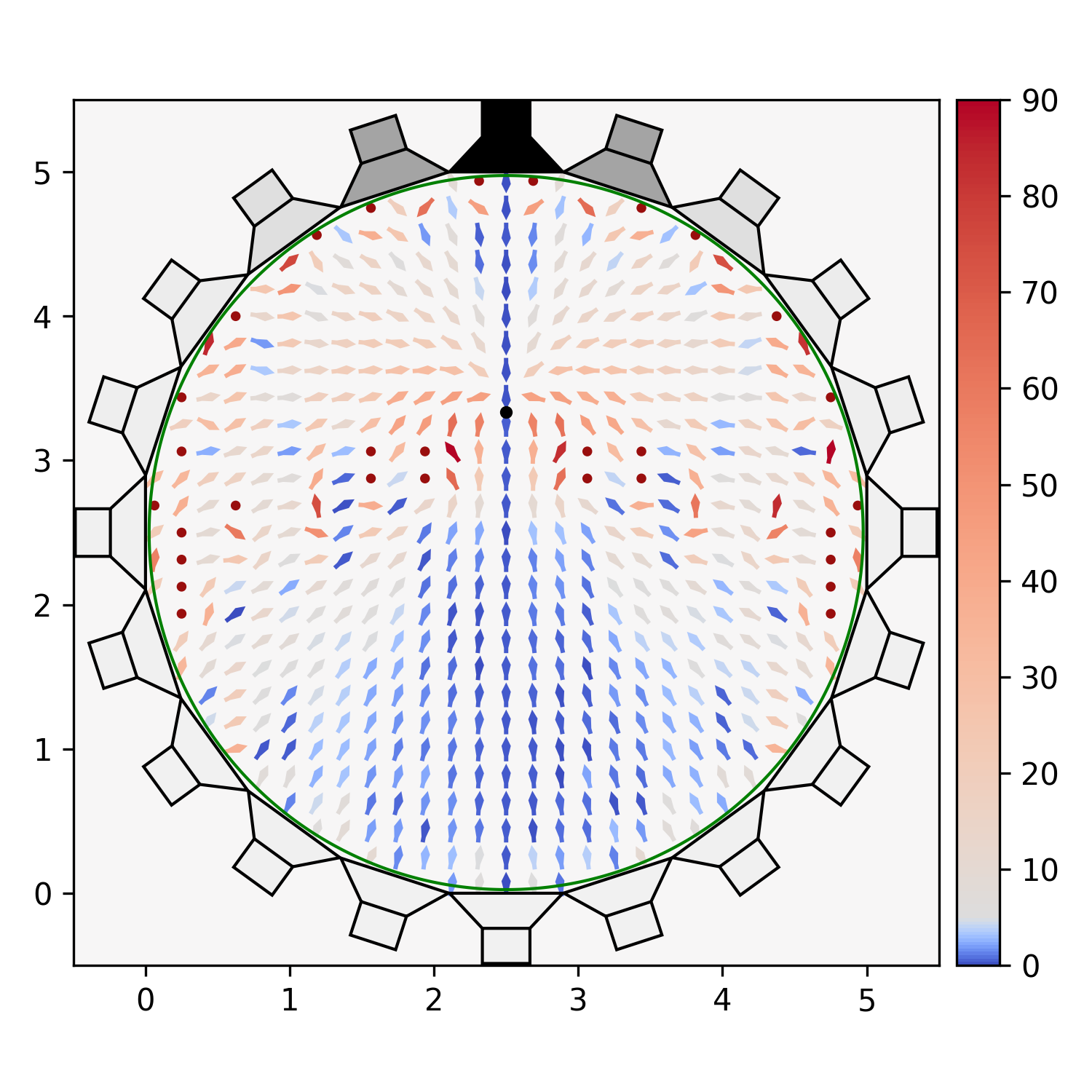}
        \caption{}
        \label{ex:NFCHOA-FS-L}
    \end{subfigure}
    \begin{subfigure}[t]{0.245\textwidth}
        \centering
        \includegraphics[width=\textwidth]{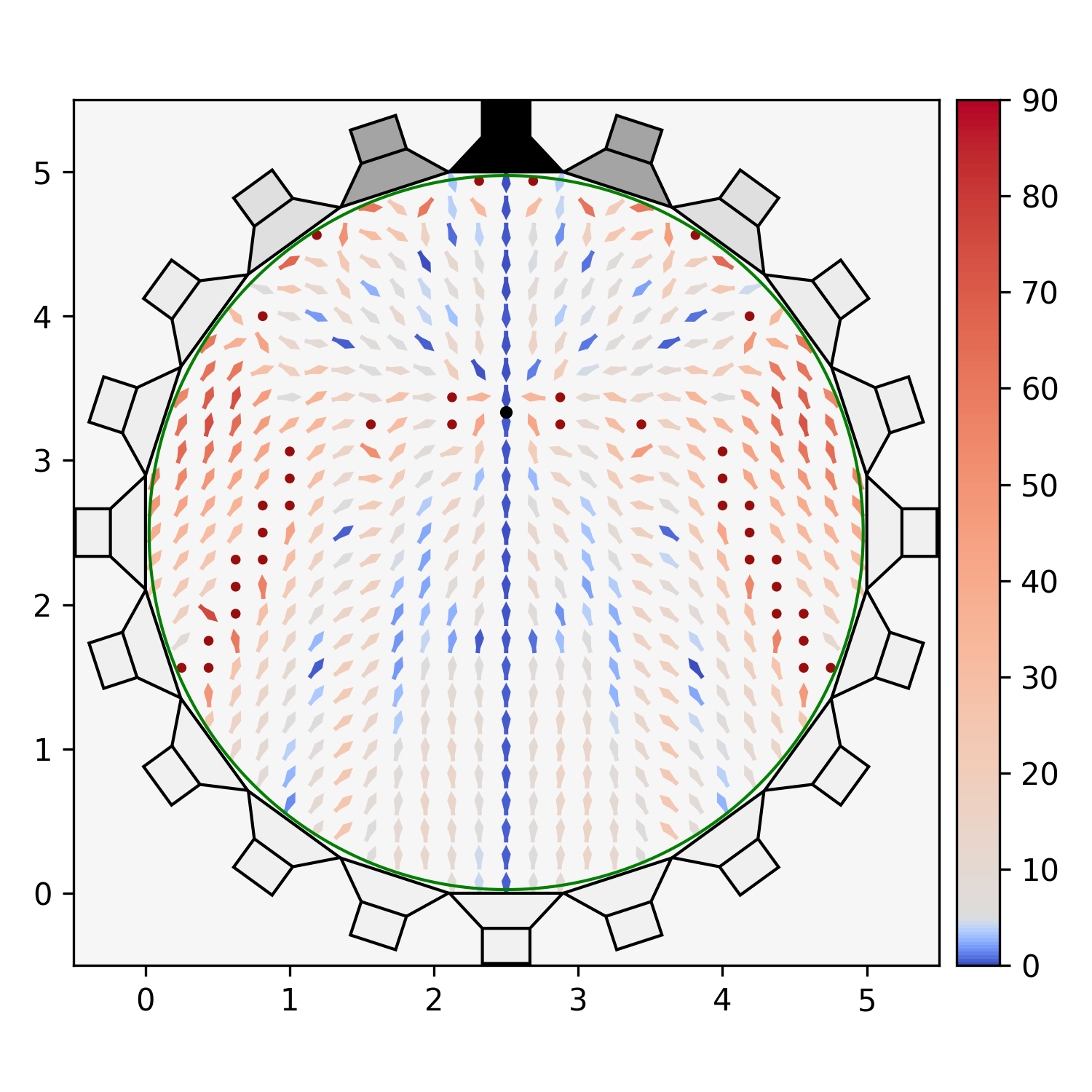}
        \caption{}
        \label{ex:WFS-FS-L}
    \end{subfigure}%
    \begin{subfigure}[t]{0.245\textwidth}
        \centering
        \includegraphics[width=\textwidth]{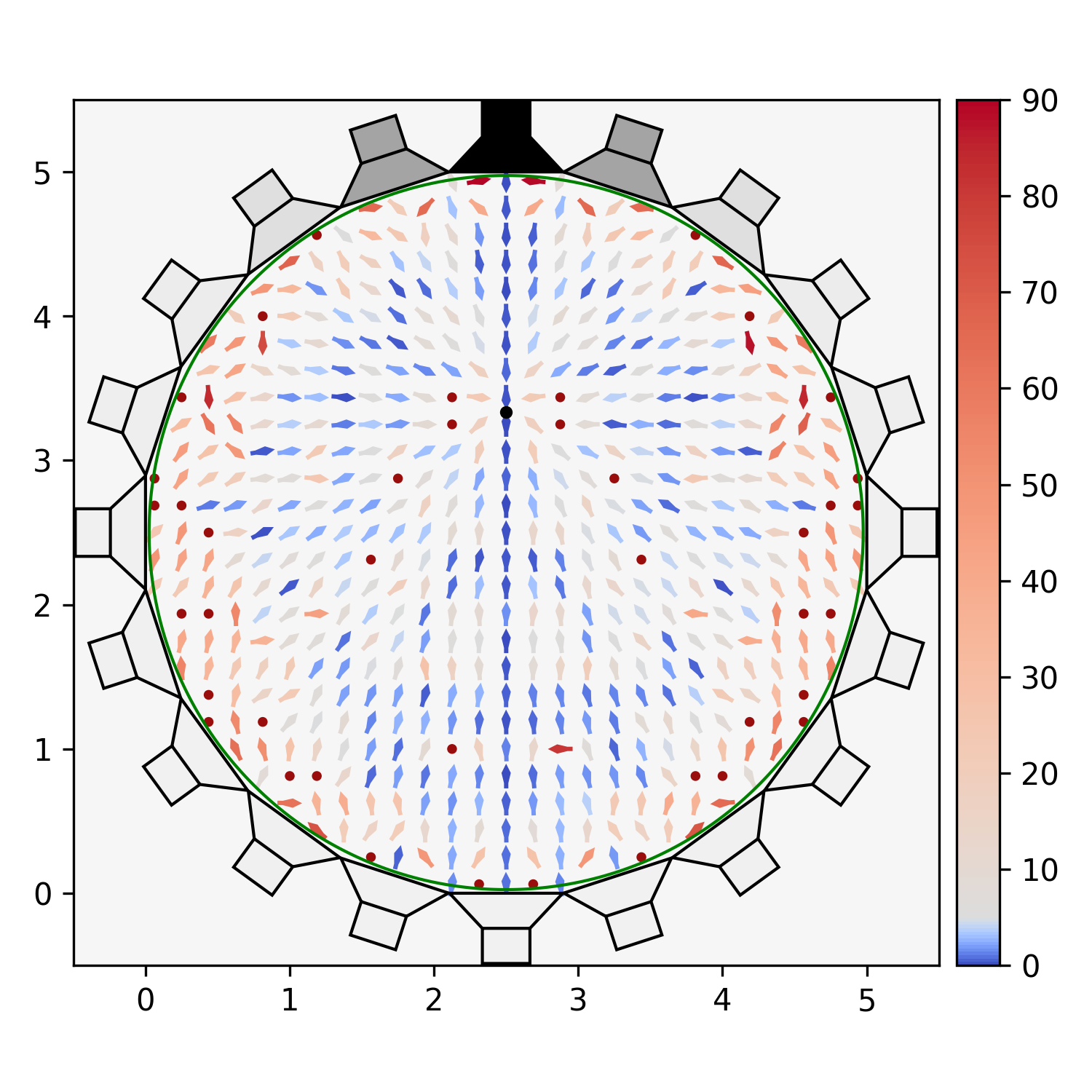}
        \caption{}
        \label{ex:L2-FS-L}
    \end{subfigure}\\
    \begin{subfigure}[t]{0.245\textwidth}
        \centering
        \includegraphics[width=\textwidth]{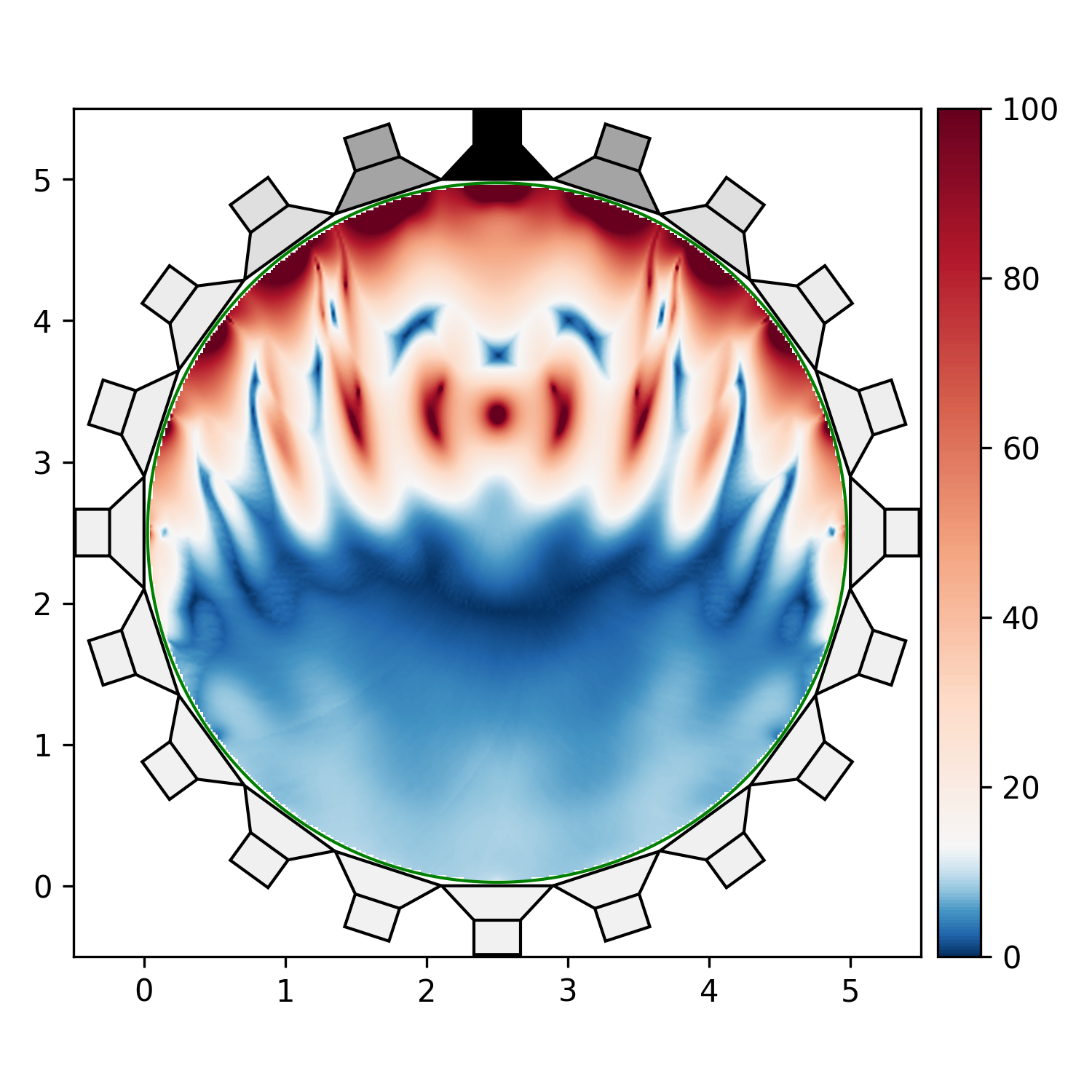}
        \caption{}
        \label{ex:SWEET-FS-C}
    \end{subfigure}%
    \begin{subfigure}[t]{0.245\textwidth}
        \centering
        \includegraphics[width=\textwidth]{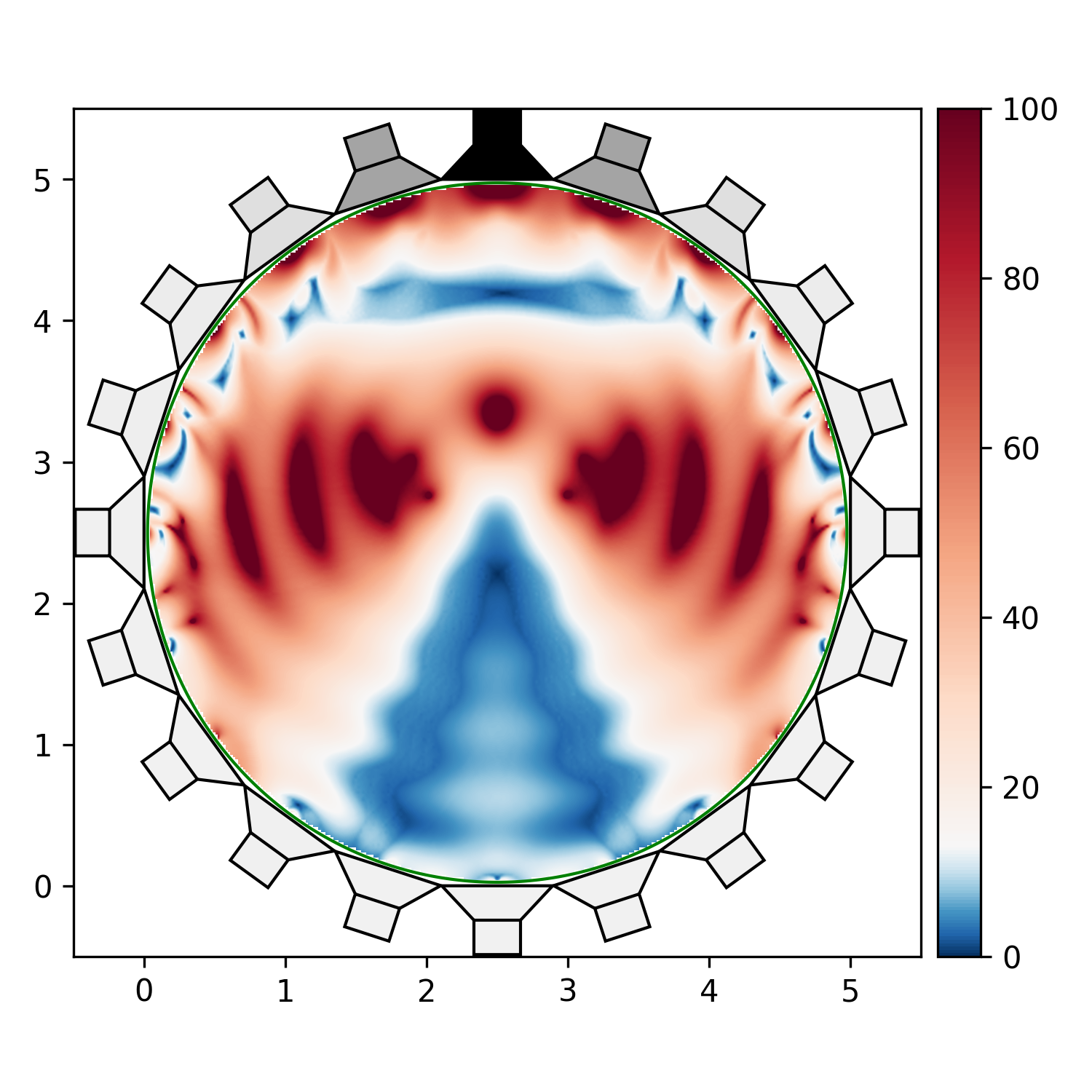}
        \caption{}
        \label{ex:NFCHOA-FS-C}
    \end{subfigure}
    \begin{subfigure}[t]{0.245\textwidth}
        \centering
        \includegraphics[width=\textwidth]{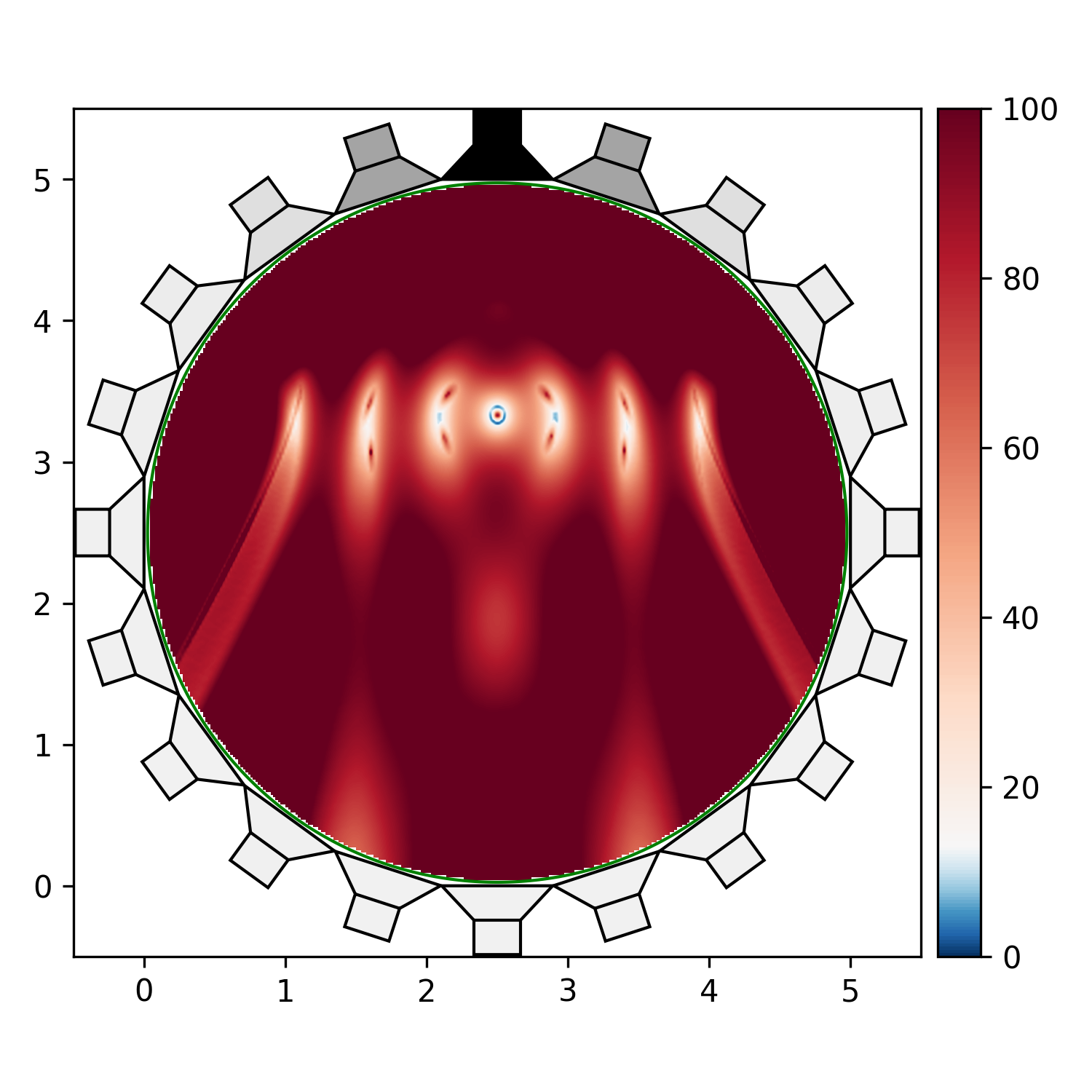}
        \caption{}
        \label{ex:WFS-FS-C}
    \end{subfigure}%
    \begin{subfigure}[t]{0.245\textwidth}
        \centering
        \includegraphics[width=\textwidth]{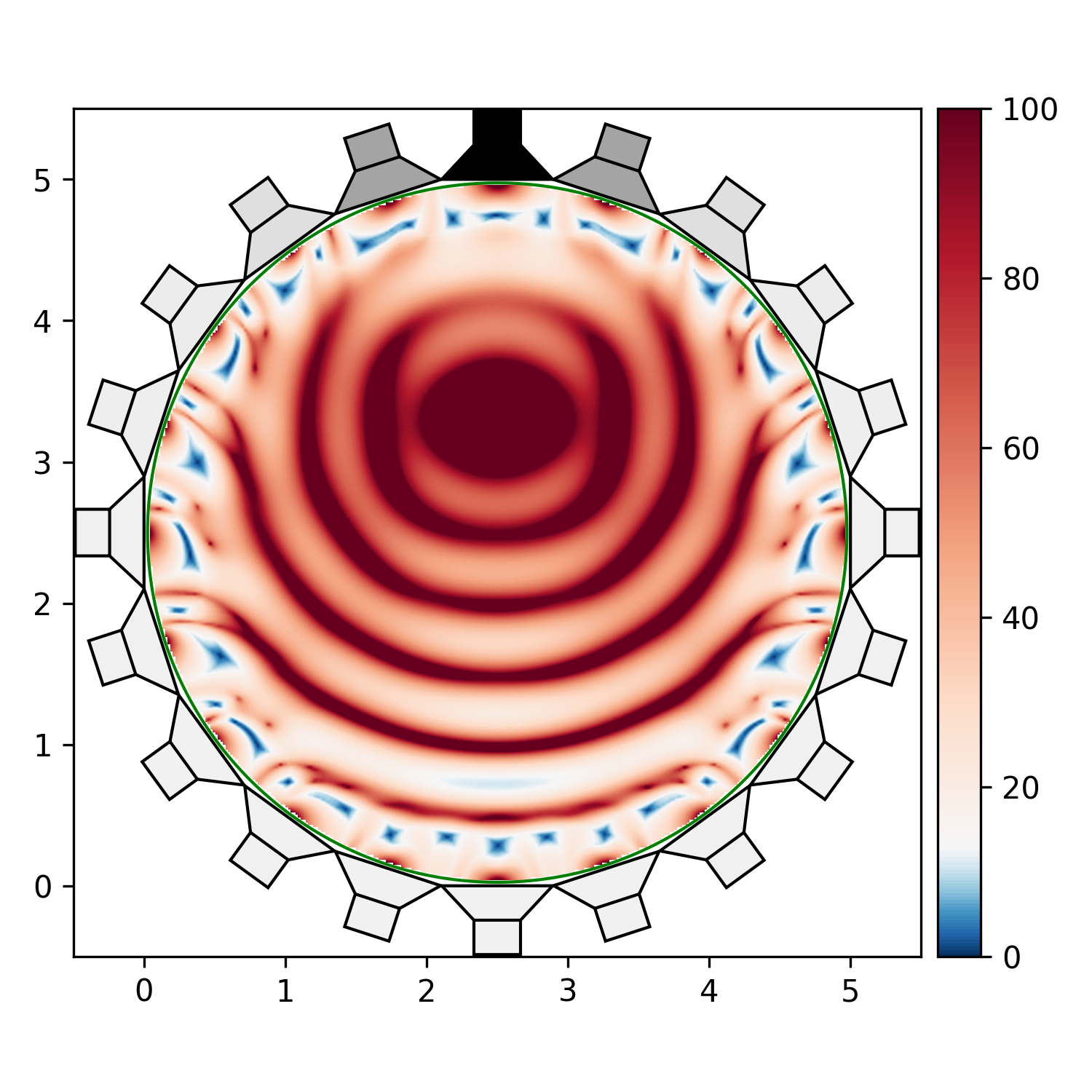}
        \caption{}
        \label{ex:L2-FS-C}
    \end{subfigure}%
    \caption{Focus-source instance. {\em Rows:} Near-field \(\wh{u}(\fopt)\) (real part). Dietz's near-field azimuth localization, where the direction of the arrows represent the perceived localization whereas the color represents the deviation in degrees of the perceived localization from the desired one. McKenzie's near-field coloration (sones). {\em Columns:} SWEET-ReLU, NFC-HOA, WFS, and \(L^2\)-PMM.}
    \label{ex:FS-example}
\end{figure*}

For the experiments we compare the performance of our method with the state-of-the-art methods WFS, NFC-HOA and \(L^2\)-PMM in terms of its azimuth localization and coloration performance, as they are the main features of the auditory scene for spatial sound. We use Dietz's model to measure binaural azimuth localization~\cite{dietz2011auditory} and McKenzie's model for binaural coloration~\cite{mckenzie2022predicting}. The setup for the numerical experiments consists of an equispaced arrangement of 20 loudspeakers lying on a circle of radius 2.5 m and at \(\pi/4\approx 0.785\) m from each other. The region of interest \(\Omega\) is a concentric circle of radius 2.4975 m (Fig.~\ref{ex:instances}). The speed of sound is \(c_s = 343\) m/s. Two instances of this setup were evaluated: the {\em near-field instance}, where the source outside the arrangement at 5 m of its center with \(\wh{c}_0(\fopt)=68\) dB, and the {\em focus-source instance}, where the source is inside the arrangement at 0.82 m of its center (Fig.~\ref{ex:FS-instance}) with \(\wh{c}_0(\fopt)=60\) dB. In both cases, \(\fopt=343\) Hz. To construct the perceptual maps \(D\) and \(L\) we assume that the listeners are looking at the virtual sound source, which implies that \(\Theta = \Theta(x)=\{\text{ang}(x_0 - x)\}\).

The SWEET-ReLU algorithm and the \(L^2\)-PMM method were implemented in Python 3.8 using the CVXPY package, version 1.1.15~\cite{diamond2016cvxpy, agrawal2018rewriting} 
and MOSEK, version 9.3.6~\cite{mosek}. The simulations of 2.5D NFC-HOA and 2.5D WFS were done with the Sound Field Synthesis Toolbox (SFST), version 3.2~\cite{wierstorf2012sound}, unless the focus-source 2.5D NFC-HOA simulations, which were done following the {\em angular weighting approach}~\cite{ahrens2009spatial}. The HRTFs used to simulate \(\bar{u}\) and \(\bar{u}_0\) were constructed as the circulant Fourier transform of the elements of the TU-Berlin HRIR free data base~\cite{wierstorf2011free}. Dietz's and McKenzie's models were implemented using Matlab 2022a with the Auditory Modelling Toolbox (AMT), version 1.1~\cite{majdak2021amt}.

For the implementation of the HRTFs, the 3 meters radial distance HRIRs of the data set were radially extrapolated using delay and attenuation, according to the map 
\[
    d\mapsto \frac{3}{d}\text{HRIR}\left(t-\frac{d-3}{c_s}\right),
\]
where \(d\) is the desired radial distance. It should be noticed that for short distances, e.g. less than 1 m, ILDs vary significantly with distance~\cite{wierstorf2011free}. Hence, the experiments might be enhanced by using a complete HRTF data set. For the implementation of Dietz's model, since we treated (pseudo) sinusoidal signals, the interaural phase differences (IPDs) and ILDs of the reproduced signals were extracted manually following the convention of~\cite{dietz2011auditory} as
\begin{align*}
    \text{ILD}(\bar{u}(x,\theta)) &= \frac{20}{c_s}\log_{10}\left(\frac{|u^r(x,\theta)|}{|u^\ell(x,\theta)|}\right),\\
    \qquad \text{IPD}(\bar{u}(x,\theta))&=\frac{\text{arg}(u^\ell(x,\theta)u^{r}(x,\theta)^*)}{-2\pi i\fopt}.
\end{align*}
Then, the {\em unwrapped} ITDs were obtained from the ILDs and IPDs using the \texttt{dietz2011\_unwrapitd.mat} function. The estimated azimuth localization is obtained by plugging the ITDs into the \texttt{itd2angle.mat} function, which uses the \texttt{itd2angle\_lookuptable.mat} table. The latter table is constructed with the same HRTFs that we consider for the simulation of \(\bar{u}\) and \(\bar{u}_0\). For the implementation of McKenzie's model, the binaural signals were transformed to time-domain using a sampling frequency of 44100 Hz and a number of samples of 256. As proposed in~\cite[Chapter 5.6.2]{ahrens2012analytic}, the implementation of the angular weighting for NFC-HOA in focus-source instances was defined, for the \(n\)-th mode, as 
\[
    w_n(\fopt) = \begin{cases}
        \frac{1}{2}\left(\cos\left(\frac{n}{\lfloor\frac{\omega}{c} d\rfloor}\pi\right)+1\right) & \text{for }n\leq \frac{\omega}{c}d\\
        0 & \text{elsewhere},
    \end{cases}
\]
where \(d\) is the distance from the source to the center of the room, and \(\omega=2\pi\fopt\). For the implementation of the SWEET method, we have chosen \(\eps_i\) adaptively with percentile \(p=99\). For SWEET and \(L^2\)-PMM a uniform discretization of 2348 points was used for \(\Omega\) at a distance of at most 0.09 m, achieving more than 30 points per wavelength.

To compare the performance of the methods, we measure the size of the {\em localization sweet spot} (LSS) and {\em coloration sweet spot} (CSS). We former is the region where the perceived azimuth localization measured by Dietz's model deviates no more than 5 degrees from the desired one, whereas the latter is the region where the coloration measured by McKenzie's model is lower or equal than 13 sones. As discussed in~\cite{mckenzie2022predicting}, a coloration lower than 13 sones is strongly correlated with empirical MUSHRA tests with more than 80 out of 100 points. It should be noted that, since we analyse (pseudo) sinusoidal signals of frequency 343 Hz, in these experiments the CSS and LSS are constituted by the points where the interaural amplitude or phase (respectively) of the binaural signal are correctly reconstructed.

The CSS and LSS generated by each method for the near-field and focus-source instances are shown in Fig.~\ref{ex:NF-example} and~\ref{ex:FS-example} respectively, and their size is shown in Table~\ref{ex:sweet-table}. The blue zones of Figs~\ref{ex:SWEET-NF-L}-l, Figs~\ref{ex:SWEET-FS-L}-l represent the sweet spots of each case. The estimated localization in Figs~\ref{ex:SWEET-NF-L}-h, Figs~\ref{ex:SWEET-FS-L}-h is shown for deviations of 0 to 90 degrees from the desired one. Greater deviations are represented by a dot with no direction.
\vspace{-8pt}
\begin{table}[!htbp]
    \centering
    \begin{tabular}{||c c c c c||}
         \hline
         & SWEET & NFC-HOA & WFS & \(L^2\)-PMM\\
         \hline
         \hline
         NF CSS & 70.8\% & 51  \%  & 55.6\%   & 3.3\%\\
         NF LSS & 68.4\% & 42.9\%  & 47.9\% & 52.3\%\\
         FS CSS & 58.7\% & 27.8\%  & 0.1\%    & 5.2\%\\
         FS LSS & 54  \% & 42.9\%  & 16.2\% & 40.1\%\\
         FS LSS (DH) & 50.8\% & 37.8\% & 12.5\% & 28.9\% \\
         \hline
    \end{tabular}
    \vspace{8pt}
    \caption{CSS and LSS over \(\Omega\) fractions in Near-field (NF) and Focus-Source (FS)] instances.
    [(DH) disregards the convergent halfspace].
    }
    \label{ex:sweet-table}
\end{table}

For the near-field instance, both the LSS and CSS generated by our method are more than 20 and 10 points (respectively) larger than that generated by any other method. The LSS and CSS generated by NFC-HOA (Figs.~\ref{ex:NFCHOA-NF-C}, \ref{ex:NFCHOA-NF-L}) are centered, whereas that generated by WFS (Figs.~\ref{ex:WFS-NF-C}, \ref{ex:WFS-NF-L}) are localized farther away from the source. This is correlated with their degradation of the sound field, which is consistent with the analysis in~\cite{daniel2003further}. Moreover, their LSS are consistent with the empirical results exposed in~\cite{wierstorf2017assessing}: the perceived localization for NFC-HOA degrades away from the center, whereas for WFS the perceived localization is fairly good over almost all the listening region. However, we believe that the perceived localization for WFS and NFC-HOA behaves slightly worse here than in~\cite{wierstorf2017assessing} because we analyze sinusoidal signals instead of Gaussian white noise, which has uniform spectral content. In contrast, the LSS and CSS generated by our method (Figs.~\ref{ex:SWEET-NF-C}, \ref{ex:SWEET-NF-L}) behave like those generated by WFS, but almost encompasses those generated by NFC-HOA. The LSS of \(L^2\)-PMM (Fig.~\ref{ex:L2-NF-L}) is larger than that of WFS and NFC-HOA, but its CSS (Fig.~\ref{ex:L2-NF-C}) is almost negligible. This is consistent with the sound wave \(u\) produced with \(L^2\)-PMM (Fig.~\ref{ex:L2-NF-u}) as the spatial phase of the signals is fairly well reconstructed, whereas its amplitude is too small. 

In the focus-source instance, due to reasons of causality, theoretically any method can achieve the correct reproduction of the direction of propagation of \(u_0\) in one half-space defined by \(\set{x_i}_{i=0}^{\Ns}\), where the sound field diverges from the focus-source position~\cite{ahrens2008focusing}. In the other half-space the reproduced wave field converges towards the location of the virtual source. As shown in Figs.~\ref{ex:SWEET-FS-L}-h, the LSS of all the methods comprise a portion of the converging part of the sound field. This is possible because the interaural phase of the binaural sinusoidal signals is correctly recreated at those points. However, we believe that in more complex scenarios, involving multi-frequency signals and allowing the listener to turn her head, e.g., considering a larger \(\Theta\), it would not be possible to recreate correctly and consistently the localization illusion. As shown in~\cite[Chapter 5.6]{ahrens2012analytic} {\em ``for listeners located in the converging part of the sound field, the perception is unpredictable since the interaural cues are either contradictory or change in a contradictory way when the listener moves the head.''} Table~\ref{ex:sweet-table} contains the fraction of the LSS over \(\Omega\) for the focus-source instance considering all the points of the converging half-space as incorrectly reconstructed, denoted by the label (DH).

For the focus-source instance, the LSS and CSS generated by our method (Fig.~\ref{ex:SWEET-FS-L}, \ref{ex:SWEET-FS-C}) is approximately 10 and 30 (respectively) points larger than those generated by other methods. The LSS and CSS generated by NFC-HOA (Fig.~\ref{ex:NFCHOA-FS-L}, \ref{ex:NFCHOA-FS-C}) is concentrated in a limited region at the diverging part and around a vertical line that passes through \(x_0\). The LSS generated by WFS (Fig.~\ref{ex:WFS-FS-L}) is almost contained in the same vertical line, although the error in the localization reconstruction is below 15 degrees in a larger area. The CSS generated by WFS (Fig.~\ref{ex:WFS-FS-C}) is almost empty as the resulting \(u\) has a large amplitude. This suggest that a focus-source formulation for WFS needs a factor for amplitude normalization. The LSS of \(L^2\)-PMM (Fig.~\ref{ex:L2-FS-L}) has almost the same size as that of NFC-HOA, but its CSS (Fig.~\ref{ex:L2-NF-C}) is almost negligible. This is consistent with the sound wave \(u\) produced with \(L^2\)-PMM (Fig.~\ref{ex:L2-FS-u}) as the spatial phase of the signals is fairly well reconstructed, whereas its amplitude is too small. The LSS and CSS generated by our method (Fig.~\ref{ex:SWEET-FS-L}, \ref{ex:SWEET-FS-C}) comprises almost all the divergent part. This highlights one of the advantages of the greedy approach of SWEET-ReLU: it is capable to detect the direction of \(u_0\) over \(\Omega\) in its first iterations, to then prioritize the part of \(\Omega\) where a good fit to \(\bar{u}_0\) can be obtained. This is a possible explanation for the amplitude mismatch of \(L^2\)-PMM both in the near-field (Fig.~\ref{ex:L2-NF-u}) and the focus-source (Fig.~\ref{ex:L2-FS-u}) instances: just minimizing the square of the spatial errors strongly penalizes the spatial points where the amplitude is very large and difficult to reconstruct, i.e., near the loudspeakers. Then, \(L^2\)-PMM finds a solution where the amplitude error at those points is not too large, leading to a small overall amplitude.

\section{Discussion}
\label{sec:discussion}

Our results show the SWEET-ReLU method yields state-of-the-art results in standard numerical experiments with our proof-of-concept implementation. We believe the performance in these experiments is representative of what we would observe when using more complex pyscho-acoustic models for the perceptual dissimilarity and the loudness discomfort. A key component of our method is the perceptual dissimilarity \(D\). Although its form in our proof-of-concept implementation is quite flexible, it does not account for spatialization and other binaural effects. Finding a model to account for these effects such that \(D\) satisfies~\eqref{eq:dissimilarityIsConvex} is the subject of future research. It should be noted that, as it is shown in~\cite{rumsey2005relative}, the overall quality of a spatial sound system can be explained to 70\% by coloration or timbral fidelity, which can be partially characterized by monoaural effects, and 30\% by spatial fidelity, which needs to be characterized by binaural effects. 

Furthermore, our proof-of-concept experiments show that even though the perceptual model we use is an extension of a monoaural model using a worst-case approach, it still is able to perform better than state-of-the-art methods in terms of localization and coloration. This suggests our implementation with this model is able to capture correctly some of the spatial properties of the auditory scene, even though these properties are not explicitly in the model. This might be explained because the coloration and localization is strongly dependant of the amplitude and phase of the binaural signals, which are controlled by our implementation of a binaural extension of an amplitude and phase-sensitive monaural model. 

Although our implementation assumes the loudspeakers and the sources are monopoles, we believe our method can be readily implemented in real settings with non-trivial sound sources. For instance, reverberation, different radiation patterns for the loudspeakers, and other time-invariant effects can be incorporated by modifying the Green function \(G_k\) in~\eqref{eq:synthethizedSignalFourier} and the transfer functions in~\eqref{eq:synthethizedEarSignalFourier} accordingly. 

Finally, although we have not fully developed a theory for the convergence of SWEET-ReLU, our numerical experiments show that the method converges to reasonable results in practice. Furthermore, our proof-of-concept implementation avoids any potential issues arising from the discretization of the models, either due to numerical computation of the Green function or transfer functions, or to the discretization of the integral that defines the weighted area. Further analysis about this point will be the subject of future work.

\section{Conclusion}
\label{sec:conclusion}

In this work, we introduced a theoretical framework for spatial audio perception that allows the definition of a perceptual sweet spot, that is, the region where the spatial auditory illusion is achieved when approximating one sound wave by another. Furthermore, we developed a method that finds an approximating sound wave that maximizes this sweet spot while guaranteeing no loudness discomfort over a spatial region of interest. 
We provided a theoretical analysis of the method, and an efficient algorithm, the SWEET-ReLU algorithm, for its numerical implementation. In a proof-of-concept implementation using monopoles emitting (pseudo) sinusoidal signals, our method successfully captures some of the spatial properties of the auditory scene, such as localization and coloration, even though these properties are not explicitly in the model. We believe our method is a first step towards a novel approach for spatial sound with loudspeakers, bridging a gap between methods based on perceptual principles, and sound field synthesis methods.

\section*{Acknowledgment}

C.~A.~SL. was partially funded by ANID -- FONDECYT -- 1211643, ANID -- Millennium Science Initiative Program -- NCN17\_059 and ANID -- Millennium Science Initiative Program -- NCN17\_129.

\bibliographystyle{unsrt}
\bibliography{references}

\newpage
\appendix
\section{Appendix: The SWEET method}
\label{sec:analysisOfSWEET}

\subsection{Preliminaries}
\label{sec:maxSweetSpot}

\def\mF{\mathcal{F}}

We let \(\LE(\R)\) be the space of (equivalence classes of) {\em complex-valued functions} that are modulus-square integrable with respect to the Lebesgue measure on \(\R\). Let
\[
    X_E := \set{(\uleft,\uright):\uside\in \LE(\R),\, s\in \set{\ell,r}}
\]
be the space of {\em pairs of signals}. When endowed with
\[
    \nrm{\vu}_{X_E}^2 = \sum\nolimits_{s\in\set{\ell,r}}  \int_{\R} |\uside(t)|^2\, dt
\]
it becomes a complete metric space. Define the space
\[
    W := \set{\vu:\Omega\times\Th\to X_E: \mbox{\(\vu\) continuous and bounded}}
\]
of {\em spatial distributions of pairs of signals}; these are not equivalence classes. If \(\vu\in W\) then \(\vu_{(x,\th)}\in X_E\) for every \((x,\th)\in \Omega\times\Th\). When endowed with the norm
\[
    \nrmW{\vu} := \sup_{(x,\th)\in \Omega\times\Th}\, \nrm{\vu_{(x,\th)}}_{X_E}
\]
the space \(W\) is complete. The Fourier transform is an isometry in \(\LE(\R)\). We define \(\mF:W\to W\) as \(\mF\vu_{(x,\theta)} = (\wh{u}^{\ell}_{(x,\theta)}, \wh{u}^{r}_{(x,\theta)})\). It can be verified \(\mF\) is an isometry in \(W\). Let \(I_S\subset \R\) and let \(\cmax > 0\). The set of {\em admissible audio signals} driving each loudspeaker is
\[
    S_D := \set{c\in L^2(\R;\C):\wh{c}|_{I_S^c} = 0,\nrm{\wh{c}}_{\LE} \leq \cmax}
\]
where \(|\) denotes restriction; they are bandlimited to \(I_S\) and have norm bounded by \(\cmax\). The sound waves in~\eqref{eq:synthethizedEarSignalFourier} belong to
\[
    W_S := \mF^{-1}\left(\Lset{\left(\sum_{k=1}^{\Ns} \wh{c}_k H^r_k,\,\sum_{k=1}^{\Ns} \wh{c}_k H^\ell_k\right): c_k\in S_D}\right).
\]
To quantify the area of the sweet spot in~\eqref{eq:defSweetSpot} we use a finite Borel measure \(\mu\) on \(\Omega\)~\cite[Section~1.2]{Cohn2013} and we suppose that the resulting measure space is complete~\cite[Section~1.5]{Cohn2013}; we usually suppose \(\mu\) is absolutely continuous with respect to the Lebesgue measure or atomic. If \(\vu\) is the pair of signals generated by the array, then \(\mu(\S(\vu))\) is the weighted area of the sweet spot \(\S(\vu)\). We consider the space \(\LInfO\) of (equivalence classes of) real-valued Borel measurable functions that are bounded \(\mu\)-a.e.~\cite[Section~3.3]{Cohn2013}. Frequently used operations, such as the sum, supremum or infimum of functions, the integral, and inequalities, are well-defined for such equivalence classes of measurable functions.

\begin{assumption}
    \label{assmpt:general}
    (i) \(\Omega\) and \(\Th\) are compact. (ii) \(I_S\) is compact. (iii) The functions \(H_k^{\ell}, H^r_k\) are continuous and bounded on \(I_S\times \Omega\times\Th\). (iv) The dissimilarity map \(D:W\times W \to C^0(\Omega\times \Th)\) is continuous and convex on its first argument. (v) The discomfort map \(L:W \to C^0(\Omega\times \Th)\) is continuous and convex. (vi) \(\vu_0\in W\). 
\end{assumption}

The model proposed in Section~\ref{sec:sweetSpot} is well-defined. 

\begin{proposition}
    \label{prop:wellPosedA}
    Under Assumption~\ref{assmpt:general} the following assertions are true: (i) The set \(W_S\) is convex and compact in \(W\). (ii) The map \(\TD:W_S \to \LInfO\) is continuous, and for every \(\vu\in W_S\) the map \(\vu\to \TD\vu(x)\) in~\eqref{eq:defThresholdMap} is convex for \(\mu\)-a.e. \(x\). (iii) The map \(\mu\circ \S:W \to \R\) is well-defined, that is, its values do not depend on the choice of representative of \(\TD\vu\). (iv) The set \(\P\) in~\eqref{eq:defPainThreshold} is convex and closed in \(W\).
\end{proposition}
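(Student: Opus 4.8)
The plan is to reduce assertions (ii)--(iv) to elementary properties of the partial-supremum operator on $C^0(\Omega\times\Th)$ and to concentrate the effort on the compactness claim in (i). For (ii), I would first note that $\vu\mapsto D_{(\vu,\vuo)}$ is continuous from $W_S$ into $C^0(\Omega\times\Th)$ by Assumption~\ref{assmpt:general}(iv), and that, since $\Th$ is compact, the map $g\mapsto\sup_{\th\in\Th}g(\cdot,\th)$ sends $C^0(\Omega\times\Th)$ into $C^0(\Omega)$ and is $1$-Lipschitz for the sup-norms, because $|\sup_\th g_1(x,\th)-\sup_\th g_2(x,\th)|\le\sup_\th|g_1(x,\th)-g_2(x,\th)|$. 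Composing these, and using that the $C^0(\Omega)$-norm dominates the $\LInfO$-norm, shows $\TD:W_S\to\LInfO$ is continuous. The convexity of $\vu\mapsto\TD\vu(x)$ is then exactly the content of~\eqref{eq:dissimilarityIsConvex}, which states $\TD(\lambda\vu_1+(1-\lambda)\vu_2)(x)\le\lambda\TD\vu_1(x)+(1-\lambda)\TD\vu_2(x)$; this holds for every $x$, hence a fortiori for $\mu$-a.e. $x$.

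Assertion (iii) is then purely measure-theoretic: $\TD\vu$ is continuous, hence Borel measurable, and since the measure space is complete and $\mu$ is finite, any two $\mu$-a.e.\ equal representatives have sublevel sets $\set{\TD\vu\le 0}$ differing by a $\mu$-null set, so $\mu(\S(\vu))$ is a well-defined real number independent of the representative. For (iv), the same $1$-Lipschitz supremum estimate applied to $\Lmap$, which by assumption also satisfies~\eqref{eq:dissimilarityIsConvex}, shows that $\vu\mapsto\max_{(x,\th)}L_{\vu}(x,\th)=\sup_{x}\TL\vu(x)$ is a continuous convex functional on $W$. Since $\P$ is precisely its $0$-sublevel set, it is the preimage of $(-\infty,0]$ under a continuous map and an intersection of sublevel sets of convex functions, hence closed and convex.

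For (i), convexity is direct: $S_D$ is the intersection of the closed $\cmax$-ball of $\LER$ with the linear bandlimiting constraint, hence convex, and $W_S$ is the image of $(S_D)^{\Ns}$ under the linear synthesis map followed by the linear isometry $\mathcal{F}^{-1}$. Compactness I would attempt through Arzel\`a--Ascoli in $W=C(\Omega\times\Th;X_E)$. Equicontinuity is the clean half: since each $H_k^s$ is uniformly continuous on the compact set $I_S\times\Omega\times\Th$, the isometry of $\mathcal{F}^{-1}$ gives $\nrm{\vu_{(x,\th)}-\vu_{(x',\th')}}_{X_E}\le\cmax\sum_{k,s}\sup_{f\in I_S}|H_k^s(f,x,\th)-H_k^s(f,x',\th')|$, which tends to $0$ uniformly over $W_S$ as $(x',\th')\to(x,\th)$.

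The main obstacle is the remaining Arzel\`a--Ascoli hypothesis, that $\set{\vu_{(x,\th)}:\vu\in W_S}$ be relatively compact in $X_E$ for each fixed $(x,\th)$. This set is the image of the admissible-coefficient ball under the bounded linear operator $(\wh{c}_k)_k\mapsto(\sum_k\wh{c}_k H_k^r(\cdot,x,\th),\sum_k\wh{c}_k H_k^\ell(\cdot,x,\th))$, and multiplication by a bounded frequency weight is not a compact operation on an infinite-dimensional $\LE(I_S)$, so pointwise norm-compactness does not follow from boundedness alone. I expect this to be where the structural input of the framework is needed: either the admissible signals are effectively finite-dimensional --- as for the (pseudo) sinusoidal parametrization of Section~\ref{sec:implementation}, where the coefficient set lies in a compact subset of $\C^{\Ns}$ --- or one replaces norm-compactness of the coefficient ball by weak compactness and combines it with the equicontinuity above. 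Once pointwise relative compactness is secured, $W_S$ follows as the continuous image of a compact set, which also yields its closedness; establishing this is the step I expect to demand the real work.
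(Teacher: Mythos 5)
Your treatment of (ii)--(iv) is correct and essentially identical to the paper's: continuity of \(T_{D}\) and \(T_{L}\) from the continuity of \(D\) and \(L\) plus the \(1\)-Lipschitz property of the partial supremum (the paper does this with an explicit \(\eps\)--\(\delta\) argument rather than naming the Lipschitz estimate), pointwise convexity inherited from~\eqref{eq:dissimilarityIsConvex}, the null-set observation for (iii), and closedness of \(\P\) as the preimage of a closed set under a continuous map (your phrasing is in fact cleaner than the paper's for (iv)). For (i), your convexity and equicontinuity arguments also coincide with the paper's: equicontinuity follows from uniform continuity of the \(H_k^s\) on the compact set \(I_S\times\Omega\times\Theta\), with a modulus independent of the admissible coefficients.

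The step you declined to complete, however, is precisely the step the paper gets wrong, so your diagnosis is not a shortcoming of your attempt but an exposure of a gap in the paper. The paper concludes compactness of \(W_S\) from pointwise boundedness plus equicontinuity, citing Rudin's Arzel\`a--Ascoli theorem, which is stated for \emph{complex-valued} functions. For families with values in the infinite-dimensional space \(X_E\), Arzel\`a--Ascoli additionally requires that each orbit \(\{\bar{u}_{(x,\theta)}:\bar{u}\in W_S\}\) be relatively compact in \(X_E\), and, as you observe, boundedness does not supply this because multiplication by a bounded transfer function is not a compact operator. Indeed the claim fails under Assumption~\ref{assmpt:general} alone: take \(\Ns=1\) and \(H_1^{\ell}=H_1^{r}\equiv 1\); then every element of \(W_S\) is constant in \((x,\theta)\), and \(W_S\) is isometric to the closed \(\cmax\)-ball of the space of \(I_S\)-bandlimited signals in \(L^2(\R)\), which is infinite-dimensional whenever \(I_S\) has positive measure and hence not norm-compact. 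So compactness of \(W_S\) --- on which the paper later relies for existence of minimizers in Proposition~\ref{prop:AepsContinuous} and for extracting convergent subsequences in the CCCP discussion --- genuinely requires structure beyond Assumption~\ref{assmpt:general}: either an effectively finite-dimensional coefficient parametrization, as in the implementation of Section~\ref{sec:implementation}, or a reworking of the existence arguments around weak compactness of the coefficient ball together with weak lower semicontinuity of the convex functionals, exactly the two repairs you propose. In short: your proposal is incomplete on (i), but the missing step cannot be filled by the paper's own argument either; the proposition as stated needs a stronger hypothesis.
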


We defer the proof to Appendix~\ref{proof:wellPosedA}. Although the feasible set for \((P_0)\) is compact, the objective depends on the properties of the {\em set-valued function} \(u\rightrightarrows \S(u)\). As studying these properties and minimizing \(\mu\circ \S\) is potentially challenging, we propose an approximation to \((\Po)\) that can be analyzed and solved with standard methods.

\subsection{The layer-cake representation}
\label{sec:sweetSpot:layerCake}

We approximate the area of \(\S(u)\) using the {\em layer-cake representation}.

\begin{assumption}
\label{assmpt:phi}
    \(\vphi:\R\to\R\) is absolutely continuous, bounded, non-negative and such that \(\vphi(t) = 0\) for \(t< 0\) and \(\|\vphi\|_{L^1}=1\). 
\end{assumption}

For \(\eps > 0\) let \(\vphi_\eps\) denote \(\vphi_\eps(t) = \vphi(t/\eps)/\eps\) and define
\[
    \Phi_{\eps}(t) = \int_{-\infty}^t \vphi_{\eps}(s)\, ds.
\]
Since \(\Phi_\eps\) is continuous, the composition \(\Phi_\eps\circ v\) is well-defined as an element in \(\LInfO\) for any \(v\in \LInfO\). Similarly,
\[
    \Ae(v) := \int_{\Omega} \Phi_\eps(v(x))\, d\mu(x).
\]
is well-defined for any \(v\in \LInfO\). By Assumption~\ref{assmpt:phi} \(\Phi_\eps\) is non-decreasing, whence \(v_1 \leq v_2\) implies \(\Ae(v_1) \leq \Ae(v_2)\), i.e., \(\Ae\) is non-decreasing.

\begin{proposition}
    \label{prop:layerCake}
     Under Assumption~\ref{assmpt:phi}, for every \(v\in\LInfO\) and any representative \(v'\) of \(v\) we have
    \[
    \lim\nolimits_{\eps\downarrow 0}\, \Ae(v) = \mu(\set{x\in\Omega: v'(x) > 0}).
    \]
\end{proposition}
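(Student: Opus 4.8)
The plan is to reduce the claim to an application of the dominated convergence theorem, once I identify the pointwise limit of the integrand $\Phi_\eps\circ v'$ as $\eps\downarrow 0$. First I would record the elementary properties of $\Phi_\eps$ that follow from Assumption~\ref{assmpt:phi}. Since $\vphi(s)=0$ for $s<0$, the rescaled kernel $\vphi_\eps(s)=\vphi(s/\eps)/\eps$ also vanishes for $s<0$, so $\Phi_\eps(t)=0$ for every $t\le 0$ and every $\eps>0$. Because $\vphi\ge 0$, the function $\Phi_\eps$ is non-decreasing, and because $\|\vphi\|_{L^1}=1$ we have $0\le\Phi_\eps(t)\le 1$ for all $t$, with $\Phi_\eps(t)\to 1$ as $t\to\infty$. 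The useful reformulation is the change of variables $\Phi_\eps(t)=\int_{-\infty}^{t/\eps}\vphi(u)\,du$, which makes the dependence on $\eps$ transparent.

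Next I would compute the pointwise limit. For fixed $t$: if $t<0$ then $\Phi_\eps(t)=0$ for all $\eps$; if $t>0$ then $t/\eps\to+\infty$ as $\eps\downarrow 0$, so $\Phi_\eps(t)=\int_{-\infty}^{t/\eps}\vphi(u)\,du\to\|\vphi\|_{L^1}=1$; and if $t=0$ then $\Phi_\eps(0)=0$ for all $\eps$. Hence $\lim_{\eps\downarrow 0}\Phi_\eps(t)=\mathbf{1}_{(0,\infty)}(t)$ for every $t$, and therefore $\Phi_\eps(v'(x))\to\mathbf{1}_{\{v'(x)>0\}}$ for every $x\in\Omega$.

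Finally I would pass to the limit under the integral. The integrands $x\mapsto\Phi_\eps(v'(x))$ are Borel measurable, being the composition of the continuous $\Phi_\eps$ with the measurable representative $v'$, and they are uniformly bounded by the constant $1$, which is $\mu$-integrable since $\mu$ is a finite measure on $\Omega$. Taking any sequence $\eps_n\downarrow 0$, the dominated convergence theorem yields $A_{\eps_n}(v)\to\int_\Omega\mathbf{1}_{\{v'>0\}}\,d\mu=\mu(\set{x\in\Omega:v'(x)>0})$; since this limiting value does not depend on the chosen sequence, the full limit as $\eps\downarrow 0$ exists and equals the same quantity. Independence of the representative is immediate: two representatives of $v$ differ only on a $\mu$-null set, which alters neither $\set{v'>0}$ up to a null set nor the value of the integral.

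I do not expect a genuine obstacle here; the argument is a routine dominated-convergence limit, and the domination is global because $\Phi_\eps\le 1$ uniformly, so no integrability hypothesis on $v'$ beyond membership in $\LInfO$ is actually used. The only points requiring a little care are the behavior at the boundary $t=0$, where $\Phi_\eps$ stays at $0$ and thus correctly matches the \emph{open} sublevel set $\set{v'>0}$, and the passage from a sequential to a continuous-parameter limit, which is handled by the standard subsequence argument.
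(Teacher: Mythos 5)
Your proof is correct, but it takes a genuinely different route from the paper's. You work entirely in the spatial variable: you identify the pointwise limit \(\Phi_\eps(t)\to\mathbf{1}_{(0,\infty)}(t)\) (correctly handling the boundary case \(t=0\), where \(\Phi_\eps(0)=0\) because \(\vphi\) vanishes on \((-\infty,0)\)), note the uniform bound \(0\le\Phi_\eps\le 1\), and invoke dominated convergence on the finite measure space, finishing with the standard subsequence argument to upgrade sequential limits to the continuous limit \(\eps\downarrow 0\). The paper instead works in the \emph{level} variable: it applies Fubini--Tonelli to rewrite \(A_{\eps_n}(v)=\int_0^\infty\vphi(s)\,\mu(\set{x\in\Omega:v'(x)\ge\eps_n s})\,ds\) after the change of variables \(s=t/\eps_n\), and then uses continuity from below of \(\mu\) together with the monotone convergence theorem, since the superlevel sets \(V_{s,n}=\set{x:v'(x)\ge\eps_n s}\) increase to \(\set{v'>0}\) for \(s>0\). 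Your argument is shorter and more elementary: it needs no Fubini (hence no discussion of joint measurability or of \(v'\in L^1_\mu(\Omega)\), which the paper pauses to justify) and uses only the global domination \(\Phi_\eps\le 1\) plus finiteness of \(\mu\). What the paper's computation buys is the explicit layer-cake identity expressing \(A_\eps\) as a \(\vphi\)-smoothed average of the measures of superlevel sets, which is conceptually aligned with the section's framing of \(A_\eps\) as an approximation to the area above level zero; but as a proof of this proposition, your route is equally rigorous and complete.
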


\begin{proof}[Proof of Proposition~\ref{prop:layerCake}]
    Let \(t \geq 0\) and let \(\set{\eps_n}_{n\in\N}\) be non-negative and monotone decreasing to zero. Define \(V_{t,n} :=\set{x\in\Omega: v'(x) \geq \eps_n t}\) and note that 
    \(V_{t,n} \subseteq V_{t,n+1}\). Define \(V := \bigcup_{n > 0} V_{t,n} = \set{x\in \Omega: v'(x) > 0}\) and \(h_n(t) = \vphi(t) \mu(V_{t,n})\). Note 
    \(h_n\) is measurable for every \(n\) as \(t\mapsto\mu(V_{t,n})\) is monotone. Then \(h_n(t) \uparrow \vphi(t) \mu(V)\) as \(n\to\infty\) by continuity from 
    below~\cite[Proposition~1.2.5]{Cohn2013}. Since \(\Omega\) is bounded, \(v\in L_\mu^1(\Omega)\) and \(v'\) is absolutely integrable. By Fubini's theorem,
    \begin{align*}
        A_{\eps_n}(v) &=\int_\Omega\int_{-\infty}^{v'(x)}\vphi_{\eps_n}(t) dt d\mu(x)\\
        &=\int_{\R}\vphi_{\eps_n}(t)\int_\Omega \chi_{\set{x\in\Omega:v'(x) \geq t}}(t,x)\ d\mu(x) dt\\
        &=\int_{\R}\vphi_{\eps_n}(t)\mu(\set{x\in\Omega: v'(x) \geq t}) dt\\
        &=\int_0^{\infty}\vphi(s)\mu(V_{s,n}) ds \xrightarrow{\eps\downarrow 0}\mu(\set{x\in\Omega: v'(x) > 0})
    \end{align*}
    where we used the change of variables \(s = t/\eps_n\) and the monotone convergence theorem~\cite[Theorem~2.4.1]{Cohn2013}. As \(\set{\eps_n}_{n\in\N}\) is arbitrary, the claim follows.
\end{proof}

For \(\vu\in W_S\) and \(\eps>0\) sufficiently small we have by Propositions~\ref{prop:wellPosedA} and~\ref{prop:layerCake} that
\begin{align*}
    \Ae(\TD\vu) &= \int_{\Omega} \Phi_\eps(\TD\vu(x))\, d\mu(x)\\ 
    &\approx \mu(\set{x\in \Omega: \TD\vu'(x) > 0}) = \mu(\Omega) - \mu(\S(\vu))
\end{align*}
where \(\TD\vu'\) is any representative of \(\TD\vu\). Consequently, \(\mu(\S(\vu)) \approx \mu(\Omega) - \Ae(\TD\vu)\) and for every fixed \(\vu\in W_S\) we can approximate \(\mu(\S(\vu))\) by \(\Ae(\TD\vu)\).

\subsection{The variational problem}
\label{sec:sweetSpot:vatiationalProblem}

We propose to solve the surrogate problem
\begin{equation}
\label{opt:sweetSpotApproximate}
    (\Pe) \,\,\left\{\,\,
    \begin{aligned}
        & \underset{\vu\in W_S\cap\P}{\text{minimize}}
        & & \Ae(\TD\vu)
    \end{aligned}\right.
\end{equation}

\begin{proposition}
    \label{prop:AepsContinuous}
    Suppose Assumption~\ref{assmpt:phi} holds. The function \(\Ae: \LInfO\to \R\) is continuous, and \((\Pe)\) has at least one minimizer.
\end{proposition}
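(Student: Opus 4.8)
The plan is to prove the two assertions in turn: first that $\Ae$ is continuous (indeed Lipschitz) on $\LInfO$, and then that $(\Pe)$ admits a minimizer by combining this continuity with the compactness of the feasible set furnished by Proposition~\ref{prop:wellPosedA}.

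For the continuity of $\Ae$, the crucial point is that $\Phi_\eps$ is Lipschitz. By Assumption~\ref{assmpt:phi}, $\vphi$ is bounded, say $\nrm{\vphi}_{L^\infty} = M$, so that $0 \le \vphi_\eps(t) = \vphi(t/\eps)/\eps \le M/\eps$ for all $t$. Since $\Phi_\eps$ is the antiderivative of $\vphi_\eps$, it is Lipschitz with constant $M/\eps$; moreover $0 \le \Phi_\eps \le 1$, so $0 \le \Ae(v) \le \mu(\Omega) < \infty$ and $\Ae$ is genuinely real-valued (here we use that $\mu$ is finite). Given $v, w \in \LInfO$, I would estimate
\begin{align*}
    |\Ae(v) - \Ae(w)| &\le \int_\Omega |\Phi_\eps(v(x)) - \Phi_\eps(w(x))|\, d\mu(x)\\
    &\le \frac{M}{\eps} \int_\Omega |v(x) - w(x)|\, d\mu(x) \le \frac{M}{\eps}\,\mu(\Omega)\,\nrm{v - w}_{\LInfO},
\end{align*}
where the last inequality again uses $\mu(\Omega) < \infty$. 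This shows $\Ae$ is Lipschitz, hence continuous, and the bound is insensitive to the choice of representatives, consistent with $\Ae$ being well-defined on equivalence classes.

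For the existence of a minimizer I would assemble three ingredients. First, the feasible set $W_S \cap \P$ is compact: by Proposition~\ref{prop:wellPosedA}(i) the set $W_S$ is compact in $(W, \nrmW{\cdot})$, by (iv) the set $\P$ is closed, and a closed subset of a compact set is compact. Second, it is nonempty: the silent signal $c_1 \equiv \cdots \equiv c_{\Ns} \equiv 0$ lies in $W_S$ (since $0 \in S_D$) and produces no sound, hence no discomfort, so it belongs to $\P$. Third, the objective $\vu \mapsto \Ae(\TD\vu)$ is continuous, being the composition of $\TD: W_S \to \LInfO$ — continuous by Proposition~\ref{prop:wellPosedA}(ii) — with $\Ae$, continuous by the previous paragraph. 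A continuous real-valued function on a nonempty compact set attains its infimum, so $(\Pe)$ has at least one minimizer.

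The argument is essentially routine once Proposition~\ref{prop:wellPosedA} is available. The only real computation is the Lipschitz estimate for $\Ae$, where the two points to watch are that boundedness of $\vphi$ (Assumption~\ref{assmpt:phi}) is precisely what keeps the Lipschitz constant $M/\eps$ finite for fixed $\eps$, and that the $\LInfO$-norm dominates the $L^1_\mu$-difference only because $\mu(\Omega) < \infty$. The existence half needs no direct-method lower-semicontinuity machinery, since compactness of the feasible set is transferred directly through the continuous maps $\TD$ and $\Ae$; the main thing to confirm is merely that the feasible set is nonempty.
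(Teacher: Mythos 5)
Your proof is correct and takes essentially the same route as the paper's: continuity of \(\Ae\) from the boundedness of \(\vphi_\eps\) together with \(\mu(\Omega)<\infty\) (the paper phrases this as a local bound on \(\vphi_\eps\) near the values of \(v_0\) rather than your global Lipschitz estimate, but the ingredients are identical), followed by existence from compactness of \(W_S\cap\P\) and continuity of \(\vu\mapsto\Ae(\TD\vu)\). Your explicit check that the feasible set is nonempty (via the zero signal) is a small, sensible addition that the paper leaves implicit, though strictly speaking ``silence causes no discomfort'' is a modeling assumption on \(\Lmap\) rather than a consequence of Assumption~\ref{assmpt:general}.
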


\begin{proof}[Proof of Proposition~\ref{prop:AepsContinuous}]
    Let \(\delta > 0\), let \(v_0,v\in \LInfO\) be such that \(\nrm{v-v_0}_{\LInf} < \delta/2\), and let \(v'\) and \(v_0'\) be representatives. There exists \(\Omega^*\subset\Omega\) with \(\mu(\Omega\setminus\Omega^*) = 0\) such that \(|v'(x) - v'_0(x)| < \delta/2\) for \(x\in\Omega^*\). Since 
    \(\vphi_\eps\) 
    is non-negative and bounded on the interval \([-\nrm{v_0}_{\LInf} - \delta/2,\, \nrm{v_0}_{\LInf}  +\delta/2]\), for any \(x\in \Omega^*\) we have that
    \[
    |\Phi_{\eps}(v'(x)) - \Phi_{\eps}(v_0'(x))| \leq \int_{v_0'(x)-\delta/2}^{v_0'(x) + \delta/2}\vphi_\eps(t)\,dt \leq c_{\vphi_\eps}\delta
    \]
    where \(c_{\vphi_\eps} > 0\) depends only on \(\vphi_\eps\). As the bound is independent of the choice of  \(v',v_0'\), \(|\Ae(v) - \Ae(v_0)| \leq c_{\vphi_\eps}\mu(\Omega) 
    \delta\) whence \(\Ae\) is continuous. The existence of solutions follows from the compactness of \(W_S\cap \P\).
\end{proof}

\subsection{DC Formulation}
\label{sec:sweetSpot:DCC}

To solve \((\Pe)\) we first rewrite it equivalently as
\begin{equation}
    \label{opt:sweetSpotApproximateDCP}
    (\tPe) \,\, \left\{\,\,
    \begin{aligned}
        & \underset{\genfrac{}{}{0pt}{}{\vu\in W_S\cap \P}{v\in \LInfO}}{\text{minimize}}
        & & \Ae(v)\\
        & \text{subject to}
        & & \TD\vu \leq v.
    \end{aligned}\right.
\end{equation}
We interpret the auxiliary variable \(v\) as an overestimate of the auditory illusion map over \(\Omega\).

\begin{proposition}
    \label{prop:sweetSpotProblemEquivalence}
    Under Assumptions~\ref{assmpt:general} and~\ref{assmpt:phi}, if \(\vu^\star\) is an optimal solution to \((\Pe)\) then \((\vu^{\star}, \TD\vu^{\star})\) is an optimal solution to \((\tPe)\). In particular, \((\tPe)\) has a solution.
\end{proposition}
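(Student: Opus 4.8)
The plan is to reduce the claim to the single structural fact about $\Ae$ that was recorded just before Proposition~\ref{prop:layerCake}: under Assumption~\ref{assmpt:phi} the function $\Phi_\eps$ is non-decreasing, so that $v_1 \leq v_2$ implies $\Ae(v_1) \leq \Ae(v_2)$; that is, $\Ae$ is \emph{monotone}. The equivalence between $(\Pe)$ and $(\tPe)$ rests entirely on this monotonicity, and the whole argument is a short chain of inequalities, so I expect no substantive obstacle.

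First I would check that $(\vu^\star, \TD\vu^\star)$ is admissible for $(\tPe)$. Since $\vu^\star$ is optimal for $(\Pe)$, it lies in $W_S\cap\P$; moreover $\TD\vu^\star\in\LInfO$ by Proposition~\ref{prop:wellPosedA}(ii), so the pair is a legitimate point of the feasible set of $(\tPe)$. The coupling constraint $\TD\vu^\star \leq v$ holds with equality at $v = \TD\vu^\star$, so feasibility is immediate, and the objective value attained there is $\Ae(\TD\vu^\star)$, which is precisely the optimal value of $(\Pe)$.

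Next I would establish optimality. Let $(\vu, v)$ be an arbitrary feasible point of $(\tPe)$, so that $\vu\in W_S\cap\P$ and $\TD\vu \leq v$. Monotonicity of $\Ae$ gives $\Ae(\TD\vu) \leq \Ae(v)$. Since $\vu$ is in particular feasible for $(\Pe)$, optimality of $\vu^\star$ yields $\Ae(\TD\vu^\star) \leq \Ae(\TD\vu)$. Chaining the two inequalities gives $\Ae(\TD\vu^\star) \leq \Ae(v)$. Thus no feasible point of $(\tPe)$ achieves a strictly smaller objective than the value $\Ae(\TD\vu^\star)$ realized at $(\vu^\star, \TD\vu^\star)$, and the latter is therefore a minimizer of $(\tPe)$.

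Finally, the existence claim follows by combining this with Proposition~\ref{prop:AepsContinuous}, which guarantees that $(\Pe)$ has at least one minimizer $\vu^\star$; the associated pair $(\vu^\star, \TD\vu^\star)$ then solves $(\tPe)$. The only point requiring a moment of care is confirming that $\TD\vu^\star$ genuinely belongs to $\LInfO$ so that the pair is admissible, which is exactly the content of Proposition~\ref{prop:wellPosedA}(ii); beyond that, the argument is purely the monotonicity of $\Ae$.
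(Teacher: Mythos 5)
Your proof is correct and follows essentially the same route as the paper's: both rest on the feasibility of \((\vu^\star, \TD\vu^\star)\) for \((\tPe)\), the monotonicity of \(\Ae\) applied to the constraint \(\TD\vu \leq v\), and Proposition~\ref{prop:AepsContinuous} for existence. The only cosmetic difference is that the paper phrases the argument as an equality of optimal values \(p_\eps = \tilde{p}_\eps\), whereas you compare objective values at feasible points directly; the substance is identical.
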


\begin{proof}[Proof of Proposition~\ref{prop:sweetSpotProblemEquivalence}]
    Let \(p_\eps\), \(\tilde{p}_\eps\) be the optimal values for \((\Pe)\) and \((\tPe)\) respectively, where \(\tilde{p}_\eps\) is finite as \((\tPe)\) is feasible and \(\Ae \geq 0\). On one hand, if \(\vu_\eps\) is an optimal solution to \((\Pe)\), which exists by Proposition~\ref{prop:AepsContinuous}, then \((\vu_\eps, \TD\vu_\eps)\) is feasible for \((\tPe)\). Hence, \(\tilde{p}_\eps \leq  p_\eps\). On the other, if \((\vu, v)\) feasible for \((\tPe)\) then \(\vu\) is feasible for \((\Pe)\). Since \(\Ae\) is monotone, \(p_\eps \leq \Ae(\vu) \leq \Ae(v) \) whence \(p_\eps \leq \tilde{p}_\eps\). We conclude \(\tilde{p}_{\eps} = p_\eps\) and \((\vu_\eps, \TD\vu_\eps)\) is an optimal solution to \((\tPe)\).
\end{proof}

Under suitable assumptions, the objective function in \((\tPe)\) is the {\em difference of convex functions}.

\begin{assumption}
\label{assmpt:phiDCC}
    In addition to Assumptions~\ref{assmpt:general} and~\ref{assmpt:phi}, there exists \(\vphi^+:\R\to\R\) absolutely continuous, non-decreasing and such that \(\vphi^+(t) = 0\) for \(t < 0\) and that \(\vphi^+-\vphi\) is a non-decreasing function.
\end{assumption}

We let \(\vphip(x) = \vphi^+(x/\eps)/\eps\) and we define
\[
\Phip(t)=\int_{-\infty}^t\vphip(s) ds.
\]
Similarly, let \(\vphim = \vphi - \vphip\) and \(\Phim(t) = \Phip(t) - \Phi_\eps(t)\). By construction, \(\Phi_\eps = \Phip - \Phim\).

\begin{proposition}
    \label{prop:dccDecomposition}
    Under Assumption~\ref{assmpt:phiDCC}, \(\Phip\) and \(\Phim\) are convex.
\end{proposition}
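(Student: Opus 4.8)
The plan is to use the elementary characterization that an absolutely continuous function on $\R$ is convex whenever it can be written as the indefinite integral of a non-decreasing, locally integrable density; equivalently, that $F(t) = c + \int_{t_0}^t g(s)\, ds$ is convex whenever $g$ is non-decreasing (this follows at once from comparing the difference quotients $\frac{F(z)-F(x)}{z-x} \leq g(z) \leq \frac{F(y)-F(z)}{y-z}$ for $x < z < y$). Since both $\Phip$ and $\Phim$ are given to us as such indefinite integrals, the problem reduces to checking that their densities are non-decreasing, and the hypotheses of Assumption~\ref{assmpt:phiDCC} are tailored precisely to this.

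First I would treat $\Phip$. By construction $\Phip(t) = \int_{-\infty}^t \vphip(s)\, ds$ with $\vphip(s) = \vphi^+(s/\eps)/\eps$. Because $\vphi^+(s) = 0$ for $s < 0$, the integral is effectively taken over $[0,t]$, and it is finite since a non-decreasing function is bounded on bounded intervals; hence $\Phip$ is a well-defined, absolutely continuous function on $\R$ whose density is $\vphip$. Assumption~\ref{assmpt:phiDCC} states that $\vphi^+$ is non-decreasing; composing with the order-preserving map $s \mapsto s/\eps$ (order-preserving precisely because $\eps > 0$) and scaling by the positive constant $1/\eps$ both preserve monotonicity, so $\vphip$ is non-decreasing. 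The characterization above then gives convexity of $\Phip$.

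Next I would treat $\Phim$. Here $\Phim = \Phip - \Phi_\eps$, so $\Phim$ is absolutely continuous with density $\vphip - \vphi_\eps$, and $\Phi_\eps$ is itself finite (it is the integral of the non-negative, $L^1$-normalized $\vphi_\eps$, hence valued in $[0,1]$). Writing the density out gives $\vphip(s) - \vphi_\eps(s) = (\vphi^+ - \vphi)(s/\eps)/\eps$. By Assumption~\ref{assmpt:phiDCC} the difference $\vphi^+ - \vphi$ is non-decreasing, and exactly as before composition with $s \mapsto s/\eps$ together with scaling by $1/\eps$ keeps it non-decreasing. Thus the density of $\Phim$ is non-decreasing and $\Phim$ is convex. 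As a consistency check, since $\vphi^+ - \vphi$ is non-decreasing and vanishes on the negative axis it is in fact non-negative, so $\Phim$ is also non-decreasing, in agreement with $\Phi_\eps = \Phip - \Phim$.

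The hard part here is conceptually minor: there is essentially no obstacle beyond invoking the correct convexity criterion. The only points genuinely requiring care are (i) the well-definedness and finiteness of the indefinite integrals, which follow from the local boundedness of monotone functions and from $\vphi^+$ and $\vphi$ vanishing on the negative half-line, and (ii) the observation that positive scaling of the argument, $s \mapsto s/\eps$, preserves the direction of monotonicity, which is the single place where the hypothesis $\eps > 0$ enters. Everything else is a direct transcription of Assumption~\ref{assmpt:phiDCC}.
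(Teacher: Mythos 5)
Your proof is correct and takes essentially the same route as the paper's: both arguments establish convexity of \(\Phip\) and \(\Phim\) by noting that their densities, \(\vphip\) and \(\vphip-\vphi_\eps\), are non-decreasing (being positive rescalings of \(\vphi^+\) and \(\vphi^+-\vphi\), respectively). The only difference is that the paper cites a standard reference for the criterion ``monotone derivative implies convex,'' whereas you verify it directly via the difference-quotient inequality and additionally spell out the finiteness of the indefinite integrals.
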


\begin{proof}[Proof of Proposition~\ref{prop:dccDecomposition}]
    By construction, both \(\Phip\) and \(\Phim\) have derivatives \(\vphip\) and \(\vphim\) respectively, which are both non-decreasing, hence 
    monotone~\cite[Proposition~17.10]{Bauschke2011}.
\end{proof}

Since \(\Phip,\Phim\) are convex, they are continuous. We decompose \(\Ae\) as \(\Ae = \Ap - \Am\) where
\[
    \Ap(v) := \int_\Omega \Phip(v(x))\, d\mu(x)
\]
and \(\Am\) is defined similarly. It is apparent that \(\Ap,\Am:\LInfO\to \R\) are convex and also continuous. We conclude \((\tPe)\) is a 
Difference-of-Convex (DC) program~\cite{Tao1997,Horst1999}. The Convex-Concave Procedure (CCCP)~\cite{Lipp2016} is an efficient method to attempt to find a solution to this class of optimization problems. The CCCP is an iterative method that uses an affine majorant for the concave part, e.g., using subgradients, to majorize the objective function in~\eqref{opt:sweetSpotApproximateDCP} by a convex function.

Let \(v_0\in \LInfO\) and let \(\LInfO^*\) be the topological dual of \(\LInfO\). We say \(g\in \LInfO^*\) is a {\em subgradient}  of \(\Am\) at \(v_0\) if
\[
\forall\, v\in \LInfO:\,\, \Am(v) \geq \Am(v_0) + g(v - v_0).
\]
The {\em subdifferential} \(\partial \Am(v_0)\) is the collection of all subgradients at \(v_0\). Since \(\Am:\LInfO\to \R\) is continuous and convex, it has a subdifferential at every \(v_0\)~\cite[Proposition~2.36]{Barbu2012}. We can use the convex majorizer
\[
\Ae(v) = \Ap(v) - \Am(v) \leq \Ap(v) -\Am(v_0) - g_{v_0}(v-v_0)
\]
where \(g_{v_0}\in\LInfO^*\) and solve
\begin{equation*}
    (\tP_{\eps,v_0}) \,\,\left\{\,\,
    \begin{aligned}
        & \underset{\genfrac{}{}{0pt}{}{u\in W_S\cap\P}{v\in \LInfO}}{\text{minimize}}
        & & \Ap(v) - \Am(v_0) - g_{v_0}(v - v_0)\\
        & \text{subject to}
        & & Tu \leq v.
    \end{aligned}\right.
\end{equation*}

\begin{proposition}
    \label{prop:convexMajorizer}
    Under Assumption~\ref{assmpt:phiDCC}, for \(\eps > 0\) and \(v_0\in \LInfO\) \((\tP_{\eps,v_0})\) has at least one optimal solution.
\end{proposition}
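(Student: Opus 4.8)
The plan is to choose the subgradient $g_{v_0}$ to be an \emph{integral} functional, which turns the majorant into a local functional of $v$; the minimization over $v$ then decouples pointwise and can be solved explicitly, leaving a continuous minimization in $u$ over the compact set $W_S\cap\P$, to which the direct method applies.

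First I would fix $g_{v_0}$. Since $\Am(v)=\int_\Omega\Phim(v(x))\,d\mu(x)$ and $\Phim$ is convex by Proposition~\ref{prop:dccDecomposition} with an a.e.\ derivative $\vphim$ that is non-negative and non-decreasing (so $\Phim$ is also non-decreasing), the scalar subgradient inequality $\Phim(s)\ge\Phim(s_0)+\vphim(s_0)(s-s_0)$ integrates to
\[
    \Am(v)\ge \Am(v_0)+\int_\Omega \vphim(v_0(x))\,(v(x)-v_0(x))\,d\mu(x).
\]
Because $v_0$ is essentially bounded and $\vphim$ is continuous, $\vphim(v_0)\in\LInfO\subset L^1_\mu(\Omega)$, so $g_{v_0}(w):=\int_\Omega\vphim(v_0(x))\,w(x)\,d\mu(x)$ is a bounded linear functional, i.e.\ a legitimate element of $\LInfO^*$, and the display certifies it as a subgradient of $\Am$ at $v_0$. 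With this choice the majorant is the local functional $M(v):=\Ap(v)-g_{v_0}(v-v_0)-\Am(v_0)=\int_\Omega\psi_x(v(x))\,d\mu(x)+c_0$, where $\psi_x(s):=\Phip(s)-\vphim(v_0(x))\,s$ is convex and $c_0$ is independent of $v$.

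Next I would solve the inner problem $\min\{M(v):v\ge \TD u\}$ for fixed $u$. As the constraint $v\ge\TD u$ is pointwise and $M$ is the integral of the convex integrand $\psi_x$, the minimization decouples: it suffices to minimize each $\psi_x$ over $[\,\TD u(x),\infty)$. The unconstrained minimizer $\bar v_0(x)$ is characterized by $\vphim(v_0(x))\in\partial\Phip(\bar v_0(x))$; since $\vphim(v_0(x))\le\vphip(v_0(x))$, while $\vphip$ vanishes on $(-\infty,0)$ and is non-decreasing, one may select $0\le\bar v_0(x)\le\max\{v_0(x),0\}$, so $\bar v_0\in\LInfO$, and it is measurable as a monotone selection composed with the measurable $v_0$. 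Clamping against the lower bound gives the constrained pointwise minimizer $\max\{\bar v_0,\TD u\}\in\LInfO$, so that $\min_{v\ge\TD u}M(v)=M(\max\{\bar v_0,\TD u\})$.

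Finally I would reduce to a minimization in $u$ alone: $(\tP_{\eps,v_0})$ is equivalent to minimizing $F(u):=M(\max\{\bar v_0,\TD u\})$ over $u\in W_S\cap\P$. The map $u\mapsto\TD u$ is continuous into $\LInfO$ (Proposition~\ref{prop:wellPosedA}), $w\mapsto\max\{\bar v_0,w\}$ is $1$-Lipschitz on $\LInfO$, and $M$ is continuous (by the argument of Proposition~\ref{prop:AepsContinuous} applied to $\Phip$, together with boundedness of $g_{v_0}$); hence $F$ is continuous. Since $W_S$ is compact and $\P$ is closed (Proposition~\ref{prop:wellPosedA}), the set $W_S\cap\P$ is compact, so $F$ attains its infimum at some $u^\star$, and $(u^\star,\max\{\bar v_0,\TD u^\star\})$ is the desired optimal pair. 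The main obstacle is the inner step: a priori $v$ ranges over the non-compact, non-reflexive space $\LInfO$ and could escape to $+\infty$, so the crux is showing that the pointwise minimizer $\bar v_0$ is essentially bounded and measurable. This is precisely where Assumption~\ref{assmpt:phiDCC} is indispensable—its monotone decomposition makes each $\psi_x$ convex with a minimizer controlled by $v_0(x)$—and where representing $g_{v_0}$ as an integral functional, rather than as an arbitrary element of the unwieldy dual $\LInfO^*$, is what legitimizes the pointwise decoupling.
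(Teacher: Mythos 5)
Your proof is correct, and at its core it is the same argument as the paper's: specialize \(g_{v_0}\) to the integral functional with density \(\vphim(v_0(\cdot))\) so that the majorant becomes a local integral functional of \(v\), exhibit an essentially bounded, measurable pointwise minimizer of the integrand, clamp it against \(\TD\vu\), and conclude from compactness of \(W_S\cap\P\) together with continuity of \(\TD\) and \(\Ap\). Two execution details differ and are worth recording. First, for measurability of the pointwise minimizer the paper treats \(f(x,\alpha)=\Phip(\alpha)-\vphip(v_0'(x))\,\alpha\) as a Carath\'eodory map and invokes the Aubin--Frankowska measurable-selection theorem, whereas you construct the minimizer explicitly as a monotone (generalized-inverse) function of \(\vphim(v_0(x))\), measurable because monotone functions are Borel; your route is more elementary and self-contained, though to be fully rigorous you should display the selection, e.g.\ \(\bar v_0(x)=\inf\{s\ge 0:\vphip(s)\ge\vphim(v_0(x))\}\), and verify \(\vphim(v_0(x))\in\partial\Phip(\bar v_0(x))\) from the jump-filling property of subdifferentials. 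Second, for the outer argument the paper takes a minimizing sequence \((\vu_k,v_k)\), clamps each \(v_k\) to \(\max\{\bar v_0,\TD\vu_k\}\) without increasing the objective, and passes to the limit along a convergent subsequence of \(\{\vu_k\}\); you instead eliminate \(v\) entirely by partial minimization and apply Weierstrass to the reduced continuous objective \(F\) on the compact set \(W_S\cap\P\). These are equivalent in substance---both rest on the same clamping inequality and the same compactness and continuity facts---but your organization makes the role of the inner pointwise problem more transparent. Finally, note that your linear coefficient \(\vphim(v_0(x))\) is the one consistent with the subgradient fixed in Proposition~\ref{prop:subgradient}; the paper's proof writes \(\vphip(v_0'(x))\) instead, which appears to be a slip (that coefficient is not a subgradient coefficient for \(\Am\)), although the existence argument itself goes through with either choice.
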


We defer the proof to Appendix~\ref{proof:convexMajorizer}. In our method we make an explicit choice of a subgradient. 

\begin{proposition}
    \label{prop:subgradient}
    Under Assumption~\ref{assmpt:phiDCC},  for \(v_0\in \LInfO\)  the linear application
    \begin{equation}
        \label{eq:subgradient}
        g_{v_0}(v) := \int_\Omega \vphim(v_0(x)) v(x)\, d\mu(x)
    \end{equation}
    is well-defined for \(v\in\LInfO\), continuous and is a subgradient for \(\Am\) at \(v_0\).
\end{proposition}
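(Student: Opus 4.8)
The plan is to verify the three claims in turn---that $g_{v_0}$ is well-defined, that it is continuous, and that it is a subgradient of $\Am$ at $v_0$---using that $\Phim$ is a one-dimensional convex function whose derivative is $\vphim$ (Proposition~\ref{prop:dccDecomposition}; here $\vphim$ is continuous because $\vphi$ and $\vphi^+$ are absolutely continuous, so $\Phim\in C^1$), together with the essential boundedness of elements of $\LInfO$ and the finiteness of $\mu$. First I fix a representative $v_0'$ of $v_0$ and set $M := \nrm{v_0}_{\LInf}$, so that $|v_0'(x)| \leq M$ for $\mu$-a.e.\ $x$. Since $\vphim$ is non-decreasing it is Borel measurable and, being finite-valued, bounded on the compact interval $[-M, M]$; I write $C_{v_0} := \sup_{|t|\leq M}|\vphim(t)|$. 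Consequently $x\mapsto \vphim(v_0'(x))$ is measurable and bounded by $C_{v_0}$ for $\mu$-a.e.\ $x$.

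For well-definedness, given $v\in\LInfO$ with representative $v'$, the integrand $\vphim(v_0'(x))\,v'(x)$ is measurable and bounded $\mu$-a.e.\ by $C_{v_0}\nrm{v}_{\LInf}$; since $\mu(\Omega) < \infty$ it is $\mu$-integrable, so $g_{v_0}(v)$ is finite. Changing representatives of either $v_0$ or $v$ alters the integrand only on a $\mu$-null set, so the value is independent of the choice---the same independence already invoked to make $\Am$ well-defined. Linearity of $g_{v_0}$ is immediate from linearity of the integral, and the estimate $|g_{v_0}(v)| \leq C_{v_0}\,\mu(\Omega)\,\nrm{v}_{\LInf}$ exhibits $g_{v_0}$ as a bounded linear functional, hence $g_{v_0}\in\LInfO^*$.

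The core of the argument is the inequality $\Am(v) \geq \Am(v_0) + g_{v_0}(v-v_0)$ for all $v\in\LInfO$, which I obtain by integrating the scalar convexity estimate. Because $\Phim$ is convex with derivative $\vphim$, for every pair of reals $s, s_0$ one has $\Phim(s) \geq \Phim(s_0) + \vphim(s_0)(s - s_0)$. Evaluating at $s = v'(x)$ and $s_0 = v_0'(x)$ yields, for $\mu$-a.e.\ $x$, the pointwise bound $\Phim(v'(x)) \geq \Phim(v_0'(x)) + \vphim(v_0'(x))(v'(x) - v_0'(x))$. Each of the three functions here is bounded $\mu$-a.e.\ (the first two because $\Phim$ is continuous and $v', v_0'$ are essentially bounded, the third by the product bound above) and hence $\mu$-integrable, so I may integrate the inequality over $\Omega$ and preserve it; identifying the resulting integrals as $\Am(v)$, $\Am(v_0)$, and $g_{v_0}(v - v_0)$ gives the claim.

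The points requiring care are bookkeeping rather than genuine obstacles: confirming that $\vphim(v_0(\cdot))$ is essentially bounded---which rests on monotonicity of $\vphim$ and essential boundedness of $v_0$---so that the integrals defining $g_{v_0}$ and appearing in the subgradient inequality converge, and checking that the pointwise estimate may be integrated term by term, legitimate precisely because $\mu$ is finite and all integrands are bounded $\mu$-a.e. I expect the most delicate step to be the passage from the a.e.\ pointwise inequality to the functional inequality, i.e.\ ensuring every term is integrable and independent of the chosen representatives.
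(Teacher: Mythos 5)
Your proof is correct and follows essentially the same route as the paper's: well-definedness and membership in \(\LInfO^*\) via the monotonicity of \(\vphim\) (so \(\vphim\circ v_0\in\LInfO\)) together with finiteness of \(\mu\), and the subgradient property by integrating the pointwise tangent-line inequality \(\Phim(t)\geq \Phim(t_0)+\vphim(t_0)(t-t_0)\) over \(\Omega\). Your version simply makes explicit the measurability, representative-independence, and integrability bookkeeping that the paper leaves implicit.
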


\begin{proof}[Proof of Proposition~\ref{prop:subgradient}]
    Since \(v_0\in\LInfO\) and \(\vphim\) is non-decreasing, \(\vphim\circ v_0\in \LInfO\). Hence, \(g_{v_0}\) is well-defined and \(g_{v_0}\in\LInfO^*\). By the definition of \(\Phim\) and Assumption~\ref{assmpt:phi}, \(\Phim(t) \geq \Phim(t_0) + 
    \vphim(t_0)(t - t_0)\) for all \(t,t_0\in \R\). This implies
    \begin{equation*}
    \int_{\Omega}\Phim(v(x))\, d\mu(x) \geq \int_{\Omega}\Phim(v_0(x))\, d\mu(x)  + \int_{\Omega} \vphim(v_0(x))(v(x) - v_0(x))\, d\mu(x).
    \end{equation*}
\end{proof}

Given \((\vu_0, v_0)\) the CCCP constructs a sequence \(\set{(\vu_k, v_k)}_{k\in\N}\) where \((\vu_{k+1}, v_{k+1})\) is the optimal solution to \((\tP_{\eps,v_{k}})\). To our knowledge, the best theoretical guarantees for finite-dimensional problems show that this sequence converges to a stationary point of \((\tPe)\)~\cite[Theorem~3]{Tao1997}, whereas we are not aware of similar guarantees for infinite-dimensional problems. However, \(\set{\vu_k}_{k\in\N}\) is a sequence in \(W_S\)  and we can extract a subsequence \(\set{\vu_{k(\ell)}}_{\ell\in\N}\) with limit \(\vu^\opt\). If \(\set{\tilde{p}^\star_{k}}_{k\in\ell}\) is the sequence of optimal values to each \((\tP_{\eps,v_k})\) then 
\begin{equation*}
    \Ae(\TD\vu^{\opt}) = \lim\inf_{\ell\to\infty}\Ae(\TD\vu_{k(\ell)})
    \leq \lim\inf_{\ell\to\infty}\Ae(v_{k(\ell)})  \leq \lim\inf_{k\to\infty}\, \tilde{p}^\star_{k(\ell)}
\end{equation*}
by the continuity of \(\TD,\Ae\) and the fact that \(\Ae\) is non-decreasing. The optimal values are a conservative estimate of \(\Ae(\TD\vu^{\opt})\). Our numerical results show the solutions found this way performs well in practice.

\subsection{SWEET-ReLU}
\label{sec:sweetReLUInstance}

When \(\vphi\) is the indicator function of \([0, 1]\) the function \(\Phi\) becomes the difference of two \textit{Rectified Linear Units} (ReLUs). In this case, \(\vphi_\eps = \eps^{-1}\chi_{[0, \eps]}\). The decomposition \(\Phi_\eps = \Phip - \Phim\) becomes
\[
\Phip(x) = x_+/\eps \quad\mbox{and}\quad \Phim(x) =  (x - \eps)_+/\eps
\]
and the subgradient~\eqref{eq:subgradient} becomes
\[
g_{v_0}(v) = \frac{1}{\eps}\int_{\set{x\in\Omega: v_0(x) > \eps}} v(x)\, d\mu(x).
\]
Let \(\Omega_{\eps,v_0}:= \set{x\in\Omega: v_0(x) \leq \eps}\). Since both \(\Am(v_0)\) and \(g_{v_0}(v_0)\) in \((\tP_{\eps,v_0})\) are constant, it suffices to compute
\begin{align*}
    \Ap(v) - g_{v_0}(v) &= \frac{1}{\eps}\int_{\Omega} v(x)_+\, d\mu(x) -\frac{1}{\eps} \int_{\Omega_{\eps,v_0}^c} v(x)\, d\mu(x) \\
    &= \frac{1}{\eps} \int_{\Omega_{\eps,v_0}} v(x)_+\,d\mu(x)+ \frac{1}{\eps} \int_{\Omega_{\eps,v_0}^c} (-v(x))_+\,d\mu(x)
\end{align*}
where we used the fact that \(t_+ - t = (-t)_+\). The second term is non-negative, and becomes positive only when \(v\) takes negative values. As \(Tu \leq v\) in  \((\tP_{\eps,v_0})\) we can choose \(v\) arbitrarily large on \(\Omega_{\eps,v_0}^{c}\) to decrease the objective value and to neglect the second integral. Then, only the first term contributes to the objective in \((\tP_{\eps,v_0})\) and we obtain
\[
    (\tP_{\eps,v_0}) \,\,\left\{\,\,
    \begin{aligned}
        & \underset{\genfrac{}{}{0pt}{}{\vu\in W_S\cap\P}{v\in \LInfO}}{\text{minimize}}
        & & \int_{\Omega_{\eps, v_0}} v(x)_+\, d\mu(x)\\
        & \text{subject to}
        & & \TD\vu \leq v,\, 0 \leq v|_{\Omega_{\eps, v_0}^{c}}.
    \end{aligned}\right.
\]
As the positive-part function is monotone, we can eliminate \(v\) to obtain
\[
    (\tP_{\eps,v_0})\,\,\left\{\,\,
    \begin{aligned}
        & \underset{\vu\in W_S\cap\P}{\text{minimize}}
        & & \int_{\Omega_{\eps, v_0}} (\TD\vu(x))_+\, d\mu(x)
    \end{aligned}\right.
\]
which is precisely the problem \((P^{\text{SReLU}}_1)\) when \(v_0 \equiv 0\). 
It depends on \(v_0\) only through \(\Omega_{\eps, v_0}\). To construct an optimal solution \((\vu_{k}, v_{k})\) from the optimal solution \(\vu_k\) we proceed as follows: by choosing \(v_{k}|_{\Omega_{\eps,v_{k-1}}} = \TD\vu_{k}|_{\Omega_{\eps,v_{k-1}}}\) and \(v_{k}|_{\Omega_{\eps,v_{k-1}}^c}= \max\set{\eps, \TD\vu_{k}|_{\Omega_{\eps,v_{k-1}}^c}}\) we obtain
\begin{equation*}
    \Omega_{\eps,v_{k}} = \set{x\in\Omega: v_{k}'(x) \leq \eps}
    = \set{x\in \Omega_{\eps,v_{k-1}}: \TD\vu_{k}'(x) \leq \eps} 
    = \Omega_{\eps,v_{k-1}}\cap \set{x\in \Omega: \TD\vu_{k}'(x) \leq \eps},
\end{equation*}
yielding the method as presented in Section~\ref{sec:sweetReLU}.

\subsection{A class of monoaural dissimilarity maps}
\label{sec:filtersToDissimilarity}

We introduce a class of  dissimilarity metrics based on time-variant filters satisfying Assumption~\ref{assmpt:general}.

\begin{lemma}
\label{lem:kernels}
    Let \(K:\Omega\times \Th \to \LE(\R^2)\) be continuous with
    \[
    \sup_{(x,\th)\in \Omega\times\Th} \sup_{t\in\R}\int_{\R} (|K_{(x,\th)}(t,t')| + |K_{(x,\th)}(t',t)|)\, dt'
    \]
    finite. Define for \((x,\th)\in\Omega\times\Th\), \(w\in\LE(\R)\)
    \[
        A^K_{(x,\th)} w(t) := \int K_{(x,\th)}(t,t') w(t')\, dt'.
    \]
    Then \(A^K_{(x,\th)}\) is linear, \(A^K_{(x,\th)}w\in\LE(\R)\) and \((x,\th,w) \mapsto A^K_{(x,\th)} w\) is continuous.
\end{lemma}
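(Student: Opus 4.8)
The plan is to establish the three assertions in order, using the Schur test for the $\LER$-boundedness and then combining the resulting \emph{uniform} operator-norm bound with the Hilbert--Schmidt estimate and the $\LE(\R^2)$-continuity of $K$ to obtain joint continuity. Linearity of $A^K_{(x,\th)}$ is immediate from the linearity of the integral, so the first substantive step is to show $A^K_{(x,\th)}w\in\LER$.

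For boundedness I would invoke the Schur test. Write $C$ for the finite supremum appearing in the hypothesis; then for every $(x,\th)$ both the row bound $\sup_t\int|K_{(x,\th)}(t,t')|\,dt'\leq C$ and the column bound $\sup_t\int|K_{(x,\th)}(t',t)|\,dt'\leq C$ hold simultaneously. Splitting $|K_{(x,\th)}(t,t')w(t')| = |K_{(x,\th)}(t,t')|^{1/2}\cdot|K_{(x,\th)}(t,t')|^{1/2}|w(t')|$ and applying Cauchy--Schwarz in $t'$ gives the pointwise estimate $|A^K_{(x,\th)}w(t)|^2 \leq \bigl(\int|K_{(x,\th)}(t,t')|\,dt'\bigr)\bigl(\int|K_{(x,\th)}(t,t')||w(t')|^2\,dt'\bigr)$. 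Bounding the first factor by the row bound $C$, integrating over $t$, and applying Fubini together with the column bound then yields $\nrm{A^K_{(x,\th)}w}_{\LER}\leq C\,\nrm{w}_{\LER}$. This both proves $A^K_{(x,\th)}w\in\LER$ and, crucially, furnishes an operator-norm bound that is \emph{uniform} over $(x,\th)\in\Omega\times\Th$.

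For the joint continuity, given sequences $(x_n,\th_n,w_n)\to(x,\th,w)$ (sequences suffice, as all spaces in sight are metric), I would split
\[
A^K_{(x_n,\th_n)}w_n - A^K_{(x,\th)}w = A^K_{(x_n,\th_n)}(w_n-w) + \bigl(A^K_{(x_n,\th_n)}-A^K_{(x,\th)}\bigr)w
\]
and control the two terms by different devices. The first is handled by the uniform Schur bound, $\nrm{A^K_{(x_n,\th_n)}(w_n-w)}_{\LER}\leq C\,\nrm{w_n-w}_{\LER}\to 0$. The second I would bound through the Hilbert--Schmidt inequality: since the operator norm is dominated by the $\LE(\R^2)$-norm of the kernel, $\nrm{(A^K_{(x_n,\th_n)}-A^K_{(x,\th)})w}_{\LER}\leq \nrm{K_{(x_n,\th_n)}-K_{(x,\th)}}_{\LE(\R^2)}\,\nrm{w}_{\LER}$, which tends to zero by the assumed continuity of $K$ into $\LE(\R^2)$.

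The main thing to get right is the complementary roles of the two hypotheses in the continuity argument: the Schur condition is what delivers a \emph{uniform} operator-norm bound, needed to absorb the perturbation $w_n-w$, whereas the $\LE(\R^2)$-continuity of $K$ only controls the kernel difference in the Hilbert--Schmidt sense, which is what absorbs the perturbation in $(x,\th)$. Neither estimate alone suffices, so the crux is recognizing this pairing rather than any single calculation. The remaining points---the a.e.\ finiteness of the inner integral justifying Fubini in the Schur step, and the standard bound of the operator norm by the Hilbert--Schmidt norm---are routine and I would not dwell on them.
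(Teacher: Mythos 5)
Your proof is correct and follows essentially the same route as the paper's: the paper obtains the uniform bound by citing Young's inequality for integral operators (which is exactly your Schur test, which you prove by hand), and its continuity argument uses the same two-term decomposition, with Cauchy--Schwarz giving the Hilbert--Schmidt control of the kernel-difference term and the uniform operator bound absorbing the function-difference term. The only differences are cosmetic: you cite the Schur estimate's proof explicitly rather than a reference, and your split attaches the kernel difference to the fixed $w$ while the paper attaches it to the moving function, which is immaterial since both bounds are uniform.
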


\begin{proof}
    The linearity follows from the integral representation. From Young's inequality for integral operators~\cite[Theorem~0.3.1]{Sogge2017}
    \[
    \int_{\R} \left|\int_{\R} K_{(x,\th)}(t,t') w(t')\,dt'\right|^2\, dt \leq C_K(x,\th)^2\nrm{w}_{\LE}^2
    \]
    whence \((x,\th)\mapsto A_{(x,\th)}w\) is bounded. By the Cauchy-Schwarz inequality
    \begin{align*}
        \nrm{A_{(x,\th)}w' - A_{(y,\phi)} w}_{\LE}^2 &
        \leq 2\int_{\R} \left|\int_{\R} (K_{(x,\th)}(t,t') -  K_{(y,\phi)}(t,t'))w'(t')\,dt'\right|^2\, dt\\
        &\quad +2\int_{\R} \left|\int_{\R} K_{(y,\phi)}(t,t')(w'(t') - w(t'))\,dt'\right|^2\, dt\\
        &\leq 2\nrm{w}_{\LE}^2 \iint_{\R^2} | K_{(x,\th)}(t,t') -  K_{(y,\phi)}(t,t')|^2 dt dt'
        + 2C_K(x,\th) \nrm{w' - w}_{\LE}^2,
    \end{align*}
    where continuity follows from hypothesis. 
\end{proof}

\begin{proposition}
    \label{prop:filtersToDissimilarity}
    Let \(\vuo\in W\) and let \(\set{B_{k}}_{k=1}^{\Nb}\) be as in~\eqref{eq:hearingBandGeneral} where \(\set{K_{B_k}}_{k=1}^{\Nb}\) satisfy the hypotheses of Lemma~\ref{lem:kernels}. Let \(\Psi:\Rp^{\Nb}\to \R\) be convex and monotone increasing on each one of its arguments. For \(s\in\set{\ell,r}\) 
    \begin{equation*}
        D_{(\vu,\vuo)}^{s}(x,\th) = \Psi(B_1(\uside_{(x,\theta)} - \uside_{0,(x,\theta)}),
        \ldots, B_{\Nb}(\uside_{(x,\theta)} - \uside_{0,(x,\theta)}))
    \end{equation*}
    is a dissimilarity satisfying Assumption~\ref{assmpt:general}. In particular, so is
    \[
        D_{(\vu,\vuo)}(x,\th) = \max\set{D_{(\vu,\vuo)}^\ell(x,\th), D_{(\vu,\vuo)}^r(x,\th)}.
    \]
\end{proposition}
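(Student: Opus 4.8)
The plan is to verify that, for the fixed target $\vuo$, the assignment $\vu\mapsto D_{(\vu,\vuo)}^{s}$ meets the three requirements that Assumption~\ref{assmpt:general}(iv) places on a dissimilarity map: it should take values in $C^0(\Omega\times\Th)$, be continuous from $W$ into $C^0(\Omega\times\Th)$, and be convex in $\vu$; the remaining parts of Assumption~\ref{assmpt:general} concern other objects and are untouched. First I would rewrite each band as an operator norm. Writing $w^s_{(x,\th)} := \uside_{(x,\th)} - \uside_{0,(x,\th)}$ and letting $A^{B_k}_{(x,\th)}$ denote the integral operator $A^{K_{B_k}}_{(x,\th)}$ of Lemma~\ref{lem:kernels}, definition~\eqref{eq:hearingBandGeneral} reads $B_k(w^s_{(x,\th)}) = \|A^{B_k}_{(x,\th)} w^s_{(x,\th)}\|_{\LE}^2$, so that $D_{(\vu,\vuo)}^{s}(x,\th) = \Psi(\beta_1,\ldots,\beta_{\Nb})$ with $\beta_k := \|A^{B_k}_{(x,\th)} w^s_{(x,\th)}\|_{\LE}^2 \ge 0$. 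Convexity in $\vu$, at fixed $(x,\th)$, then follows in two steps: $w^s_{(x,\th)}$ is affine in $\vu$ and $A^{B_k}_{(x,\th)}$ is linear, so $A^{B_k}_{(x,\th)}w^s_{(x,\th)}$ is affine in $\vu$, and composing the convex map $\|\cdot\|_{\LE}^2$ with this affine map makes each $\beta_k$ convex in $\vu$ (and keeps it in $\Rp$); since $\Psi$ is convex and non-decreasing in each argument, the composition $\Psi(\beta_1,\ldots,\beta_{\Nb})$ is convex by the standard monotone-convex composition rule.

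For continuity I would invoke both conclusions of Lemma~\ref{lem:kernels}: that $(x,\th,w)\mapsto A^{B_k}_{(x,\th)}w$ is continuous into $\LE$, and that $C_k := \sup_{(x,\th)}\|A^{B_k}_{(x,\th)}\|$ is finite. As $\vu,\vuo\in W$ are continuous, $(x,\th)\mapsto w^s_{(x,\th)}$ is continuous into $\LE$; composing with the operator and with $\|\cdot\|_{\LE}^2$ shows $(x,\th)\mapsto\beta_k$ is continuous, and since $\Psi$ is continuous on $\Rp^{\Nb}$ (being finite and convex there) we obtain $D_{(\vu,\vuo)}^{s}\in C^0(\Omega\times\Th)$. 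For continuity of the map into $C^0$, I would take $\vu_n\to\vu$ in $W$, so that $\sup_{(x,\th)}\|w^s_{n,(x,\th)}-w^s_{(x,\th)}\|_{\LE}\to 0$, and combine the elementary bound $\big|\,\|\xi\|_{\LE}^2-\|\zeta\|_{\LE}^2\,\big|\le\|\xi-\zeta\|_{\LE}(\|\xi\|_{\LE}+\|\zeta\|_{\LE})$ with the uniform operator bound $C_k$ and the boundedness of $\vu,\vuo,\set{\vu_n}$ in $W$ to conclude that each $\beta_k$ converges uniformly on $\Omega\times\Th$. The arguments then range in a fixed compact box in $\Rp^{\Nb}$ on which $\Psi$ is uniformly continuous, so uniform convergence of the $\beta_k$ passes through $\Psi$ and yields $D_{(\vu_n,\vuo)}^{s}\to D_{(\vu,\vuo)}^{s}$ in the supremum norm of $C^0(\Omega\times\Th)$; the same estimate with $\vuo$ also varying gives joint continuity of $D$ on $W\times W$, since $w^s$ is $1$-Lipschitz in each of $\vu,\vuo$.

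The binaural map $\Dmap_{(\vu,\vuo)}=\max\set{D^\ell_{(\vu,\vuo)},D^r_{(\vu,\vuo)}}$ then inherits all three properties: a pointwise maximum of two continuous functions is continuous, a maximum of two convex functions is convex, and because $\max$ acts as a $1$-Lipschitz operation on $C^0(\Omega\times\Th)$ the assignment $\vu\mapsto\max\set{D^\ell_{(\vu,\vuo)},D^r_{(\vu,\vuo)}}$ remains continuous into $C^0(\Omega\times\Th)$. I expect the main obstacle to be the continuity of $\vu\mapsto D_{(\vu,\vuo)}^{s}$ as a $C^0$-valued map, since the estimates must be uniform over all $(x,\th)$ simultaneously; this is exactly where the uniform operator bound of Lemma~\ref{lem:kernels} and the uniform continuity of $\Psi$ on compacta are indispensable. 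A minor point to settle is the continuity of $\Psi$ up to the boundary of $\Rp^{\Nb}$ (needed because the boxes touch $\{\beta_k=0\}$), which I would obtain from its finiteness and convexity, reinforced by monotonicity.
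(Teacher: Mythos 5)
Your proof is correct and takes essentially the same route as the paper's: both rewrite each band as \(B_k(\uside_{(x,\th)}-\uside_{0,(x,\th)}) = \nrm{A^{K_{B_k}}_{(x,\th)}(\uside_{(x,\th)}-\uside_{0,(x,\th)})}_{\LE}^2\) and invoke Lemma~\ref{lem:kernels} for continuity, then obtain convexity from linearity of the operator, convexity of the (squared) norm, and the monotone--convex composition rule for \(\Psi\). You additionally supply details the paper leaves implicit --- the uniform (\(C^0\)-valued) continuity in \(\vu\), the continuity of \(\Psi\) at the boundary of \(\Rp^{\Nb}\) (where your instinct is right: convexity and finiteness alone would not suffice there, monotonicity is genuinely needed), and the verification for the binaural \(\max\) --- all of which are sound.
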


\begin{proof}
    For simplicity, we prove the result for \(\vuo = 0\). It can be verified that \(B_k \uside_{(x,\theta)} = \nrm{A^{K_{B_k}}_{(x,\theta)}\uside_{(x,\theta)}}_{\LE}^2\). By Lemma~\ref{lem:kernels}, a function of the form
    \begin{equation*}
        D^s_{(\vu,0)}(x,\th)
        =\Psi(\nrm{A^{K_{B_1}}_{(x,\th)} \uside_{(x,\th)}}_{\LE}, \ldots, \nrm{A^{K_{B_{\Nb}}}_{(x,\th)} \uside_{(x,\th)}}_{\LE})
    \end{equation*}
    is continuous on \(\Omega\times\Th\). Since \(A^{K_{B_k}}_{(x,\theta)}\) is linear on \(\uside_{(x,\th)}\) and the norm is convex, the convexity of \(D^s\) follows from the assumptions on \(\Psi\).
\end{proof}

\subsection{Discussion}

Although Proposition~\ref{prop:layerCake} implies \(\Ae\) converges pointwise to \(\mu\circ \S\), this is not sufficient to ensure a global minimizer for 
\((\Pe)\) converges to a global minimizer for \((\Po)\). A future line of work consists on leveraging \(\Gamma\)-convergence to answer this question. Related to it is the choice of \(\vphi\). Although we have not studied extensively the effect of this choice, we believe it affects the quality of the approximation to the area of the sweet spot. This is another interesting future line of work.

%\pil{O}
In addition, we would like to point out that the advantage of solving a sequence of approximate problems for decreasing values of \(\eps\) over solving a single approximate problem for \(\eps\ll 1\) lies in the fact that each approximate problem is non-convex. Then, to find an approximate solution to it, the DCCC algorithm needs an initial point. This, coupled with the observation that for a sufficiently large \(\eps\) the problem becomes equivalent to a convex problem, solving a sequence of approximate problems is a systematic way to find an initial point for the approximate problems defined for \(\eps\ll 1\).

Finally, the method allows for several choices of \(\mu\). Therefore, the results presented apply both for the continuous case, e.g., when \(\mu\) is the Lebesgue measure, and the discrete case, e.g., when \(\mu\) is discrete. 

\subsection{Proof of Proposition~\ref{prop:wellPosedA}}
\label{proof:wellPosedA}

\noindent{\em Proof of~(i).} It is apparent that \(W_S\) is convex. Since \(\Omega\times \Th\) is a separable metric space, if \(W_S\) is bounded and equicontinuous, by Arzel\`a-Ascoli's theorem~\cite[Theorem~11.28]{Rudin1986} it follows it is compact. By Assumption~\ref{assmpt:general},
\begin{align*}
    \nrm{u^{s}_{(x,\th)}}_{\LE}^2 &\leq \Ns\sum_{k} \int_{I_S} |\wh{c}_k(f)|^2 |H^s_k(f,x,\th)|^2\, df \\
    &\leq \Ns\cmax^2\sum_{k}\sup_{(f,x,\th)\in I_S\times \Omega\times\Th}|H_k^s(f,x,\th)|^2 
\end{align*}
and \((x,\th)\mapsto u_{(x,\th)}\) is uniformly bounded. Thus, \(W_S\) is bounded. Let \(\eps > 0\). Since \(\wh{H}^s_k\) is continuous on the compact set \(I_S\times \Omega\times \Th\), there is \(\delta > 0\) such that for any \(|x - y|, |\th-\phi| < \delta\) and \(f\in I_S\) we have that \(|\wh{H}_k^s(f, x, \th) - \wh{H}_k^s(f,y, \phi)| < \eps / 2\Ns^2 \cmax^2\). Then, 
\begin{equation*}
    \nrm{u^s_{(x,\th)} - u^s_{(y,\phi)}}_{\LE}^2
    \leq \Ns \sum_{k}\int_{I_S} |\wh{c}_k(f)|^2 |H^s_k(f,x,\th) - H^s_k(f,y,\phi)|^2\, df <\frac{1}{2}\eps
\end{equation*}
whence \((x,\th)\mapsto \vu_{(x,\th)}\) is continuous. Since\(\delta\) is independent of \(\vu\), we conclude \(W_S\) is equicontinuous.

\noindent{\em Proof of~(ii).} Let \(\vu,\vuo \in W\). The function \(D_{(\vu,\vuo)}\) is continuous on the compact set \(\Omega\times\Th\) and thus bounded. Hence, \(\TD\vu\) is bounded and we can associate to it its equivalence class in \(\LInfO\). Let \(\eps > 0\). There exists \(\delta >0\) such that \(\nrm{\vleft -\uleft}_{\LE},\nrm{\vright- \uright}_{\LE} < \delta\) implies \(|D_{(\vv,\vuo)}(x,\th) - D_{(\vu,\vuo)}(x,\th)| < \eps/2\). This implies \(|\TD\vv(x) - \TD\vu(x)| < \eps\) whence \(\TD\) is continuous. By Assumption~\ref{assmpt:general}, the map \(D\) is convex on its first argument. The conclusion follows from
\begin{equation*}
    \sup_{\th\in\Th}\, D_{(\lambda\vu_1+(1-\lambda)\vu_2,\vuo)}(x,\th)
    \leq \lambda \sup_{\th\in\Th}\,D_{(\vu_1,\vuo)}(x,\th) + (1-\lambda)\sup_{\th\in\Th}\, D_{(\vu_2,\vuo)}(x,\th).
\end{equation*}

\noindent{\em Proof of~(iii).} For \(\TD u\in \LInfO\) we can choose representatives \(\TD u', \TD u''\) of \(\TD u\). Hence, the set \(\set{x\in\Omega:\, \TD u'(x) = \TD u''(x)}\) has \(\mu\)-measure zero, from where the conclusion follows.

\noindent{\em Proof of~(iv).} From the same arguments used for (ii) the map \(\TL:W\to \LInfO\) is continuous. As \(\set{v\in\LInfO: \mbox{\(v(x) > 0\) \(\mu\)-a.e.}}\) is open, then \(\P^c = \set{\vu\in W: \mbox{\(\TL \vu(x) > 0\) \(\mu\)-a.e.}}\) is open, whence \(\P\) is closed.

\subsection{Proof of Proposition~\ref{prop:convexMajorizer}}
\label{proof:convexMajorizer}

\def\vvopt{v'{}^{\star}}
\def\vopt{v^{\star}}

Let \(v'_0\) be a representative and define
\[
    f(x,\alpha) = \Phip(\alpha) - \vphip(v'_0(x)) \alpha.
\]
This is a Carath\'eodory map~\cite[Definition~8.2.7]{Aubin1990}. By Theorem~8.2.11 in~\cite{Aubin1990} there exists \(\vvopt\) measurable such that
\[
   \Phip(\vvopt(x)) - \vphip(v'_0(x)) \vvopt(x) = \inf\set{f(x, \alpha): \alpha\in\R}.
\]
By Assumption~\ref{assmpt:phiDCC},  \(\vphip(v'_0(x)) \geq 0\) and \(0 \leq \vvopt(x) \leq v_0'(x)\) whence \(\vvopt\) is bounded \(\mu\)-a.e. Let \(\vopt\in\LInfO\) be its equivalence class. Let \(\set{(\vu_k,v_k)}_{n\in\N}\) be a minimizing sequence. As \(W_S\cap\P\) is compact, without loss of generality we may assume \(\set{\vu_k}_{k\in\N}\) has a limit \(\vu_\infty\). Let \(\TD\vu'_k\) be a representative and let \(w_k' := \max(\vvopt, \TD u'_k)\). Then \(w'_k\) is \(\mu\)-a.e. bounded. Let \(w_k\in\LInfO\) denote its equivalence class. By construction,
\[
    f(x,v'_k(x)) \geq f(x,\TD \vu_k'(x)) \geq f(x,\TD\vvopt(x))
\]
for any representative \(v'_k\).  Therefore
\begin{equation*}
    \lim\inf_{k\to\infty} (\Ap(v_k) - g_{v_0}(v_k))
    \geq \lim\inf_{k\to\infty} (\Ap(w_k) - g_{v_0}(w_k))
\end{equation*}
whence \(\set{(\vu_k,w_k)}_{k\in\N}\) is also minimizing. By continuity of \(\TD\) we conclude \(w_k \to \max\set{\vopt, \TD\vu_\infty}\) whence \((\tP_{\eps,v_0})\) has a solution.

\end{document}